\documentclass[11pt]{article}


\usepackage{geometry,setspace}
\geometry{left=1in,right=1in,top=1in,bottom=1in}

\small\normalsize
\usepackage{pdfsync}
\synctex=1

\usepackage{hyperref}
\hypersetup{
    colorlinks=true,
    citecolor=teal,
    linkcolor=teal
}
\usepackage{caption}
\captionsetup{font=footnotesize}

\usepackage{enumerate}
\usepackage{graphicx}
\usepackage{rotating,multirow}
\usepackage[table]{xcolor}
 \usepackage{amsmath,amsfonts,amssymb,mathrsfs,amsthm}
 \usepackage{mathtools}
\usepackage{bbm}
\usepackage{bm}
\usepackage[authoryear, round]{natbib}
\usepackage{booktabs}
\usepackage{pdflscape}
\usepackage{todonotes}
\usepackage{authblk}
\usepackage{tikz-cd}
\usepackage{framed}
\usepackage{siunitx}
\usepackage{enumitem}
\usepackage{theoremref}

\usepackage{xfrac}

\usepackage{caption}
\captionsetup{justification=raggedright,labelsep=colon,singlelinecheck=off}

\usepackage[T1]{fontenc}
\input{glyphtounicode}
\pdfgentounicode=1

\newtheorem{proposition}{Proposition}
\newtheorem{theorem}{Theorem}
\newtheorem{lemma}{Lemma}
\newtheorem{corollary}{Corollary}
\theoremstyle{definition}
\newtheorem{definition}{Definition}

\newtheorem{remarks}{Remarks}
\newtheorem{example}{Example}







\newcommand{\indicator}[1]{\ensuremath{I_{\{#1\}}}}



\usepackage{accents}
\newcommand{\vtrans}{\ensuremath{\mathcal{V}}}

\newcommand{\downprob}{\ensuremath{\delta}}





\renewcommand{\P}{\mathbb{P}}
\newcommand{\E}{\mathbb{E}}
\newcommand{\R}{\mathbb{R}}

\newcommand{\Z}{\mathbb{Z}}
\newcommand{\N}{\mathbb{N}}

\newcommand{\eqdis}{\stackrel{\text{\tiny d}}{=}}

\newcommand{\cov}{\operatorname{cov}}

\renewcommand{\leq}{\leqslant}
\renewcommand{\geq}{\geqslant}

\newcommand{\rd}{\mathrm{d}}

\newcommand{\volfunc}{\sigma}

\newcommand{\Rforward}{\ensuremath{R}}
\newcommand{\Rbackward}{\ensuremath{B}}
\newcommand{\newtext}[1]{\textcolor{black}{#1}}


\begin{document}
\title{GARCH copulas, v-transforms \\ and D-vines for stochastic volatility}

  \author[1]{Alexandra Dias}
  \author[1]{Jialing Han}
 \author[1]{Alexander J.\ McNeil}
 \affil[1]{The School for Business and Society, University of York}
\date{\today}
\maketitle
\begin{abstract}
  The bivariate copulas that describe the dependencies and partial
dependencies of lagged variables in strictly stationary, first-order GARCH-type processes
are investigated.  It is shown that the 
 copulas of symmetric GARCH processes are jointly symmetric but
 non-exchangeable, while
 the copulas of processes with symmetric innovation distributions and
 asymmetric leverage effects have weaker h-symmetry; copulas
 with asymmetric innovation distributions have neither
 form of symmetry. \newtext{Since the true bivariate copulas are typically inaccessible,
 due to the unknown functional forms of the marginal distributions of
 GARCH processes, a new class of approximating copulas is proposed. These rely on copula
 density constructions
 that combine standard bivariate copula densities for positive dependence
 with two uniformity-preserving transformations known
 as v-transforms.
 The construction is shown to be particularly effective
       when applied to the density of the copula of the absolute values of a
       spherical t distribution.
 Tractable simplified D-vines incorporating
 the new pair copulas are developed for applications to time series
 showing stochastic volatility. The resulting models are shown
 to provide better fits to simulated data from GARCH processes,
 and to a dataset of financial exchange-rate returns,
 than have previously been obtained using vine copulas.}
\end{abstract}
\noindent \textit{Keywords}: Time series; copulas; vine copulas; 
GARCH processes; v-transforms.

\section{Introduction}\label{sec:introduction}

\newtext{This paper addresses the construction of
copula-based  models for time series displaying stochastic volatility,
such as financial asset-price return series. The philosophy of the
copula-modelling approach is that, by separating the issues of
modelling marginal behaviour and dependence structure, we may obtain
composite models that provide superior fit and predictive performance
in practice. A number of authors~\citep{bib:loaiza-maya-et-al-18,bib:bladt-mcneil-21,bib:zhao-shi-zhang-22} have suggested that
simplified D-vines, a class of models based on pair or bivariate
copulas, offer a promising framework for modelling volatile return series.}

\newtext{Finding suitable pair copulas to describe volatile return series is not
straightforward, since most existing bivariate copulas are for
modelling monotonic dependence relationships. The serial dependencies in
return series tend to be monotonic in absolute values, but
non-monotonic in the raw returns. In this paper we attempt to learn
from the success of GARCH-type models in practical applications by
studying their implied pair copulas.
Primarily, this is a paper about the
dependence structure of GARCH processes rather than empirical data, although we provide an example to
show that the insights we gain can lead to highly effective D-vine
models for real data.}

The ARCH~\citep{bib:engle-82} and GARCH~\citep{bib:bollerslev-86} prototypes
have inspired a vast class of related GARCH-type models, including the GJR-GARCH model,
which incorporates leverage effects to reflect
different contributions to volatility by returns of differing
signs~\citep{bib:glosten-jagannathan-runkle-93}, power-GARCH
models~\citep{bib:ding-engle-granger-93} and exponential GARCH
models~\citep{bib:nelson-91}, to name but a few of the more
important extensions.
Despite their success in modelling financial data, GARCH-type models are something of a black box. The
stationary distribution is not generally known in closed form,
although many models give rise to distributions with
Pareto tails~\citep{bib:mikosch-starica-00}. In particular, the serial dependence structure is not well
understood. If $(X_t)_{t\in\Z}$ follows a stationary GARCH-type model
then the distribution of $(X_1,\ldots,X_n)^\top$ for $n \geq 2$ must
have a copula, but this copula has no simple, accessible
closed form.

Vine copulas also have a large literature,
including~\citet{bib:joe-96}, \citet{bib:bedford-cooke-02},
 \citet{bib:kurowicka-cooke-06},
 \citet{bib:aas-czado-frigessi-bakken-09} and \citet{bib:smith-min-almeida-czado-10}. The
 most suitable vine structure for a time series is the D-vine, which is able
to describe strict stationarity of a random vector under 
additional translation-invariance restrictions on the vine
structure~\citep{bib:nagler-kruger-min-20}. Markov copula models~\citep{bib:darsow-nguyen-olsen-92,bib:chen-fan-06,bib:beare-10,bib:domma-giordano-perri-09} are
examples of stationary first-order D-vines and a number of
authors have written on their higher-order and multivariate
extensions~\citep{bib:ibragimov-09,bib:beare-seo-15,bib:brechmann-christian-czado-15,bib:loaiza-maya-et-al-18,bib:nagler-kruger-min-20,bib:zhao-shi-zhang-22}.

\newtext{Our goals in the present paper are threefold. First, we want
  to analyse the structure of the implied pair copulas in
first-order GARCH-type models and
to investigate how different model features affect the
symmetries of these pair copulas. Second, since the true GARCH pair copulas are
inaccessible for practical purposes, we want to find tractable copulas that can approximate
them. Finally, with the help of the analyses in the paper, we aim to propose a class of D-vines that behave like
GARCH-type models and give a first glimpse of its suitability
for real data applications. }

\newtext{Our approach is similar to the one followed in the insightful
  paper of~\cite{bib:loaiza-maya-et-al-18}, in which the authors use simulated
  data to reveal the cross-shaped dependence structure of the copula
  that describes the serial dependence structure of a classical symmetric
  ARCH(1) process\footnote{\cite{bib:loaiza-maya-et-al-18}
    and~\cite{bib:smith-maneesoonthorn-18} also analyse the
    first-order copulas of discrete-time stochastic volatility models, an
    alternative model
    class for financial asset-price returns we do not consider; they find similar
    cross-shaped behaviour.}. They use this analysis to propose D-vines of low
  Markov order (1 or 5) in which the pair copulas are mixtures of positive dependence and negative
dependence copulas; they show that these models give better forecasts
of the
conditional quantiles (value-at-risk) of exchange-rate returns than classical symmetric models of
first-order GARCH type.
Another paper related to our own is by
\cite{bib:zhao-shi-zhang-22}, who argue that t copulas in D-vines
can approximate the dependence structure of GARCH(1,1) and GJR-GARCH(1,1)
models. Our paper also extends ideas in~\cite{bib:bladt-mcneil-21} by
proposing a class of models that partly overlaps with the vt-D-vines
in that paper but offers additional flexibility.}


\newtext{To explain our contributions, we note that the joint density of the distribution of $(X_1,\ldots,X_n)^\top$ in a strictly
stationary process $(X_t)_{t\in \Z}$, such as a stationary model of GARCH type, can be
decomposed as a D-vine of the form}
                                 \begin{multline}\label{eq:27}
                     f_{X_1,\ldots,X_n}(x_1,\ldots,x_n) = 
        \prod_{i=1}^n
                  f_{X}(x_i)    \\ \prod_{k=1}^{n-1}\prod_{j=k+1}^{n}
                  c_{k}(F_{X_{1}|\bm{X}_{[2:k]}}(x_{j-k}\mid
                  \bm{x}_{[j-k+1 : j-1]}
                  ),  F_{X_{k+1}|\bm{X}_{[2:k]}}(x_{j} \mid   \bm{x}_{[j-k+1 : j-1]})
                  \mid    \bm{x}_{[j-k+1 : j-1]})
\end{multline}
where we use the compact notation $\bm{x}_{[l:m]} :=
(x_l,\ldots,x_m)^\top$, $f_X$ denotes the marginal density, $F_{X_{i}|\bm{X}_{[2:k]}}$
is the conditional distribution function
of $X_i$ given $X_2,\ldots,X_k$
and $c_k(u,v \mid \bm{x}_{[j-k+1:j-1]})$ is the density of the copula of the
conditional distribution of $(X_1,X_{k+1})^\top$ given 
$X_2=x_{j-k+1},\ldots,X_k=x_{j-1}$. In the case $k=1$ the
conditional distribution is simply the marginal
distribution $F_X$ and $c_1$ is simply the density of the copula of
the unconditional
distribution of $(X_1,X_2)^\top$. When the process is first-order
Markov, such as a model of ARCH(1) type, all of the terms involving $c_k(u,v\mid \cdot)$ for $k >1$
disappear.

Although the densities of GARCH-type models may be written as D-vines,
they may not in general be written as so-called simplified D-vines,
that is models in which the conditional copula densities $c_k$ do not
depend explicitly on the conditioning values $\bm{x}_{[j-k+1: j-1]}$ in
equation~(\ref{eq:27}) for $k\geq
2$. When D-vines are used in practical modelling applications they are
always
simplified D-vines. However, as pointed out by a number of
authors~\citep{bib:haff-aas-frigessi-10,bib:stoeber-joe-czado-13,bib:spanhel-kurz-19,bib:mroz-fuchs-trutschnig-21},
these may not be able to capture all the features present in a given joint density, such as that of a GARCH-type model.

 \newtext{Our first contributions relate to the analysis of the GARCH pair
   copula densities $c_k$ in~(\ref{eq:27}) for first-order GARCH-type models and are described in
Section~\ref{sec:arch-garch-copulas}.
We provide theory to show that (i)
  symmetric models give rise to jointly symmetric copulas, (ii) models
  with symmetric innovation distributions and leverage
  have copulas with weaker symmetry, which we call h-symmetry, and
  (iii) models with asymmetric innovations lose all symmetry. We confirm the theoretical insights by using numerical approximations to provide accurate
  contour plots of the copula densities which reveal hitherto overlooked
  features, such as the non-exchangeability of copulas even in the
  most symmetric models. The theory complements and
  extends~\cite{bib:loaiza-maya-et-al-18}, who consider the 
  copula density $c_1$ in the ARCH(1) model
  and use a simulation approach and adaptive kernel density estimation
  to elicit its cross-shaped
  form.}

\newtext{The second body of material, found in
  Section~\ref{sec:copula-dens-constr} relates to the search for copulas
  that can approximate the GARCH pair copulas.  We show how uniformity-preserving transformations known as
     v-transforms~\citep{bib:mcneil-20} play an implicit role in the
     copulas of classic symmetric models and in models with certain
     forms of asymmetry. We investigate methods of constructing
     copulas to approximate GARCH pair copulas using a
     technique we call stochastic inversion of v-transforms and we identify the copulas
     that provide the best fit to simulated data
     from GARCH-type models. We find that inverse-v-transformed
     versions of positive dependence copulas with upper tail
     dependence, such as survival Clayton, Joe, or a new copula we
     refer to as absolute spherical t, typically fit better than
     mixture copulas.}

   \newtext{Finally, in Section~\ref{sec:simplified-D-vines}, we propose a class of
     simplified D-vines that behave like GARCH models and that could
     be used to model real data. To support the use of
     simplified D-vines, we provide some evidence in
     Section~\ref{sec:high-order-cond} that the
     dependence of higher order conditional copulas on conditioning
     variables may be weak. Our simplified D-vines combine the inverse-v-transformed pair copulas
     developed in Section~\ref{sec:copula-dens-constr} with a parsimonious method
     of parameterization via the partial autocorrelation
     function of ARMA processes as proposed by~\cite{bib:bladt-dias-han-mcneil-25}. The
     resulting models generalize the vt-D-vine models of~\cite{bib:bladt-mcneil-21} and provide a better fit to simulated GARCH data than the
     simplified D-vines of order 1 and 2 based on t copulas used
     in~\cite{bib:zhao-shi-zhang-22}. A first application 
     to a real-world dataset of daily exchange-rate returns
     provided and analysed by~\cite{bib:loaiza-maya-et-al-18} suggests
     that the much high Markov order of our models, and the copula
     constructions they contain, can lead to better-fitting models than have
     previously been obtained using D-vines.}

\section{An analysis of GARCH pair
  copulas}\label{sec:arch-garch-copulas}

\subsection{ARCH and GARCH processes}

We define a broad class of 
first-order GARCH-type models and establish nomenclature for some
special cases.

\begin{definition}[First-order GARCH processes]\label{def:arch1-copula}
  \begin{enumerate}
  \item A \textit{process of GARCH(1,1) type} is a strictly stationary process $(X_t)_{t\in\Z}$ satisfying
equations of the form
\begin{equation}
  \label{eq:1b}
  X_t = \sigma_t\; \epsilon_t
\end{equation}
 where $(\epsilon_t)_{t\in\Z}$ is a sequence of iid innovations, drawn
 from a continuous distribution with
 mean zero and variance one, and $(\sigma_t)_{t\in\Z}$ is a sequence of
 variables satisfying $\sigma_t = \volfunc(X_{t-1},
   \sigma_{t-1})$ where
$\volfunc: \R \times \R^+ \to \R^+$ is a strictly positive-valued
parametric volatility function.
\item A \textit{process of ARCH(1) type} is a process of GARCH(1,1) type where
 $\volfunc(x,s)$ depends only on
  $x$.
 \item 
The \textit{classical GARCH(1,1) process} is a process
of GARCH(1,1) 
type in which $\volfunc(x,s)$ satisfies
\begin{equation}
  \label{eq:1}
  \volfunc(x, s) = \sqrt{\alpha_0 + \alpha_1 x^2 + \beta_1
  s^2}
\end{equation}
where $\alpha_0>0$, $\alpha_1 \geq 0$, $\beta_1 \geq 0$ and 
$\E( \ln ( \alpha_1 \epsilon_1^2 +\beta_1)) < 0$ for an innovation
$\epsilon_1$.
\item The \textit{classical ARCH(1) process} is the classical
  GARCH(1,1) process with $\beta_1 = 0$.
\item A process of GARCH(1,1) type is said to have \textit{symmetric
    distribution}, if the distribution of $\epsilon_1$ has a density
  that is symmetric around zero, and to have \textit{symmetric
    volatility} if $\volfunc(x,s)$ is an even function of
  $x$. It is said to be \textit{symmetric} if
    it has symmetric distribution and symmetric volatility.
 \end{enumerate}
\end{definition}

The mathematical properties of the classical GARCH(1,1) process
have been widely studied and the conditions
on the parameters in part 3 of Definition~\ref{def:arch1-copula} are known to guarantee a strictly
 stationary solution to the equations~(\ref{eq:1b})~\citep{bib:nelson-90a,bib:bougerol-picard-92}. For a finite-variance solution it is
 usual to assume the stronger condition that $\alpha_1 + \beta_1 < 1$.
 \newtext{It is known that the
 marginal or stationary distribution $F_X$ has regularly varying
 (power-law) tails
 with a tail index $\zeta$ that
 may be calculated as the root of the equation
 $\E\left(  \left( \alpha_1\epsilon_1^2 +
   \beta_1 \right) ^{\zeta/2}  \right)=
 1$~\citep{bib:mikosch-starica-00}. However, the exact functional forms of $F_X$ and its density $f_X$
 are not available.}

 \subsection{Symmetry concepts for bivariate distributions}
 \newtext{In our analysis of the pair copulas of GARCH processes, we first
   consider models with symmetric innovation
   distributions. To describe the resulting copulas, it helps to
   recall the definitions of radial and joint symmetry
   in~\cite{bib:nelsen-99} and to identify two further concepts we call
   h- and v-symmetry (symmetry about horizontal and vertical axes).}
 \begin{definition}\label{def:symmetry}
   Let $(Y,Z)^\top$ be a random vector taking values in a set $A
   \subseteq \R^2$.
   \begin{enumerate}[label={(\roman*)},itemindent=1em]
   \item If $(Y,Z-b)^\top \eqdis (Y,b-Z)^\top$  for some $b\in\R$ we
     say that $(Y,Z)^\top$ 
     is h-symmetric about $b$.
            \item If $(Y-a,Z)^\top \eqdis (a-Y,Z)^\top$  for some
              $a\in\R$ we say that $(Y,Z)^\top$ 
         is v-symmetric about $a$.
         \item If $(Y-a,Z-b)^\top \eqdis (a-Y,b-Z)^\top$  for
           $a,b\in\R$ we say that $(Y,Z)^\top$
           is radially symmetric
           about $(a,b)$.
 \item If (i), (ii) and (iii) all hold, we say that $(Y,Z)^\top$ is jointly symmetric about $(a,b)$.
   \end{enumerate}
 \end{definition}
 \begin{remarks}
   \begin{enumerate}
     \item We also describe the joint distribution function (df)
       $F_{YZ}$ of $(Y,Z)^\top$ as having these symmetries.
      \item If (i) holds, the marginal distribution of $Z$ is symmetric
     about $b$; if (ii) holds, the marginal distribution of $Y$ is symmetric
     about $a$. The reverse implications are not true.
       \item 
   H-symmetry and v-symmetry together imply
 radial and joint symmetry. However radial
 symmetry does not necessarily imply h-symmetry and v-symmetry,
 although it does imply symmetry of both marginal distributions.
 \item \citet{bib:nelsen-99} notes that jointly symmetric distributions have
zero correlation (when finite second moments permit the measure to be
defined). In Section~\ref{sec:corr-rank-corr} of the Online
Supplement, we show that h-symmetry (or v-symmetry) on its own is sufficient for an
absence of both correlation and rank correlation.
Despite the absence of correlation, distributions with these forms
of symmetry can have quite strong
forms of dependence.
\end{enumerate}
\end{remarks}

\newtext{
\noindent If the distribution functions $F_Y$ and $F_Z$ of
$Y$ and $Z$ are continuous, the random vector $(U,V)^\top=
(F_Y(Y), F_Z(Z))^\top$ has distribution function $C$, known as the
copula of $(Y,Z)^\top$, and the symmetries have the following
implications for $C$.}
\begin{lemma}\label{lemma:symm-copulas}
 Let $(Y,Z)^\top$ be a random vector with continuous marginal
 distributions and copula $C$.
  \begin{enumerate}[label={(\roman*)},itemindent=1em]
  \item If $(Y,Z)^\top$ is h-symmetric then $ C(u,v) = u - C(u,1-v)$, for $ (u,v) \in
  [0,1]^2$.
  \item  If $(Y,Z)^\top$ is v-symmetric then $C(u,v) = v - C(1-u,v)$, for $ (u,v) \in
  [0,1]^2$,
  \item If $(Y,Z)^\top$ is radially symmetric then $C(u,v) = u + v +
    C(1-u,1-v) -1$ , for $ (u,v) \in
  [0,1]^2$.
  \end{enumerate}
\end{lemma}
\begin{proof}
  The proof of (iii) is given by~\citet{bib:nelsen-99} (Theorem 2.7.3)
  and (i) and (ii) follow by a similar method.
\end{proof}
\noindent \newtext{This result implies that the random vector $(U,V)^\top=
(F_Y(Y), F_Z(Z))^\top$, or its df $C$, inherits the properties of h-symmetry,
v-symmetry, radial and joint symmetry from $(Y,Z)^\top$; the corresponding
lines and centre of symmetry
are at 0.5 and $(0.5,0.5)$.}
\newtext{
The next result provides the first connection between symmetries
of copula functions
and v-transforms. The function $\vtrans(u) = |2u-1|$
appearing in this result is the 
simplest example of a v-transform.}

\begin{proposition}\label{prop:copula-absolute}
  \newtext{
  Let $C$ be the distribution function of $(U,V)^\top$. Then, for
  all $(u,v) \in [0,1]^2$,
  \begin{enumerate}
  \item[(i)]  if $C$ is h-symmetric,
  $C(u,v) = (u + (-1)^{\indicator{v < 0.5}}
  C^{(h)}(u, |2v-1|))/2$ where $C^{(h)}$ is the df of $(U,|2V-1|)^\top$;
   \item[(ii)]  if $C$ is v-symmetric,
  $C(u,v) = (v + (-1)^{\indicator{u < 0.5}}
  C^{(v)}(|2u-1|, v))/2$ where $C^{(v)}$ is the df of $(|2U-1|,V)^\top$;
\item[(iii)]  if $C$ is jointly symmetric,
 $   C(u,v) =  
  (2u+2v -1 + (-1)^{\indicator{(u-0.5)(v-0.5) < 0 }} C^*(|2u-1|, |2v-1|))/4$
  where $C^*$ is the df of $(|2U-1|,|2V-1|)^\top$.
\end{enumerate}
}
\end{proposition}
\begin{proof}
  \newtext{
  We prove (i) and (iii) since (ii) follows immediately from (i). We use
  the equalities
  $C^{(h)}(u,v) = \P(U \leq u , |2V-1| \leq
  v) = \P(U \leq u , |V-0.5| \leq v/2) = C(u,(1+v)/2)
  - C(u,(1-v)/2)$ and part (i) of
  Lemma~\ref{lemma:symm-copulas} to infer that
  $  C^{(h)}(u, |2v-1|) =
  C(u,1-v) - C(u,v) = u - 2C(u,v)$ if $v < 0.5$ and
  $C^{(h)}(u, |2v-1|) =C(u,v) - C(u,1-v) = 2C(u,v) - u$ if $v \geq
  0.5$. Hence part (i) follows.
  }

   \newtext{To prove (iii) note that, if $C$ is jointly symmetric,
     then $C^{(h)}$ is v-symmetric since
$C^{(h)}(u,v) + C^{(h)}(1-u,v) = C(u,(1+v)/2) - C(u,(1-v)/2) + C(1-u,
(1+v)/2) - C(1-u, (1-v)/2) = (1+v)/2 -(1-v)/2 =v$. Hence
    $C^{(h)}(u, |2v-1|) = (|2v-1| + (-1)^{\indicator{u < 0.5}}
  C^*(|2u-1|, |2v-1|))/2$. In combination with part (i) this yields
  part (iii).}
\end{proof}
If $C$ is the copula of the h-symmetric
random
vector $(Y,Z)^\top$ in part (i) of Lemma~\ref{lemma:symm-copulas}, 
$C^{(h)}$ in part (i) of Proposition~\ref{prop:copula-absolute}
is the copula of $(Y,|Z-b|)^\top$.
This follows by observing
that if $V = F_Z(Z)$ is the
  probability-integral transform of $Z$, then the probability-integral transform
  of $|Z-b|$ is
  $F_{|Z-b|}(|Z-b|) = |2V-1|$.
Similarly, if $C$ is the copula of
the jointly symmetric
random
vector $(Y,Z)^\top$, then $C^*$  in part (iii) of Proposition~\ref{prop:copula-absolute}
is the copula of $(|Y-a|,|Z-b|)^\top$. 

\newtext{In this paper we apply the symmetry concepts to random vectors
  $(Y,Z)^\top$ with a
  joint density $f_{Y,Z}(y,z) > 0$ on $\R^2$, implying that the copula $C$ has density
  $c$ given by
  $$
  c(u,v) = f_{Y,Z}(F_Y^{-1}(u),
  F_Z^{-1}(v))/\left( f_Y(F_Y^{-1}(u)) f_Z(F_Z^{-1}(v))\right),
  $$
where $f_Y$ and $f_Z$ denote the marginal densities.}
It is easily inferred from
 Definition~\ref{def:symmetry} that
 h-symmetry of $(Y,Z)^\top$ is equivalent to $f_{Y,Z}(y,b+z) = f_{Y,Z}(y,b-z)$ for all
 $(y,z)\in \R^2$ and some $b\in\R$, while v-symmetry
 and radial symmetry are equivalent to 
 $f_{Y,Z}(a+y,z) = f_{Y,Z}(a-y,z)$ for some $a\in\R$ and
 $f_{Y,Z}(a+y,b+z) = f_{Y,Z}(a-y,b-z)$ for some $a,b\in\R$. Similarly, 
h-symmetry of the
copula $C$ is equivalent to
the identity $c(u,v) =
  c(u,1-v)$, 
v-symmetry is equivalent to $c(u,v) = c(1-u,v)$ and radial symmetry is equivalent to $c(u,v) = c(1-u,1-v)$.
\newtext{When densities exist, the following Corollary of
  Proposition~\ref{prop:copula-absolute} is immediate.}

\begin{corollary}\label{prop:copula-absolute-cor}
  \newtext{
   Let $C$ be the distribution function of $(U,V)^\top$ and
   assume $C$ has joint density $c$. For
  all $(u,v) \in [0,1]^2$,
  \begin{enumerate}
  \item[(i)]  if $C$ is h-symmetric,
  $c(u,v) =
  c^{(h)}(u, |2v-1|)$ where $c^{(h)}$ is the density of 
  $(U,|2V-1|)^\top$;
   \item[(ii)]  if $C$ is  jointly symmetric,
  $c(u,v) =
  c^*(|2u-1|, |2v-1|)$ where $c^*$ is the density of 
  $(|2U-1|,|2V-1|)^\top$.\end{enumerate}}
\end{corollary}
\noindent \newtext{Simple calculations show that part (ii) also implies the following series of
  equations which will be used in the sequel:
  \begin{equation}
    \label{eq:8}
    c^*(u,v) = c\left( \frac{1-u}{2},\frac{1-v}{2}\right) = c\left(
      \frac{1+u}{2},\frac{1-v}{2}\right) =
    c\left( \frac{1-u}{2},\frac{1+v}{2}\right) =
    c\left( \frac{1+u}{2},\frac{1+v}{2}\right).
  \end{equation}
}

 \subsection{The first-order copula of GARCH(1,1)-type processes}

\newtext{We can now apply these symmetry ideas to GARCH-type processes.}  Simple calculations show that the joint density of a pair of successive variables $(X_1,X_2)^\top$
  in a process of GARCH(1,1) type is given by
\begin{align}
  f_{X_{1} ,X_{2}}(x, y) = \int_0^\infty  f_{X_{1} ,X_{2},
    \sigma_1}(x, y, s) \rd s &= \int_0^\infty
f_{X_2 \mid X_1,\sigma_{1}}(y \mid x, s) f_{X_1 \mid \sigma_1}(x
\mid s ) f_\sigma(s)
\rd s \nonumber \\
&= \int_0^\infty \frac{1}{\volfunc(x,s)} f_\epsilon\left(
     \frac{y}{\volfunc(x,s)} \right)
 \frac{1}{s}
      f_{\epsilon}\left(\frac{x}{s}\right) f_\sigma(s) \rd s \label{eq:garch-density}
\end{align}
where $f_\sigma$ is the marginal density of the
volatility process $(\sigma_t)_{t\in\Z}$ and $f_\epsilon$ is the
density of the innovations. The 
density $c_1$ of the copula $C_1$ of $(X_1,X_2)^\top$ is thus
\begin{equation}
  \label{eq:19}
  c_1(u,v) = \frac{
    \int_0^\infty \frac{1}{s\volfunc\left(F_X^{-1}(u),s\right)} f_\epsilon\left(
     \frac{F_X^{-1}(v)}{\volfunc\left(F_X^{-1}(u),s\right)} \right)
      f_{\epsilon}\left(\frac{F_X^{-1}(u)}{s}\right) f_\sigma(s) \rd s
  }{ f_X\left(F_X^{-1}(u)\right) f_X\left(F_X^{-1}(v)\right)}
\end{equation}
where 
\begin{equation}\label{eq:32}
   f_{X}(x) = \int_{0}^\infty f_{X_1 \mid \sigma_{1}}(x \mid s)
                 f_{\sigma}(s) \rd s = \int_{0}^\infty \frac{1}{s} f_\epsilon\left(\frac{x}{s}\right)
 f_\sigma(s) \rd s
\end{equation}
is the marginal density of the process $(X_t)_{t\in\Z}$.

In the case of a process of
ARCH(1) type (where $\volfunc(x,s)$ only depends on $x$) these expressions
simplify considerably and become
\begin{align}
  f_{X_{1} ,X_{2}}(x, y) & =\frac{1}{\volfunc(x)} f_\epsilon\left(
     \frac{y}{\volfunc(x)} \right) f_X(x),\quad\quad c_1(u,v) = \frac{1}{\volfunc\left(F_X^{-1}(u)\right)}f_\epsilon \left(
     \frac{F_X^{-1}(v)}{\volfunc\left(F_X^{-1}(u)\right)} \right)
   \frac{1}{f_X(F_X^{-1}(v))} \,.\label{eq:22}
\end{align}

The key insight from these equations is the following.
\begin{proposition}\label{prop:symmetry}
\begin{enumerate}[label={(\roman*)},itemindent=1em]
\item In a GARCH(1,1)-type process with symmetric distribution, the
 distribution of $(X_1,X_2)^\top$ is h-symmetric about zero.\label{thm:symmetry1}
\item In a symmetric GARCH(1,1)-type process, the distribution of
  $(X_1,X_2)^\top$ is jointly symmetric about the origin.\label{thm:symmetry2}
\end{enumerate}
\end{proposition}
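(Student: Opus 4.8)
The plan is to work directly with the closed-form joint density in~(\ref{eq:garch-density}) and to read off both symmetries from the way its integrand depends on the two arguments, invoking the density characterisations of h- and v-symmetry recorded at the start of Section~\ref{sec:prel-defin} (namely that h-symmetry about zero is $f_{X_1,X_2}(x,y)=f_{X_1,X_2}(x,-y)$ and v-symmetry about zero is $f_{X_1,X_2}(x,y)=f_{X_1,X_2}(-x,y)$).

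For part~(i), I would note that in~(\ref{eq:garch-density}) the only factor depending on $y$ is $f_\epsilon\!\left(y/\volfunc(x,s)\right)$, since $\volfunc(x,s)$ involves $x$ and $s$ but not $y$. As a symmetric distribution makes $f_\epsilon$ even and $\volfunc(x,s)>0$, replacing $y$ by $-y$ leaves this factor, and hence the whole integrand, unchanged. Thus $f_{X_1,X_2}(x,y)=f_{X_1,X_2}(x,-y)$ for all $(x,y)\in\R^2$, which is exactly h-symmetry about zero. The argument is even more transparent in the ARCH(1) case~(\ref{eq:22}), where the integral collapses.

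For part~(ii), by the remark following Definition~\ref{def:symmetry} that h- and v-symmetry together imply joint symmetry, it suffices to add v-symmetry about the origin to the h-symmetry already obtained. To establish v-symmetry I would substitute $-x$ for $x$ in~(\ref{eq:garch-density}): the affected factors are $\volfunc(-x,s)$, occurring both as the normaliser $1/\volfunc$ and inside the argument $y/\volfunc$, together with $f_\epsilon\!\left(-x/s\right)$. Symmetric volatility gives $\volfunc(-x,s)=\volfunc(x,s)$ because $\volfunc$ is even in its first argument, while the symmetric innovation density gives $f_\epsilon(-x/s)=f_\epsilon(x/s)$; the remaining factors $1/s$ and $f_\sigma(s)$ do not involve $x$. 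Hence the integrand is invariant under $x\mapsto -x$, so $f_{X_1,X_2}(-x,y)=f_{X_1,X_2}(x,y)$, giving v-symmetry and therefore joint symmetry about the origin.

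There is essentially no analytic obstacle once~(\ref{eq:garch-density}) is in hand: the result reduces to bookkeeping which factor depends on which variable, followed by the relevant evenness. The only points meriting a word of care are that $\volfunc(x,s)>0$, so that the sign flip $y\mapsto -y$ genuinely passes through the argument of $f_\epsilon$ unchanged, and that $f_\sigma$, being the marginal density of the volatility process on $\R^+$, carries no sign dependence of its own.
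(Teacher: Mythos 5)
Your proposal is correct and follows essentially the same route as the paper: both read off $f_{X_1,X_2}(x,y)=f_{X_1,X_2}(x,-y)$ and $f_{X_1,X_2}(-x,y)=f_{X_1,X_2}(x,y)$ directly from the integrand of~(\ref{eq:garch-density}), using evenness of $f_\epsilon$ for part (i) and evenness of $\volfunc(\cdot,s)$ in its first argument (together with evenness of $f_\epsilon$) for part (ii), then combine h- and v-symmetry to get joint symmetry. Your write-up simply makes explicit the factor-by-factor bookkeeping that the paper leaves as ``it is clear''.
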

\begin{proof}
  We consider the joint density $f_{X_1,X_2}(x,y)$ in~\eqref{eq:garch-density}.
\begin{enumerate}[label={(\roman*)},itemindent=1em]
\item It is clear that a symmetric innovation distribution implies
  that $f_{X_1,X_2}(x,y) = f_{X_1,X_2}(x,-y)$, meaning the
 random vector
  $(X_1,X_2)^\top$ is h-symmetric about zero. 
  \item If the volatility function $\sigma(x,s)$ is an even function
    of $x$, this also implies that
    $f_{X_1,X_2}(x,y) = f_{X_1,X_2}(-x,y)$ and so $(X_1,X_2)^\top$
  is  v-symmetric about zero and hence radially and jointly symmetric about
  the origin. 
  \end{enumerate}
\end{proof}
In case (i) above we can apply Lemma~\ref{lemma:symm-copulas} to infer that the 
copula $C_1$ is h-symmetric and we can apply
Proposition~\ref{prop:copula-absolute} and Corollary~\ref{prop:copula-absolute-cor}  to infer that its density may be written in the
form $c_1(u,v) = c_{1}^{(h)}(u, |2v-1|)$ where $c_{1}^{(h)}$ is the density of the
copula of $(X_{1}, |X_2|)^\top$. In case (ii) we infer that the
copula $C_1$ is jointly symmetric and  its density may be written in the
form $c_1(u,v) = c_{1}^*(|2u-1|, |2v-1|)$ where $c_{1}^*$ is the density of the
copula of $(|X_{1}|, |X_2|)^\top$. In the latter case it follows from~\eqref{eq:8} that
$c_{1}^*(u, v) = c_1((u+1)/2, (v+1)/2)$, which means that the
copula density $c_{1}^*$ may be viewed as a blown-up version
of the upper quadrant of the copula density $c_1$, i.e.\ the part of
the density on $[0.5,1]^2$.

Despite the strong symmetry revealed by
Proposition~\ref{prop:symmetry}, the bivariate distribution of
$(X_1,X_2)^\top$ in a symmetric process of GARCH(1,1) type is not in general exchangeable. This implies that ARCH and
GARCH are not 
time-reversible models; in general 
$f_{X_2\mid X_{1}}(y \mid x) \neq f_{X_{1}\mid X_{2}}(y \mid x)$ and thus,
in an intuitive sense, they are models in which the predictive
distribution of the future given the past differs from the
``predictive distribution'' of the past given the future. From an
econometric point of view, this would be considered natural.
The non-exchangeability carries over to the copula
so that we have in general that $C_1(u,v) \neq C_1(v,u)$ on
$[0,1]^2$.

\subsection{Computational examples}\label{sec:comp-exampl}
In this section we use computational methods to approximate and plot
the copula density $c_1$ in a number of
models of GARCH(1,1) type; in Section~\ref{sec:pictures} of the Online
Supplement we give some
accompanying pictures of the joint density $f_{X_1, X_2}$ for selected
models.

We concentrate on
processes of ARCH(1) type in our illustrations since their first-order Markov structure
simplifies computation of the unknown marginal density $f_X$ and
distribution function $F_X$, which are key ingredients
in~\eqref{eq:22}. Beginning with the equation
\begin{equation}
   \label{eq:3}
   f_X(x) = \int_{-\infty}^\infty   f_\epsilon \left(
     \frac{x}{\volfunc(y)} \right)
   \frac{1}{\volfunc(y)} f_X(y)\rd y,
 \end{equation}
 which follows easily from~\eqref{eq:22}, we approximate the
integral by a sum of weighted values of $f_X(y_j)$ on a grid of points
$\mathcal{Y} = \{y_1, \ldots, y_n\}$; we can then solve a linear
system for $f_X(y_j), y_j \in \mathcal{Y}$, which is accurate enough for
our purposes.
 
 We also
consider the classical GARCH(1,1) model with Gaussian
innovations. To analyse this process we use simulation and estimation
to find a parametric model for
the unknown marginal density $f_\sigma$ of the volatility process. We
can then integrate out volatility in expressions
like~(\ref{eq:garch-density}),~(\ref{eq:19}) and~(\ref{eq:32})\footnote{All details
  of the model for $f_\sigma$ and the discretization method for
  ARCH(1)-type processes can be found in the code that
  accompanies this paper at \ttfamily{https://github.com/ajmcneil/papers}.}.
The parameter values for all the models we consider are summarized in
Table~\ref{tab:parameters}. \newtext{In the examples that follow
  we concentrate
on illustrating the copulas densities $c_1$; corresponding
illustrations of the joint densities $f_{X_1,X_2}$ can be found in Section~\ref{sec:pictures} of the Supplementary Material.}

\begin{table}[htb]
  \centering
  \begin{tabular}{lrrrr} \toprule
    Process & $\alpha_0$ & $\alpha_1$ & $\beta_1$ & $\gamma_1$ \\ \midrule
    ARCH(1) & 0.4 & 0.6 & 0 & 0 \\
    ARCH(1) with leverage effect & 0.4 & 0.3 & 0 & 0.4 \\
    GARCH(1,1) & 0.1 & 0.3 & 0.6 & 0 \\
    GARCH(1,1) with leverage effect & 0.1 & 0.1 & 0.6 & 0.4 \\ \bottomrule
  \end{tabular}
  \caption{Parameter values for processes analysed in
    paper. The volatility function $\sigma(x,s)$ in all
    models satisfies $\sigma(x,s)^2 = \alpha_1 + (\alpha_1 + \gamma_1 \indicator{x <
      0})x^2 +\beta_1 s^2$. Innovation distributions used
  are either Gaussian, Student t with $\xi=4$ or $\xi=2.5$ degrees of
  freedom or skewed Student t with $\xi=4$ degrees of freedom and
  skewness parameter $\lambda = 0.8$; all innovation distributions are scaled to
  have mean zero and variance one.\label{tab:parameters}}
\end{table}

\begin{figure}[!ht]
\centering
\includegraphics[width=7cm,height=7cm]{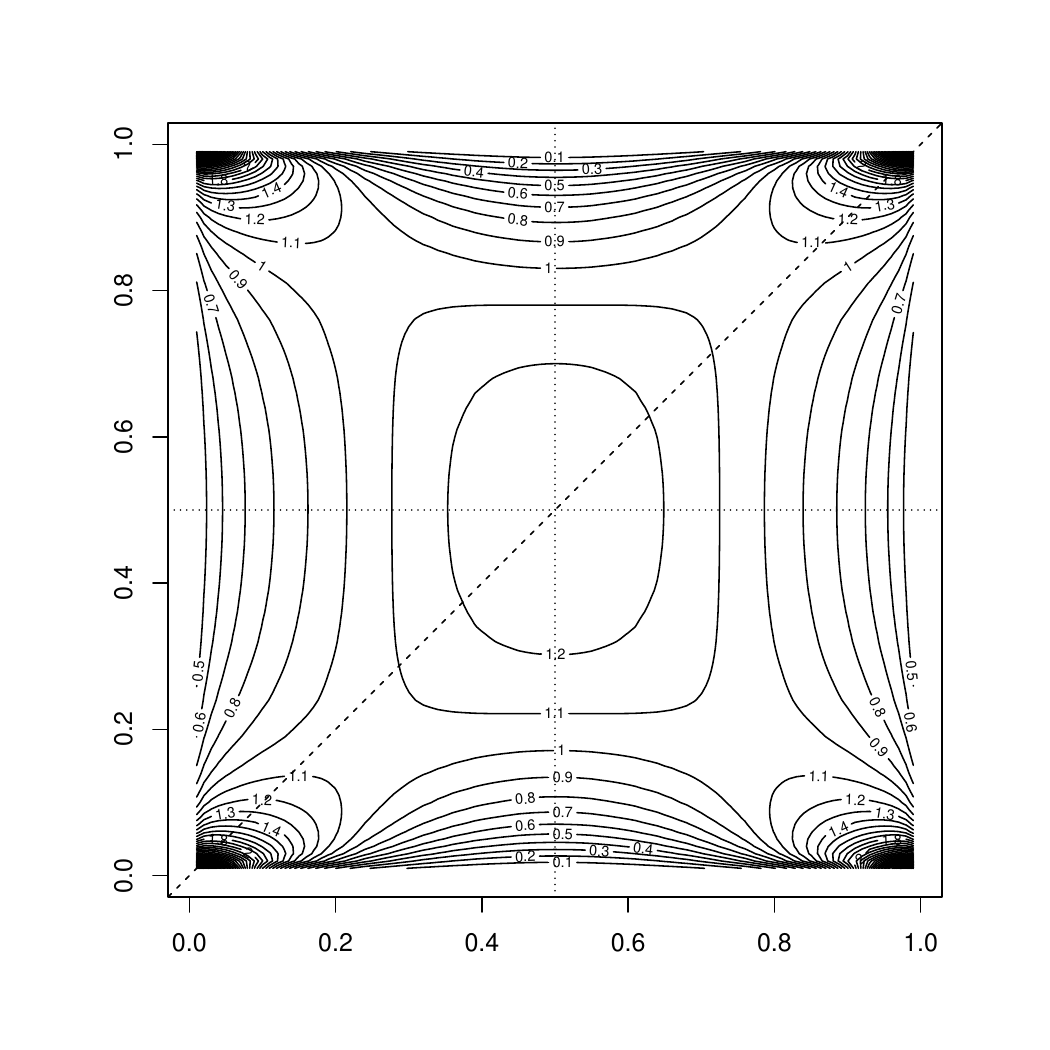}
\includegraphics[width=7cm,height=7cm]{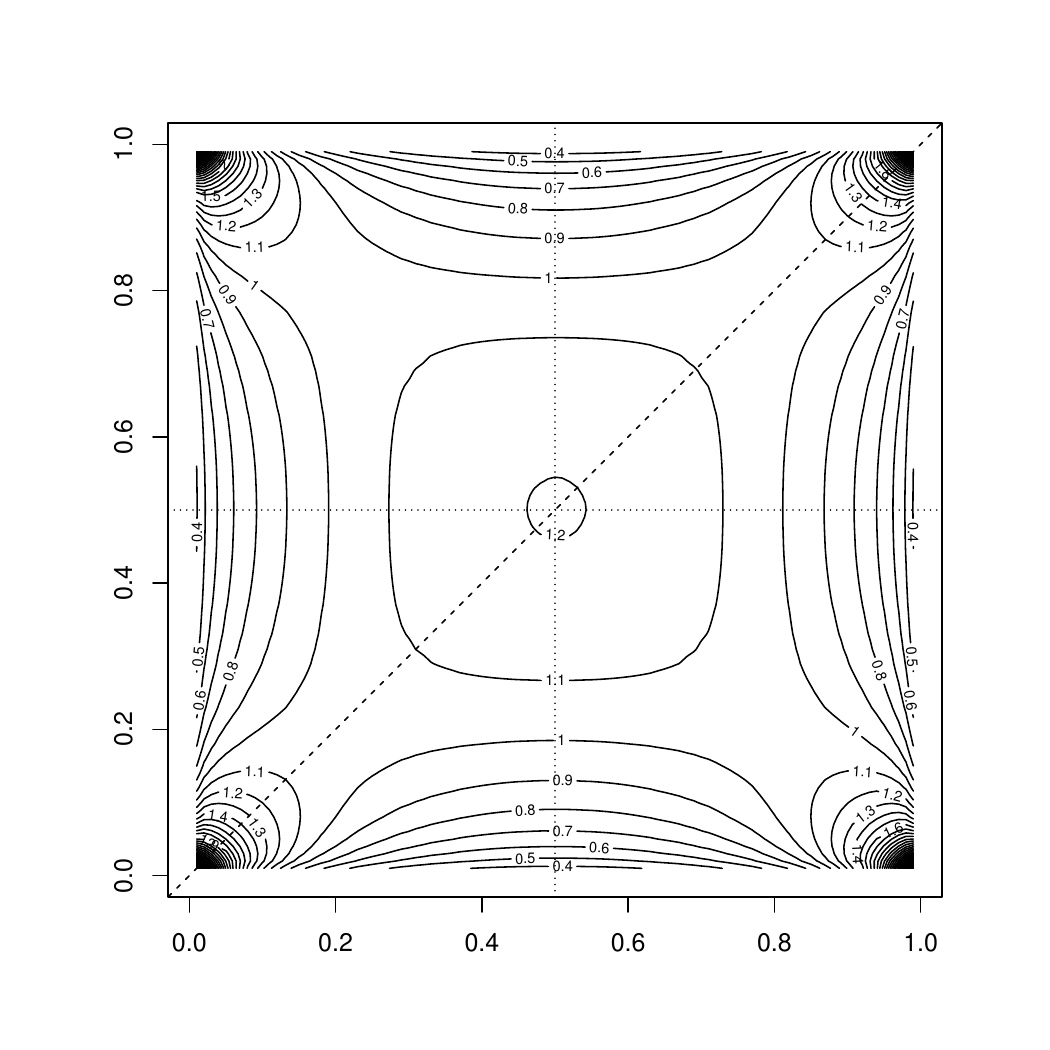}\\
\includegraphics[width=7cm,height=7cm]{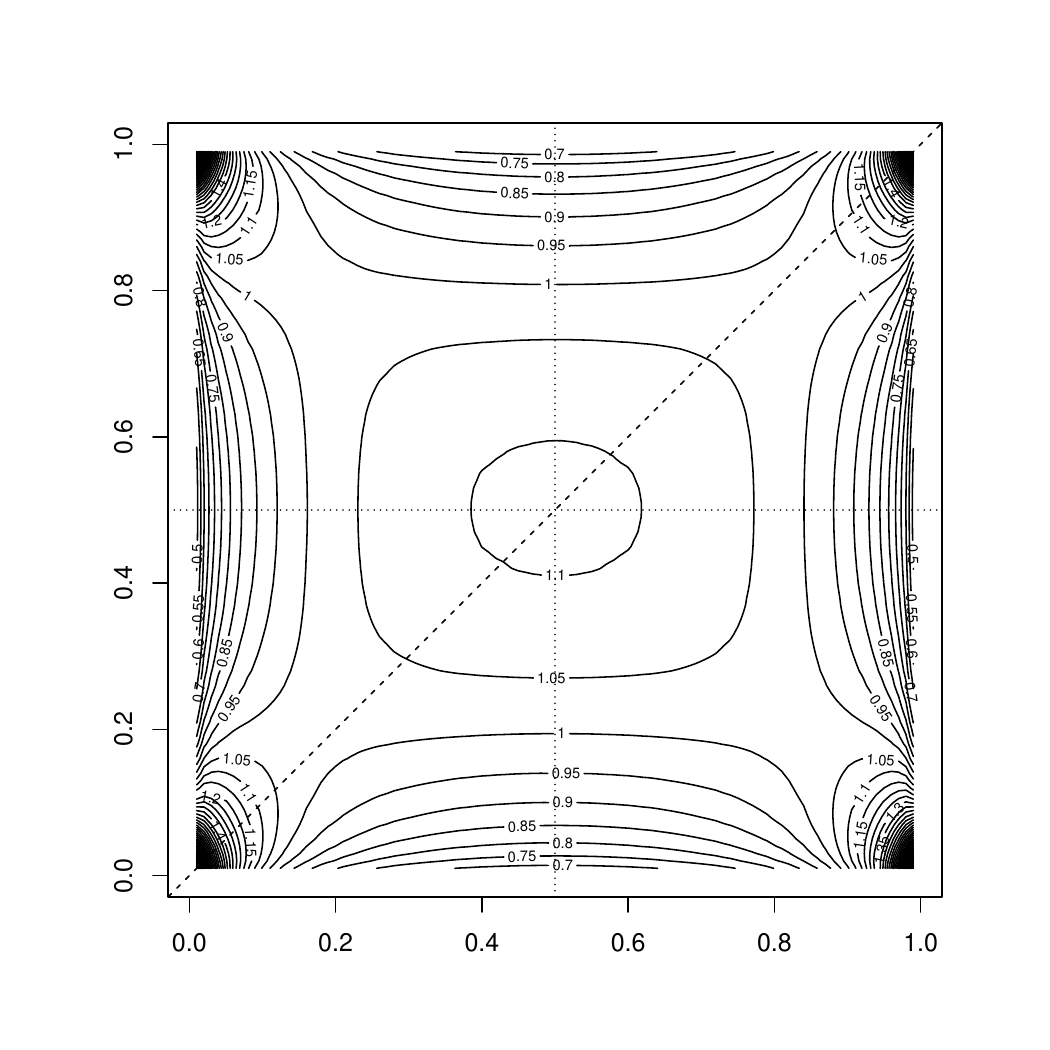}
\includegraphics[width=7cm,height=7cm]{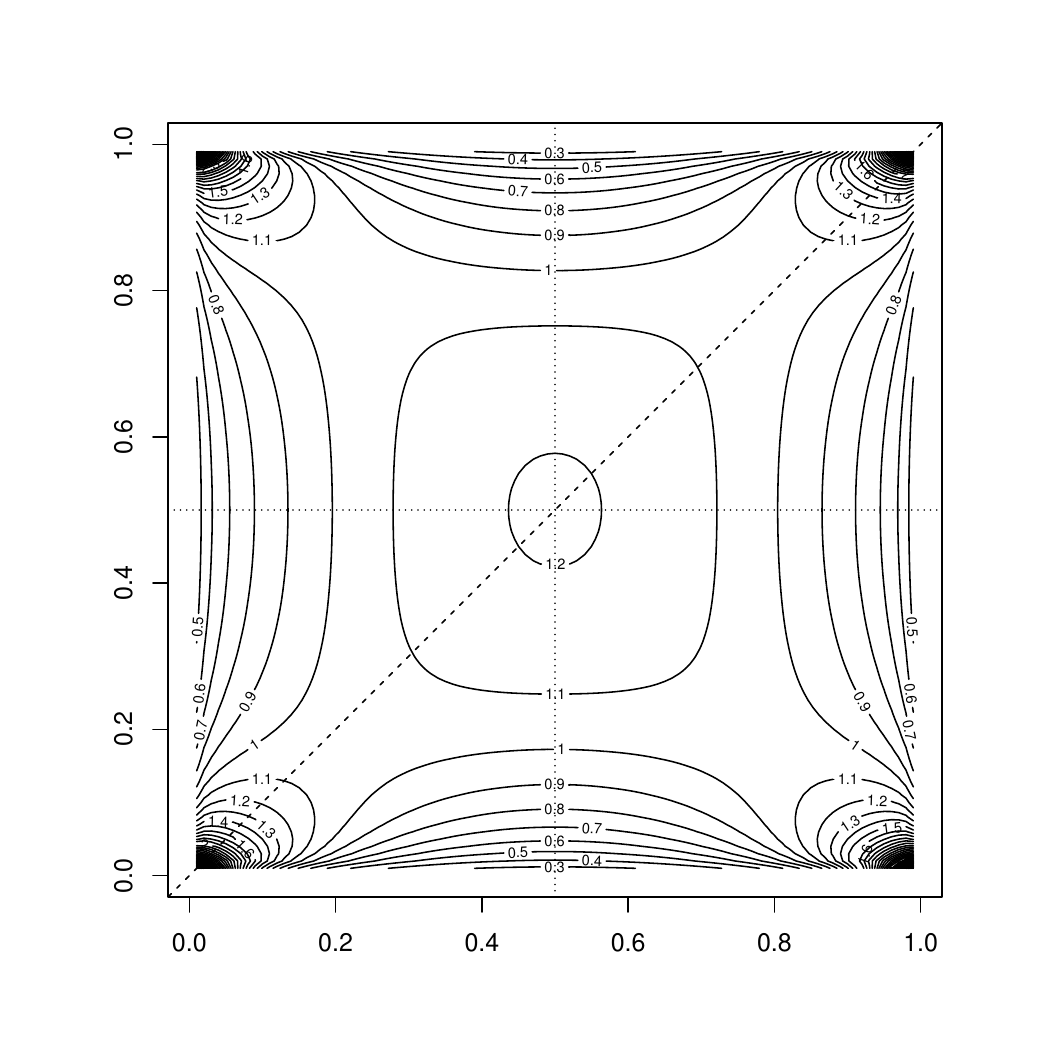}
 \caption{Contour plots of $c_1(u,v)$ for symmetric GARCH(1,1)-type processes; see Table~\ref{tab:parameters} for all
   parameter values. Top left, top right, bottom left: classical ARCH(1)
   processes with,
   respectively, Gaussian, Student t4 and Student t2.5 innovations. Bottom right: classical
   GARCH(1,1) process with Gaussian innovations.
}
\label{fig:1}
\end{figure}

 \subsubsection{Classical ARCH(1) process with Gaussian innovations (symmetric)}\label{sec:class-arch1-plot}
 For the parameter values in Table~\ref{tab:parameters}, the
  contours of the copula density $c_1$ are shown in the top left
  panel of Figure~\ref{fig:1}
The
  non-exchangeability, evidenced by a lack of symmetry in the dashed line
  $v=u$, is evident. We can also observe the
  joint symmetry, that is the h-symmetry and v-symmetry also implying
  rotational symmetry of order 2.
  Changing $\alpha_1$ would change the degree of
  dependence; in particular,
  increasing $\alpha_1$ 
  would strengthen the cross-shaped character of the copula density.

  In Section~\ref{sec:pictures} of the Supplementary Material we also
  include a contour plot of the absolute value copula for this model,
  that is the copula of $(|X_1|,|X_2|)^\top$. The
  picture for the absolute value copula is confirmed to be a blown-up version of
  the top right quadrant of the contour plot of the copula density
  $c_1$.

\subsubsection{Classical ARCH(1) process with Student innovations (symmetric)} This is
  shown in the top right panel of Figure~\ref{fig:1} for a process with the same parameters
  as in the previous example and degree of freedom $\xi =4$ for the
  innovation distribution. Remarkably, the copula appears to be very close to an
  exchangeable copula for this combination of parameters; the joint
  symmetry is again apparent. The
  bottom left panel of Figure~\ref{fig:1} (see also Figure~\ref{fig:0}) show a process in which $\xi
  =2.5$ which shows non-exchangeability
  more clearly.

  It is notable that, although the heaviness of the tail of the marginal
  distribution increases with decreasing degree of freedom, the dependence
  actually weakens. We can estimate a
  distance $D_1 = \int_0^1\int_0^1 | c_1(u,v) -1| \rd u \rd v$ between the copula density $c_1$ and the density of
  the independence copula; for the Gaussian, t4 and t2.5 copulas we obtain
  respectively 0.213, 0.158 and 0.092.

  \subsubsection{Classical GARCH(1,1) process with Gaussian innovations
    (symmetric)}\label{sec:class-garch11-plot}
  This is shown in the bottom right panel of
    Figure~\ref{fig:1} for a process with Gaussian
    innovations.
   It is well known that, for a covariance stationary
   GARCH(1,1) process ($\alpha_1 + \beta_1 <1$), the squared process
   $(X_t^2)_{t\in \Z}$ behaves like an ARMA(1,1) process with
   autoregressive parameter $\alpha_1+\beta_1$ and moving-average
   parameter $-\beta_1$; see, for
   example,~\citet[Section 4.2.2]{bib:mcneil-frey-embrechts-15}. Thus the value of
   $\alpha_1 + \beta_1$ controls the persistence, or the rate at which
   serial dependencies in the squared (or absolute value) process die
   away with lag. For a fixed value of $\alpha_1 + \beta_1$ the magnitude of the serial
   dependency at lag 1 increases with $\alpha_1$, so that the ARCH(1)
   special case actually has the strongest dependence and the copula
   tends towards independence as $\alpha_1 \to 0$. The parameters we
   have chosen are not necessarily typical of real financial data, where
   the ARCH effect $\alpha_1$ might be smaller and the GARCH
   effect $\beta_1$ larger, but have instead been selected to give a picture that has
   more evident dependence. The main point is to show that the
   behaviour of the GARCH(1,1) copula is qualitatively similar to that
   of the ARCH(1) copula.
    
\subsubsection{ARCH(1) model with Gaussian innovations and leverage
  (symmetric distribution)} In the left panel of
  Figure~\ref{fig:2} we show the copula $c_1$ for a model with
  symmetric innovations and
  volatility function $\volfunc(x)$ satisfying $\volfunc(x)^2 =
  \alpha_0 + (\alpha_1 + \gamma_1 \indicator{x <
      0})x^2$ for $\gamma_1 > 0$, so that a negative return 
  produces a larger value of the volatility than a positive return of
  the same magnitude. 
  It can be seen that leverage removes
  radial symmetry and
  v-symmetry, but there is still h-symmetry (symmetry in the line $v=
  0.5$).

\subsubsection{Classical ARCH(1) model with skewed Student innovations
  (symmetric volatility)}  The copula density $c_1$ is
  shown in the right panel of Figure~\ref{fig:2} for a model with an innovation
  distribution that is skewed to the left (towards negative
  shocks). The skewness removes all the previous symmetries in the
  pictures.

  \begin{figure}[!ht]
  \centering
  \includegraphics[width=7cm,height=7cm]{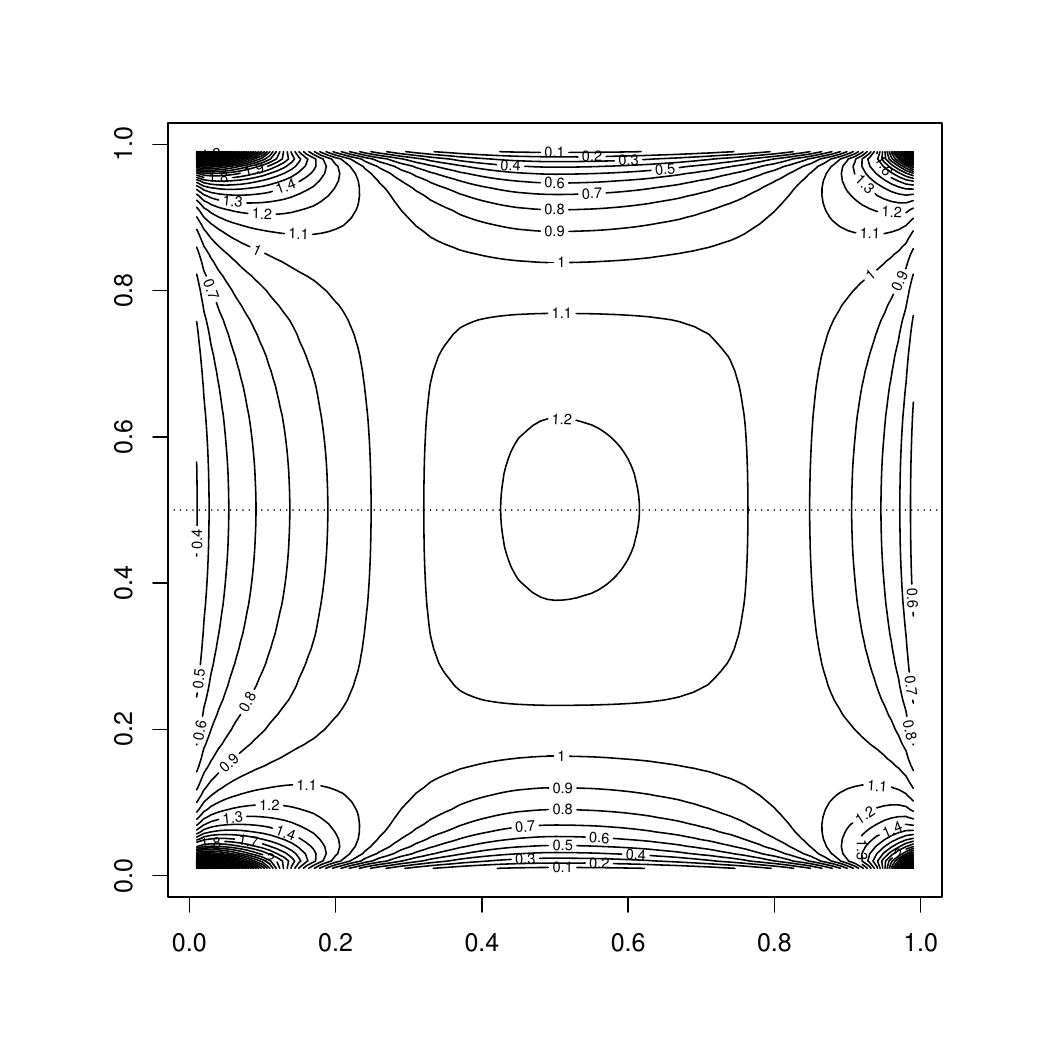}
\includegraphics[width=7cm,height=7cm]{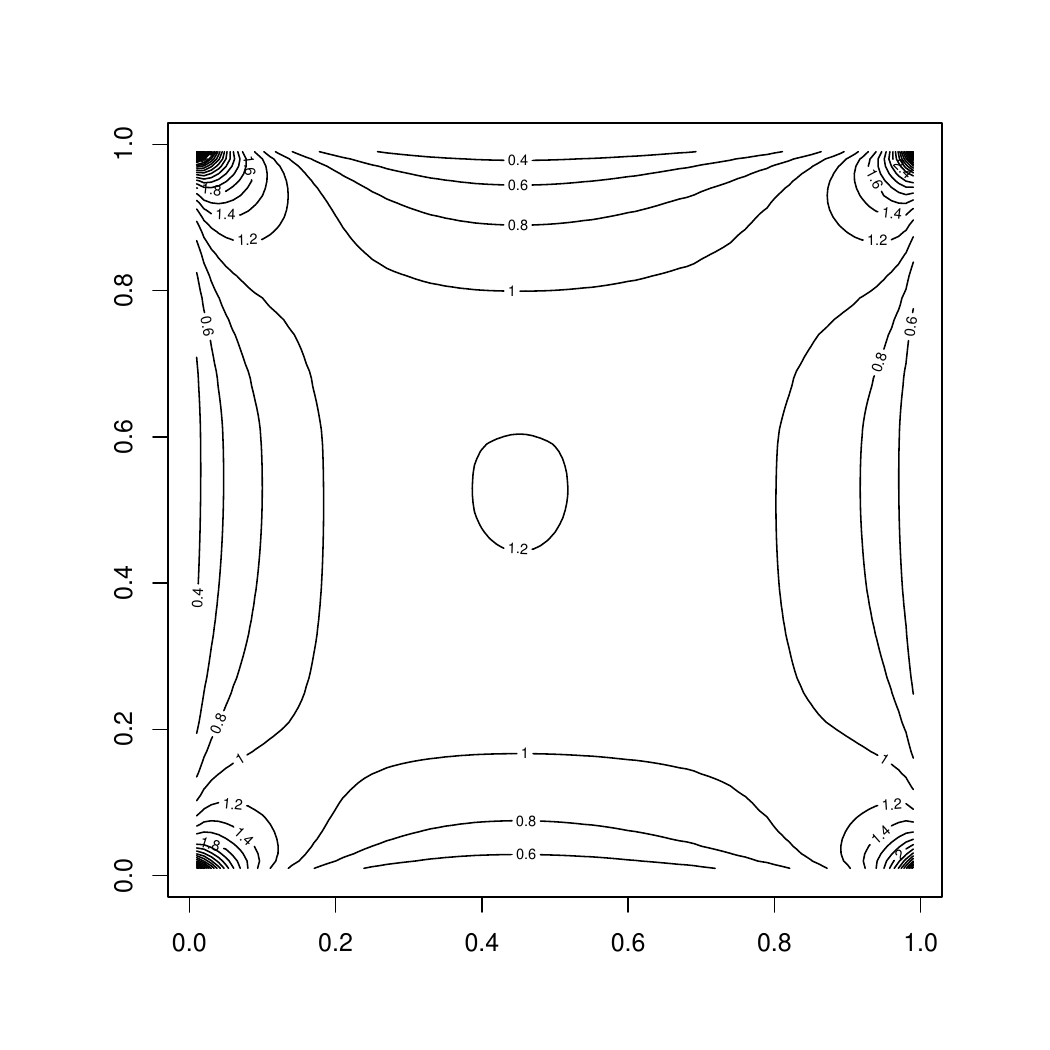}
\caption{Introducing asymmetry.
Left: ARCH(1) with Gaussian innovations in which the volatility function has leverage and the
parameters are $\alpha_0 =0.4$, $\alpha_1 = 0.3$ and $\gamma_1 = 0.4$.
Right: classical ARCH(1) model with 4 degrees of freedom
and skewness parameter $\lambda = 0.8$ where $\alpha_0 = 0.4$ and
$\alpha_1 = 0.6$.
}
\label{fig:2}
\end{figure}
  
\subsection{Higher-order copulas of GARCH(1,1)-type
  processes}\label{sec:high-order-cond}


\newtext{In this section we show that the same symmetries evident in
  the copula density $c_1$ can be found in the density $c_k$ of the copula of the conditional distribution of
$(X_{1},X_{k+1})^\top$ given 
$\bm{X}_{[2:k]} = \bm{x}_{[2:k]}$ for $k\geq2$, 
which takes
the form}
\begin{equation}\label{eq:2}
  c_k(u,v \mid \bm{x}_{[2:k]}) = \frac{f_{X_1,X_{k+1} \mid \bm{X}_{[2:k]}}\left(
    F_{X_1 \mid \bm{X}_{[2:k]}}^{-1}\left(u \mid
      \bm{x}_{[2:k]}\right),  F_{X_{k+1} \mid \bm{X}_{[2:k]}}^{-1}\left(v \mid
      \bm{x}_{[2:k]}\right) \mid \bm{x}_{[2:k]} \right)}
{ f_{X_1 \mid \bm{X}_{[2:k]}} \left(F_{X_1 \mid \bm{X}_{[2:k]}}^{-1}\left(u \mid
      \bm{x}_{[2:k]}\right)\mid  \bm{x}_{[2:k]} \right)
  f_{X_{k+1} \mid \bm{X}_{[2:k]}} \left(F_{X_{k+1} \mid \bm{X}_{[2:k]}}^{-1}\left(v \mid
      \bm{x}_{[2:k]}\right)\mid  \bm{x}_{[2:k]} \right)
  }
\end{equation}
\newtext{It should be distinguished from the copula density $c_{(k)}$
  of the copula $C_{(k)}$ of the marginal distribution of $(X_1,
  X_{k+1})^\top$ (unless $k=1$). In particular, it is the densities
  $c_k$ rather than $c_{(k)}$ that appear in the D-vine
  representation~\eqref{eq:27} of the GARCH-type process. The former depend on the conditioning
  variables $\bm{x}_{[2:k]}$ although we would like to be able to assume
  that this dependence is weak so that the D-vine representation can be approximated by a simplified D-vine.}

 \newtext{To understand the main argument in this section it suffices
   to consider the case $k=2$. We consider a process
with an explicit GARCH effect since, for a process of ARCH(1) type,  $X_1$ and $X_3$ are
conditionally independent given $X_2=z$ and 
 $c_2(u,v \mid z) \equiv 1$, the density of the independence copula. The joint
conditional density $f_{X_1,X_{3}\mid X_2}(x,y \mid z)$ and the joint
density $f_{X_1,X_3}(x,y)$ can both be calculated from the
joint density
$ f_{X_1,X_2,X_3}(x,z,y)$ by observing that $f_{X_1,X_{3}\mid X_2}(x,y\mid z) =
f_{X_1,X_2,X_3}(x,z,y)/f_X(z)$ and $f_{X_1,X_{3}}(x,y) =
\int_{-\infty}^\infty f_{X_1,X_2,X_3}(x,z,y)\, \rd z$.
Since
\begin{align*}
  f_{X_1,X_2,X_3}(x,z,y)
                         &= \int_0^\infty
                           f_{X_3 \mid X_2, X_1,\sigma_1}(y \mid z,x,s)
f_{X_2 \mid X_1,\sigma_{1}}(z \mid x, s) f_{X_1 \mid \sigma_1}(x
\mid s ) f_\sigma(s)
\rd s \\
                         &= \int_0^\infty
\frac{1}{\volfunc(z,\volfunc(x,s))}f_\epsilon\left(  \frac{y}{\volfunc(z,\volfunc(x,s))}    \right)
                           \frac{1}{\volfunc(x,s)} f_\epsilon\left(
     \frac{z}{\volfunc(x,s)} \right)
 \frac{1}{s}
      f_{\epsilon}\left(\frac{x}{s}\right) f_\sigma(s) \rd s
\end{align*}
we can infer that $f_{X_1,X_2,X_3}(x,z,-y) = f_{X_1,X_3,X_2}(x,z,y)$
for a process with symmetric innovations and
$f_{X_1,X_2,X_3}(-x,z,y) =
f_{X_1,X_2,X_3}(x,z,y)$ for a symmetric process. It follows
immediately that the conditional
and unconditional densities satisfy $f_{X_1,X_{3}
    \mid X_2}(x, - y \mid z) = f_{X_1,X_{3}
    \mid X_2}(x, y \mid z)$ and  $f_{X_1,X_{3}
   }(x, - y) = f_{X_1,X_{3}
  }(x, y)$ for a process with symmetric innovations and we have h-symmetry in both cases. Moreover, they
  satisfy $f_{X_1,X_{3}
    \mid X_2}(-x, y \mid z) = f_{X_1,X_{3}
    \mid X_2}(x, y \mid z)$ and  $f_{X_1,X_{3}
   }(-x, y) = f_{X_1,X_{3}
  }(x, y)$ for a symmetric process and we have joint symmetry in both
  cases.
  By Lemma~\ref{lemma:symm-copulas}, the various forms of 
symmetry are inherited by the conditional
copula $C_2$ of $(X_1,X_3)^\top$ given $X_2$ and the marginal copula 
$C_{(2)}$ of $(X_1,X_3)^\top$.
}

\newtext{We now give the general result for $k \geq 2$. Since the
  logic of its proof is identical to the argument given above, albeit
  with more cumbersome notation, the general proof is given in the
  Supplementary Material.}
\begin{proposition}\label{prop:symmetry2}
\begin{enumerate}[label={(\roman*)},itemindent=1em]
\item In a GARCH(1,1)-type process with symmetric distribution, the
  conditional copula $C_k$ and the marginal copula $C_{(k)}$ are h-symmetric about zero
  for $k\geq2$.\label{thm:2symmetry1}
\item In a symmetric GARCH(1,1)-type process $C_k$ and $C_{(k)}$ are
  jointly symmetric about the origin for $k\geq2$.\label{thm:2symmetry2}
\end{enumerate}
\end{proposition}

\newtext{The computational tools we have developed to approximate
  $c_1$ in
Section~\ref{sec:comp-exampl}
can be extended to approximate $c_2$, and we use them to explore the simplifying
assumption. Let $c_2(u,v ;
w) := c_2(u,v \mid F_X^{-1}(w))$ for $w \in (0,1)$.}
Contour plots of $c_1(u,v;w)$ for the cases where
$w=0.5$ and $w=0.75$ are shown in Figure~\ref{fig:3}. The two pictures are
very similar to each other in terms of the shapes and heights of
the contour lines, and are also similar in pattern to the plot of $c_1$ in
the lower-right panel of Figure~\ref{fig:1}.
We compute an approximate value for
the absolute distance between the conditional copula densities and the
independence copula density using the measure $D_2(w) = \int_0^1\int_0^1 | c_2(u,v ; w) -1| \rd u \rd v$.
We obtain the estimates $D_2(0.5) = 0.10$ and
$D_2(0.75) = 0.09$, which can be compared with the equivalent measure
$D_1 = 0.16$ for the copula density $c_1$ in the lower-right panel of Figure~\ref{fig:1}.

\begin{figure}[ht!]
\centering
\includegraphics[width=6.5cm,height=6.5cm]{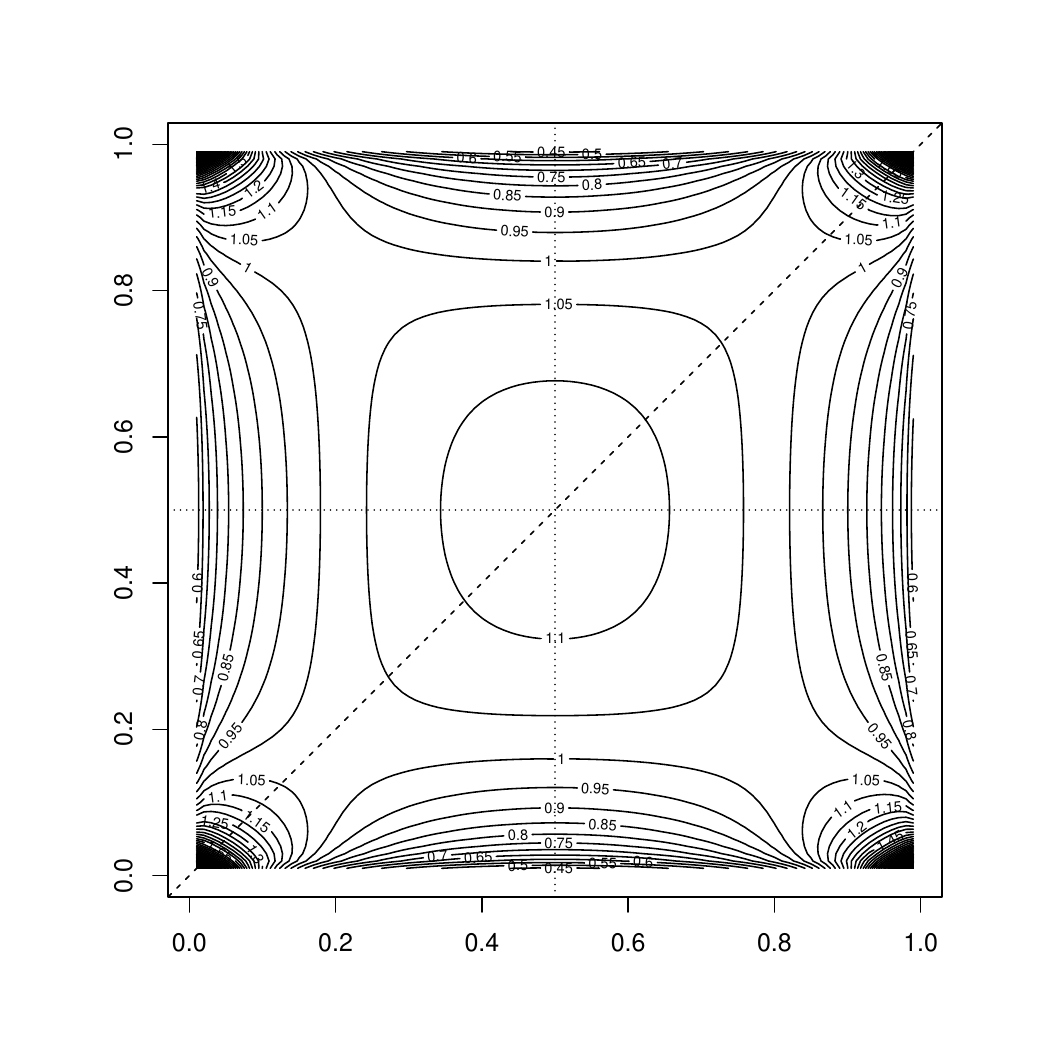}
\includegraphics[width=6.5cm,height=6.5cm]{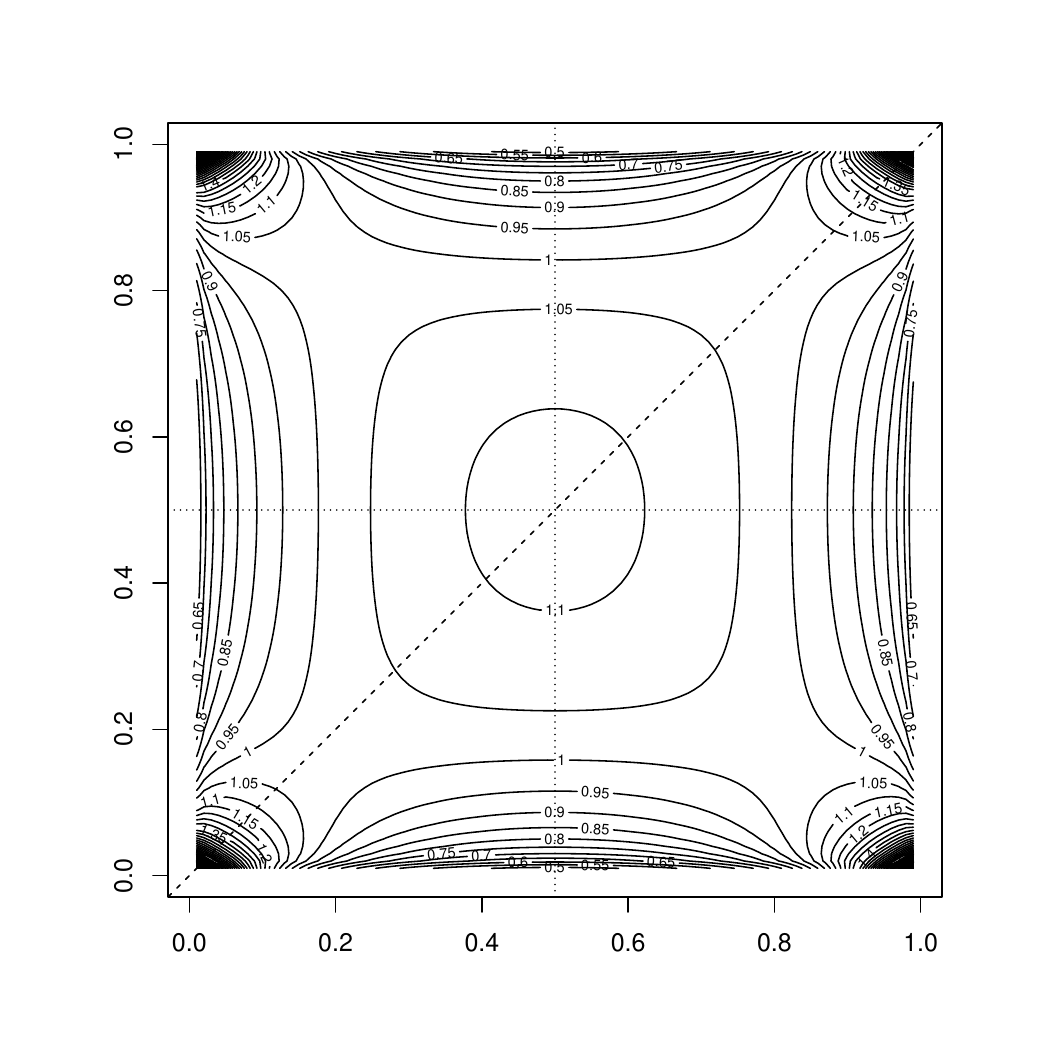}
 \caption{Contour plots of $c_2(u,v; w)$ for the classical
   GARCH(1,1) process with Gaussian innovations. Left: $w=0.5$. Right:
   $w=0.75$.
}
\label{fig:3}
\end{figure}

As might be expected, the dependence weakens from lag 1 to lag
2. Moreover, the value of $w$ does not seem to
have a particularly large effect on the conditional copula at lag
2. This might be taken as weak evidence in favour of the simplifying
assumption. \newtext{It would be desirable to look at
values of $w$ further into the tail and consider lags $k >2$, but our approximation methods
then become less accurate leading to less confident conclusions.}



 \section{Using v-transforms to approximate GARCH pair copulas}\label{sec:copula-dens-constr}
\subsection{V-transforms and their inverses}

\newtext{It has long been observed that serial dependence in financial
asset-price return series, and in the models like GARCH that describe them,
shows up as serial correlation of absolute or squared
values. \cite{bib:loaiza-maya-et-al-18} considered the componentwise application of
symmetric v-shaped
transformations $T$ (such as the absolute value function $T(x) = |x|$)
to a pair of random variables $(Y,Z)^\top$ and derived an expression for the relationship of the
copula $C_T$ of $(T(Y),T(Z))^\top$ to the copula
$C$ of $(Y,Z)^\top$. They showed that the resulting copula $C_T$ involves the
marginal distributions $F_Y$ and $F_Z$ except in the special case of
symmetric margins. \cite{bib:mcneil-20} observed that the effect on the copula of
such transformations, as well as more general asymmetric
transformations, can be isolated by introducing v-shaped transformations
of uniform random variables that are
  uniform-distribution-preserving~\citep{bib:porubsky-salat-strauch-88}. Importantly, for the
  purposes of this paper, these so-called
  v-transforms can be inverted stochastically in a way that also
  preserves uniform distributions. We use this inversion technique to create new
  copulas that show behaviour similar to GARCH pair copulas.
In this section we recap the
  essential elements of the theory.}

\begin{definition}\label{theorem:v-characterization}
A mapping $\vtrans: [0,1] \to [0,1]$ is a v-transform if it takes the form
  \begin{equation}
    \label{eq:1_v}
    \vtrans(u) =
    \begin{cases}
(1-u) - (1-\downprob) \Psi \left( \frac{u}{\downprob} \right) & u \leq
\downprob, \\
u - \downprob \Psi^{-1}\left( \frac{1-u}{1-\downprob} \right) & u > \downprob,
\end{cases}
\end{equation}
for a parameter $0 < \delta 
< 1$, known as the fulcrum, and a continuous and strictly increasing distribution
function $\Psi$ on $[0,1]$, known as the generator.
\end{definition}

Mappings of this form are always v-shaped with $\vtrans(0) =
\vtrans(1) = 1$ and $\vtrans(\delta) = 0$. They preserve uniformity in the sense that if $U
\sim \mathcal{U}(0,1)$ is standard uniform, then $\vtrans(U) \sim \mathcal{U}(0,1)$. If the generator
is $\Psi(x) = x$ (the uniform df), we
obtain the linear v-transform, which satisfies $\vtrans(u; \delta) = (\delta-u)/\delta$ for $u \leq \delta$ and
$\vtrans(u; \delta) = (u-\delta)/(1-\delta)$ for $u > \delta$; this
includes the special case
$\vtrans(u; 0.5) = |2u-1|$, which is the symmetric v-transform. It is possible
to introduce further parameters. For example, by taking $\Psi(x) =
x^\kappa$ for $\kappa >0 $, we obtain a two-parameter family
  \begin{equation}
    \label{eq:2_v}
    \vtrans(u; \delta,\kappa) =
    \begin{cases}
(1-u) - (1-\downprob) \left(\frac{u}{\delta} \right)^\kappa& u \leq
\downprob, \\
u - \downprob \left( \frac{1-u}{1-\delta}\right)^{1/\kappa} & u > \downprob.
\end{cases}
\end{equation}

For further results we need some more notation.
For every $y \in (0,1]$ there are two possible
values $u$ on either side of the fulcrum $\delta$ such that $\vtrans(u) = y$ and
these may be written as $\vtrans^{-1}(y)$ and $\vtrans^{-1}(y) +y$ where
$\vtrans^{-1}(y) = \inf\{u : \vtrans(u) =y \}$ is a partial inverse of
$\vtrans$. We also introduce the function
$\Delta(y) = 
- 1/\vtrans^\prime(\vtrans^{-1}(y))$ where
$\vtrans^\prime$ is the derivative of $\vtrans$. If $U \sim \mathcal{U}(0,1)$
then~\cite{bib:mcneil-20} shows that
$\P(U \leq \delta \mid \vtrans(U) = y) = \Delta(y)$ for $y \in (0,1]$; for
the linear v-transform $\Delta(y) =
\delta$, but
otherwise $\Delta(y)$ depends explicitly on $y$.

\newtext{
We can stochastically invert a v-transform in such a way that we also
preserve uniformity of random variables. For $y \in (0,1]$, the stochastic inverse
$\vtrans^\leftarrow(y, W)$ of the v-transform $\vtrans$ is defined by
taking a uniform random variable $W\sim \mathcal{U}(0,1)$ and setting
$\vtrans^\leftarrow(y, W) =\vtrans^{-1}(y)$ 
if $W \leq \Delta(y)$ and $\vtrans^\leftarrow(y, W) =\vtrans^{-1}(y) +
y$ otherwise; we always set $\vtrans^\leftarrow(0,W) = \vtrans^{-1}(0)
= \delta$. If $U$ and $W$ are iid
$\mathcal{U}(0,1)$ variables, then it can also
be shown that $\vtrans^\leftarrow (U,W) \sim \mathcal{U}(0,1)$. Note, however, that while $\vtrans\left(\vtrans^\leftarrow (y,W)\right) = y$
 holds for all $y\in [0,1]$, the equation $\vtrans^\leftarrow\left(
 \vtrans(u),W\right) = u$ is only true with probability $\Delta(u)$, if
 $u< \delta$, and with probability $1-\Delta(u)$, if $u > \delta$.}


 \subsection{V-transformation and inverse v-transformation of copulas}

\newtext{The following result, which is an adaptation of a result in~\cite{bib:mcneil-20}, is key to constructing new copulas using
v-transforms and their inverses. Part (i) is related to the
  result of~\cite{bib:loaiza-maya-et-al-18} but is expressed in a
  manner that involves only copulas and is free of marginal
  distributions.}
\begin{theorem}\label{theorem:v-and-inverse-v}
Let $\mathcal{V}_1$ and $\mathcal{V}_2$ be v-transforms with
associated partial inverses $\mathcal{V}_1^{-1}$ and
$\mathcal{V}_2^{-1}$, $\Delta$-functions $\Delta_1$ and
$\Delta_2$ and stochastic inverses  $\vtrans_1^\leftarrow$ and $\vtrans_2^\leftarrow$. Let $(U,V)^\top$ be
distributed according to a copula $C$ with density $c$
\newtext{and let $W_1$ and $W_2$ be iid $\mathcal{U}(0,1)$ variables,
  independent of $(U,V)^\top$.}
\begin{enumerate}[label={(\roman*)},itemindent=1em]
\item
$(\vtrans_1(U),\vtrans_2(V))^\top$ is distributed according to a copula
$C^*$ with density
\begin{align}
& c^*(u,v) = 
 c(\mathcal{V}_1^{-1}(u),
\mathcal{V}_2^{-1}(v))\; \Delta_1(u) \,\Delta_2(v) \; + c(\mathcal{V}_1^{-1}(u)
+ u,
\mathcal{V}_2^{-1}(v)) \; \big(1-\Delta_1(u)\big)\,\Delta_2(v) \; + \nonumber \\
& c(\mathcal{V}_1^{-1}(u),
\mathcal{V}_2^{-1}(v)+v) \; \Delta_1(u)\,\big(1-\Delta_2(v)\big)  +
 c(\mathcal{V}_1^{-1}(u)+u,
\mathcal{V}_2^{-1}(v) +v) \; \big(1-\Delta_1(u)\big)\,
  \big(1-\Delta_2(v)\big). \label{eq:v-transform-density}
\end{align}
\item
  $(\vtrans_1^\leftarrow(U, W_1),\vtrans_2^\leftarrow(V, W_2))^\top$ is distributed according to a copula
$C^\dagger$ with density 
\begin{equation}
  \label{eq:inverse-v-transform-density}
 c^\dagger(u,v)
= c(\mathcal{V}_1(u), \mathcal{V}_2(v)).
\end{equation}
\end{enumerate}
\end{theorem}
\begin{proof}
  This may be proved by a simple adaptation of the proof of
  Theorem 3 in~\cite{bib:mcneil-20} which deals with the case of two
  (or more)
  identical v-transforms.
\end{proof}

\noindent \newtext{
Part (i) refers to what we call
  (componentwise) v-transformation of $(U,V)^\top$, or its distribution 
$C$, while 
part (ii) refers to what we call (componentwise independent) 
inverse v-transformation of $(U,V)^\top$, or $C$. Inverse
v-transformation of $(U,V)^\top$ followed by v-transformation returns the original
random vector $(U,V)^\top$ (since 
$\vtrans(\vtrans^{\leftarrow}(y,W)) = y$ for $y \in [0,1]$ and any v-transform). However
v-transformation of $(U,V)^\top$ 
followed by inverse
v-transformation returns a different random vector with, in general,
a different distribution. An exception is the following special case.}

\begin{corollary}\label{cor:newnew}
  \newtext{
Under the conditions of Theorem~\ref{theorem:v-and-inverse-v}, if $C$ is jointly symmetric and $\vtrans_1(u) =
\vtrans_2(u) = |2u-1|$ for $u \in [0,1]$, then
$(\vtrans_1^\leftarrow(\vtrans_1(U),
W_1),\vtrans_2^\leftarrow(\vtrans_2(V), W_2))^\top \eqdis (U,V)^\top$.
}
\end{corollary}
\begin{proof}
  \newtext{
  By part (ii) of Corollary~\ref{prop:copula-absolute-cor} the density $c$ of the copula $C$ of $(U,V)^\top$
  can be written as
  $c(u,v) = c^*(\vtrans_1(u),\vtrans_2(v))$ where $c^*$ is the density
  of the copula of $(\vtrans_1(U),\vtrans_2(V))^\top$. By part (ii) of
  Theorem~\ref{theorem:v-and-inverse-v} $c^*(\vtrans_1(u),\vtrans_2(v))$ is also the density of
  the inverse v-transformation of
  $(\vtrans_1(U),\vtrans_2(V))^\top$ with the stochastic inverses
  $\vtrans_1^\leftarrow$ and $\vtrans_2^\leftarrow$.}
  \end{proof}

\subsection{Implicit v-transforms in general bivariate
  densities}\label{sec:mimick-non-refl}

In this section we use part (i) of
Theorem~\ref{theorem:v-and-inverse-v} to find a condition
that allows us to write
the copula density $c$
of a random vector $(Y,Z)^\top$ on $\R^2$ in the form $c(u,v) =
c^*(\vtrans_1(u),\vtrans_2(v))$ for v-transforms $\vtrans_1$ and
$\vtrans_2$ and a copula density $c^*$. \newtext{This result requires the
concept of a v-shaped transformation on $\R$, which generalizes the
idea found in~\cite{bib:loaiza-maya-et-al-18} to also allow non-symmetric transformations.}

\begin{definition}
A function $T$ is a v-shaped function on $\R$ if it can be written in the
form $ T(x) =
T_L(\mu_T-x)$ for $x \leq \mu_T$ and $T(X) = T_R(x-\mu_T)$ for $x > \mu_T$
where $T_L$ and $T_R$ are strictly
increasing, continuous and differentiable functions on
$\R^+=[0,\infty)$ such that $T_L(0) = T_R(0)$ and $\mu_T \in
\R$. 
\end{definition}

\begin{theorem}\label{prop:new-rep}
Let $(Y,Z)^\top$ be a random vector in $\R^2$ with joint density
$f_{Y,Z}$, marginals densities $f_Y$ and $f_Z$ and copula density $c$. If the density ratio
$f_{Y,Z}(y,z)/(f_Y(y) f_Z(z))$
can be written in the form
\begin{equation}\label{eq:5}
 \frac{f_{Y,Z}(y,z)}{f_{Y}(y)f_Z(z)} = g\left(T_1(y), T_2(z)\right)
\end{equation}
for two v-shaped transformations $T_1$ and $T_2$ on $\R$ and some
function $g: \R \times \R \to \R^+$, then we may write
$c(u,v) = c^*(\vtrans_1(u), \vtrans_2(v))$ on $[0,1]^2$ where
$\vtrans_1$ and $\vtrans_2$ are v-transforms given by
$\vtrans_1(u) = F_{T_1(Y)} \circ T_1 \circ F_Y^{-1}(u)$ and
$\vtrans_2(v) = F_{T_2(Z)} \circ T_2 \circ F_Z^{-1}(v)$ and where
$c^*$ is the density of the copula of $(T_1(Y), T_2(Z))^\top$.
\end{theorem}
\begin{proof}
The density $c$ is the density of the df of the random variables
$(U,V)^\top$ where $U = F_Y(Y)$ and $V = F_Z(Z)$. In~\cite{bib:mcneil-20} it is shown that
$F_{T_1(Y)}(T_1(Y)) = \vtrans_1(U)$ for the v-transform given by
$\vtrans_1 = F_{T_1(Y)} \circ T_1 \circ F_Y^{-1}$; similarly $F_{T_2(Z)}(T_2(Z)) = \vtrans_2(V)$ for 
$\vtrans_2 = F_{T_2(Z)} \circ T_2 \circ F_Z^{-1}$. Hence we may write
\begin{equation}\label{eq:26}
  c(u,v) = g \left( T_1(F_Y^{-1}(u)), T_2(F_Z^{-1}(v))  \right) =g\left(F_{T_1(Y)}^{-1}(\vtrans_1(u)),F_{T_2(Z)}^{-1}(\vtrans_2(v))\right).
\end{equation}

Let $c^*$ be the density of the df of
$(\vtrans_1(U),\vtrans_2(V))^\top$ and thus the density of the copula
of $(T_1(Y),T_2(Z))^\prime$.
We can compute $c^*$ from~\eqref{eq:26}
using~(\ref{eq:v-transform-density}). Since
$\vtrans_1(\vtrans_1^{-1}(u)) = \vtrans_1(\vtrans_1^{-1}(u) +u) =u$ and
$\vtrans_2(\vtrans_2^{-1}(v)) = \vtrans_2(\vtrans_2^{-1}(v) +v) = v$ we infer that
$c^*(u,v) = g\left(F_{T_1(Y)}^{-1}(u),F_{T_2(Z)}^{-1}(v)\right)$
and
hence it follows that $c(u,v) =
 c^*(\vtrans_1(u),\vtrans_2(v))$.
\end{proof}

Symmetric models of GARCH(1,1)-type provide examples of models where
the density ratio for $f_{X_1,X_2}$ (and also for $f_{X_1,X_{k+1} \mid
  \bm{X}_{[2:k]}}$) may be factorized in the
form~\eqref{eq:5} by taking $T_1(x) = T_2(X) = |x|$. Unfortunately, when innovations distributions are
asymmetric, or when leverage is introduced into a model 
that is not of ARCH(1) type effect, perfect factorization of the
form~\eqref{eq:5} does not seem to be possible. 
However, it is possible in the following case.


\begin{example}[ARCH(1) process with symmetric innovations and leverage]\label{ex:arch-leverage}
For the process
of ARCH(1) type with leverage and symmetric distribution in the upper
left of Figure~\ref{fig:2} we observe that the
volatility function $\volfunc(x)$ is itself a v-shaped
transformation with $\mu_\sigma=0$, $\sigma_L(x) = \sqrt{\alpha_0 + (\alpha_1 + \gamma_1)x^2}$
and $\sigma_R(x) = \sqrt{\alpha_0 + \alpha_1 x^2}$.
In this case, it follows from the joint density in~(\ref{eq:22}) that we may write
\begin{displaymath}
  \frac{f_{X_1,X_2}(x,y)}{f_X(x)f_X(y)} = \frac{f_\epsilon\left(
      \frac{|y|}{\sigma(x)} \right)}{\sigma(x) f_X(|y|)}= g(\sigma(x), |y|)
\end{displaymath}
and conclude that the copula density satisfies $c_1(u,v) =
c_1^*(\vtrans_1(u),|2v-1|)$ where $c_1^*$ is the copula density of
$(\volfunc(X_1), |X_2|)^\top$ or, in other words,
$(\sigma_t,|X_t|)^\top$ for any $t$. To derive $\vtrans_1$ we  must
find the composition of functions $\vtrans_1(u) = F_{\sigma} \circ
\volfunc \circ F_X^{-1}(u)$.
By using the fact that the distribution function of $\sigma_t$ is
$ F_\sigma(s) = F_X\left(\sqrt{(s^2-\alpha_0)/\alpha_1}\right) -F_X\left(-\sqrt{(s^2-\alpha_0)(\alpha_1+\gamma_1)}\right)$
we obtain after some calculation that
 \begin{displaymath}
    \vtrans_1(u) =
\begin{cases}
F_X\left(-\left(1 + \frac{\gamma_1}{\alpha_1}\right)^{\frac{1}{2}} F_X^{-1}(u) \right) -
u, &u \leq 0.5, \\
 u - F_X\left(-\left(1 + \frac{\gamma_1}{\alpha_1}\right)^{-\frac{1}{2}} F_X^{-1}(u)\right) ,& u > 0.5,
\end{cases}
\end{displaymath}
which is an asymmetric function for $\gamma_1 \neq 0$.
\end{example}

\subsection{Approaches to approximating GARCH copulas}\label{sec:appr-mimick-garch}

\newtext{Despite the limited number of GARCH(1,1)-type models for which we can
prove that pair copulas must take the exact form $c(u,v) =
c^*(\vtrans_1(u),\vtrans_2(v))$, this construction remains of interest as
a flexible means of approximating the cross-like behaviour of the
implied GARCH pair copulas. Our approach to this problem is to use simulation
studies to first identify copula densities of the form $c^*(|2u-1|,
|2v-1|)$ to approximate the jointly symmetric copulas of symmetric GARCH
processes and then to
consider how asymmetric
v-transforms $\vtrans_1$ and $\vtrans_2$ can accommodate the
asymmetries introduced by leverage and asymmetric innovations.}


\newtext{
There are two ways of identifying
appropriate copula densities $c^*$ for building models with densities of the
form $c^*(|2u-1|,|2v-1|)$: one is the explicit method using part (ii)
of Theorem~\ref{theorem:v-and-inverse-v} where
we simply choose a candidate for the copula $C^*$ based on the insights we have collected
so far; the other is the implicit method where we consider constructions that are known
to yield a jointly symmetric copula and then infer the form of $c^*$.}

Explicit choices of $C^*$ might be the Gumbel,
Joe or
180-degree-rotated (survival) Clayton copulas. These all have upper
tail dependence and may be capable of approximating the copulas of the
absolute value processes of symmetric GARCH(1,1)-type models. Recall
that the densities of such copulas can be thought of as blown-up
versions of the upper-right quadrants of the copula densities in
Figure~\ref{fig:1}. \newtext{We impose one technical condition on
  copulas $C^*$ which 
  facilitates maximum likelihood inference for models with densities
  of the form $c^*(\vtrans_1(u),\vtrans_2(v))$: we
  require that $c^*(0,0) <
  \infty$ which ensures that the log-likelihood is properly defined for observations
  at the fulcrum values.}

As examples of the
implicit construction we could consider:
\begin{enumerate}[label={(\arabic*)},itemindent=1em]
\item the copula of a spherically symmetric distribution;
\item an equal (50/50) mixture of a radially symmetric copula and its
90-degree rotation;
\item an equal (25/25/25/25) mixture of any copula and its rotations through 90, 180 and
270 degrees.
\end{enumerate}

An example of (1) is the copula of a spherical t distribution, i.e.~a
bivariate t distribution located at the origin with identical
marginal distributions and correlation equal to zero. We call this
copula the spherical t copula and denote it
$C^t_\nu$.  Since a spherical t-distributed random vector $(Y,Z)^\top$
is jointly symmetric, so is its copula. As the
degree-of-freedom parameter tends to zero, this copula becomes
increasingly cross-shaped and converges to an equal mixture of the
comonotonicity and countermonotonicity
copulas~\citep{bib:mcneil-neslehova-smith-22}.
We denote the copula of $(|Y|,|Z|)^\top$ by $C^{|t|}_\nu$ and call it
the absolute spherical t copula (ast). By~\eqref{eq:8} its density must satisfy
\begin{displaymath}
    c^{|t|}_\nu(u,v) = c^t_\nu\left(\frac{1-u}{2},\frac{1-v}{2}\right) =
  c^t_\nu\left(\frac{1-u}{2},\frac{1+ v}{2}\right) =
  c^t_\nu\left(\frac{1+u}{2},\frac{1-v}{2}\right) = c^t_\nu\left(\frac{1+u}{2},\frac{1+v}{2}\right).
\end{displaymath}
This one-parameter copula, which appears to be new, interpolates between independence (as $\nu \to \infty$) and
comonotonicity (as $\nu \to 0$). It has upper tail dependence like the
Gumbel, Joe and survival Clayton
copulas. More information is collected in Section~\ref{sec:ast-copula} in the
Supplementary Material.


\newtext{An example of (2) is an equal mixture of the t copulas $C^t_{\nu,\rho}$
and $C^t_{\nu,-\rho}$ for some $\rho \geq 0$, which is the jointly symmetric
special case of a copula considered
by~\citet{bib:loaiza-maya-et-al-18}; the t mixture reduces to the
spherical t copula $C^t_\nu$ when $\rho=0$. Construction (3), which is the bivariate version
of a
higher-dimensional family used
in~\citet{bib:oh-patton-16}, could be based on the
Gumbel, Clayton or Joe copulas, to take advantage of the strong tail
dependence of these copulas in either the upper-right or lower-left
corners. The Gumbel case of (3) is the jointly symmetric special case of the
convex Gumbel copula proposed by~\citet{bib:junker-may-05}
and used
by~\citet{bib:loaiza-maya-et-al-18} in their analysis}.

\newtext{
So far none of these constructions address the issue of
non-exchangeability which was clearly evident in certain examples in
Figure~\ref{fig:1}.
A possible construction for non-exchangeable copulas was proposed
by~\cite{bib:khoudraji-95}
 and we use a version of this construction in which we generalize an exchangeable copula
$C_\theta$ with parameter $\theta$ 
by setting
\begin{equation}
  \label{eq:6}
  C_{\theta,a_1,a_2}(u,v) = C_\theta(u^{1-a_1}, v^{1-a_2}) u^{a_1}v^{a_2}
\end{equation}
for $a_1 \in [0,1]$ and $a_2\in [0,1]$.
The resulting
copula is
non-exchangeable when $a_1 \neq a_2$. }
We apply the Khoudraji extension to copulas before either mixing them
with their rotations or applying inverse v-transformation.


\subsection{Simulation study of the approximation of $C_1$}\label{sec:results-siul-study}

\newtext{To get some insight into the
  approximations that work best for the pair copulas implied by
  GARCH-type models, we carry out a simulation study in which we focus
  on estimation of the first-order copula $C_1$. Samples of
50000 values are
generated from ARCH(1)-type
and GARCH(1,1)-type processes using the parameter values and innovation
distributions summarized in Table~\ref{tab:parameters}.
The copula $C_1$ is estimated by fitting a first-order D-vine
to the data using the two-stage method developed by~\cite{bib:chen-fan-06} in
which the marginal distribution of the process is estimated  by the
rescaled empirical distribution function (EDF) to produce pseudo-copula data. The explicit and
implicit constructions of Section~\ref{sec:appr-mimick-garch} are
tested and fitted models are
compared according to their AIC values. Note that, the first-order
  copulas $C_1$ are the only pair copulas for which we can produce
  pseudo-copula data using only the rescaled EDF; we address 
  approximations for higher-order copulas $(C_k)_{k>1}$ indirectly in Section~\ref{sec:examples} by fitting
  higher-order simplified D-vines to simulated GARCH data.}

\newtext{Under the assumption that the
model is first-order Markov with known copula,~\citet{bib:chen-fan-06} show consistency
and asymptotic normality of the resulting parameter estimates subject to technical conditions which include that the process should be $\beta$-mixing at
a geometric rate. All the GARCH(1,1)-type
processes we consider in our simulation satisfy this
mixing condition~\citep{bib:francq-zakoian-06}. However, the first-order Markov assumption is only true for ARCH(1)-type processes
and our models contain approximating copulas rather than correctly
specified copulas. \cite{bib:nagler-kruger-min-20} supply some very
general results which can be applied to
semiparametric
inference for simplified D-vines of higher order when the
pair copulas are misspecified. Their results suggest
that consistency and asymptotic normality of the procedure 
hold in the misspecified model, albeit convergence is to the parameter
values that solve the model score equations, which coincide with the
true values in a correctly specified model.
}

\subsubsection{Jointly symmetric models}\label{sec:jointly-symm-case}

                               \begin{table}[!ht]
                                 \centering
                                 \footnotesize
                                 \addtolength{\tabcolsep}{-0.2em}
  \begin{tabular}{llrrrrrr} \toprule
    & Model & \multicolumn{3}{c|}{ARCH(1)} &
                                             \multicolumn{3}{c}{GARCH(1,1)} \\
     \cmidrule(lr){3-5} \cmidrule(lr){6-8}
   Copula (form of density) & $p$ \textbar\ $(\epsilon_t)$ & Gauss & t4 &
                                                \multicolumn{1}{c|}{t2.5}
                                 & Gauss & t4 & t2.5 \\ \midrule
$c^t_\nu(u,v) = c^{|t|}_\nu(|2u-1|, |2v-1|) $  & 1 & 528 & 0 & 126 & 103 & 7 & 95 \\ 
 $\tfrac{1}{2} \left(c^t_{\rho,\nu}(u,v) +  c^t_{-\rho,\nu}(u,v)\right)$ & 2 & 530 & 2 & 128 & 101 & 9 & 97 \\ \midrule
  $\tfrac{1}{4}  \left(c_\theta^{\text{Cl}}(u,v) + c_\theta^{\text{Cl},90}(u,v) + c_\theta^{\text{Cl},180}(u,v) + c_\theta^{\text{Cl},270}(u,v)\right)$ & 1 & 1676 & 426 & 189 & 491 & 291 & 145 \\ 
  $\tfrac{1}{4}  \left(c_\theta^{\text{Gu}}(u,v) + c_\theta^{\text{Gu},90}(u,v) + c_\theta^{\text{Gu},180}(u,v) + c_\theta^{\text{Gu},270}(u,v)\right)$ & 1 & 913 & 150 & 182 & 228 & 120 & 156 \\ 
  $\tfrac{1}{4}  \left(c_\theta^{\text{J}}(u,v) + c_\theta^{\text{J},90}(u,v) + c_\theta^{\text{J},180}(u,v) + c_\theta^{\text{J},270}(u,v)\right)$ & 1 & 1982 & 593 & 276 & 666 & 431 & 232 \\ \midrule
  $c_\theta^{\text{Cl},180}(|2u-1|, |2v-1|)$ & 1 & 529 & 146 & 292 & 174 & 105 & 216 \\ 
  $c_\theta^{\text{Gu}}(|2u-1|, |2v-1|)$ & 1 & 1167 & 444 & 379 & 422 & 307 & 290 \\ 
  $c_\theta^{\text{J}}(|2u-1|, |2v-1|)$ & 1 & 650 & 50 & 145 & 144 & 65 & 131 \\ \midrule
  $\tfrac{1}{4} \left(c_{\theta,a_1,a_2}^{\text{Cl}}(u,v) + c_{\theta,a_1,a_2}^{\text{Cl},90}(u,v) + c_{\theta,a_1,a_2}^{\text{Cl},180}(u,v) + c_{\theta,a_1,a_2}^{\text{Cl},270}(u,v)\right)$  & 3 & 1170 & 430 & 102 & 385 & 288 & 70 \\ 
   $\tfrac{1}{4} \left(c_{\theta,a_1,a_2}^{\text{Gu}}(u,v) + c_{\theta,a_1,a_2}^{\text{Gu},90}(u,v) + c_{\theta,a_1,a_2}^{\text{Gu},180}(u,v) + c_{\theta,a_1,a_2}^{\text{Gu},270}(u,v)\right)$  & 3 & 854 & 154 & 66 & 199 & 124 & 64 \\ 
  $\tfrac{1}{4} \left(c_{\theta,a_1,a_2}^{\text{J}}(u,v) + c_{\theta,a_1,a_2}^{\text{J},90}(u,v) + c_{\theta,a_1,a_2}^{\text{J},180}(u,v) + c_{\theta,a_1,a_2}^{\text{J},270}(u,v)\right)$  & 3 & 1680 & 597 & 178 & 632 & 435 & 151 \\ \midrule
   $c_{\theta,a_1,a_2}^{\text{Cl},180}(|2u-1|, |2v-1|)$ & 3 & 46 & 147 & 156 & 89 & 98 & 119 \\ 
  $c_{\theta,a_1,a_2}^{\text{Gu}}(|2u-1|, |2v-1|)$& 3 & 598 & 448 & 221 & 330 & 303 & 159 \\ 
   $c_{\theta,a_1,a_2}^{\text{J}}(|2u-1|, |2v-1|)$ & 3 & 93 & 54 & 11 & 47 & 63 & 18 \\ 
   $c_{\nu,a_1,a_2}^{|t|}(|2u-1|, |2v-1|)$ & 3 & 0 & 3 & 0 & 0 & 0 & 0 \\ \bottomrule
    \end{tabular}
    \caption{Comparison of approximating copulas for the first-order
      copula of symmetric ARCH(1) and GARCH(1,1)
      processes. Estimates are obtained by fitting first-order D-vines
      using semiparametric method of~\cite{bib:chen-fan-06}.
      Each column shows the differences between AIC
    and the minimum value of AIC for that column; thus a 0
    indicates the model with lowest AIC. The superscripts Cl, Gu
    and J refer to Clayton, Gumbel and Joe copulas while 90, 180 and
    270 denote rotations.  Densities $c_{\theta,a_1,a_2}$ correspond
    to the Khoudraji construction in~\eqref{eq:6}.
    \label{tab:1}}
\end{table}

The results in Table~\ref{tab:1} relate to classical ARCH(1) and GARCH(1,1) models
with symmetric innovations. The innovation distributions considered are Gaussian, Student t4 and
Student t2.5, giving six different datasets in total. The spherical t
copula $C^t_\nu$
works quite well in all cases (despite its exchangeability) and is the
favoured model for the case of ARCH(1) with t4 innovations. It is
notable that the spherical t copula generally outperforms the mixture of two t
copulas; in the latter the estimate of $\rho$ is very close to zero
suggesting the model is poorly specified for this application.
For the other 
datasets the favoured copula has density of the form $c^*(|2u-1|, |2v-1|)$ where $C^*$ is the Khoudraji extension~\eqref{eq:6} of the
ast copula; the second-best copula is the model where $C^*$ is the Khoudraji extension
of either the Joe copula (four cases) or the survival Clayton copula (one case). Models constructed using explicit inverse
v-transformation are more effective than
mixture copulas
constructed according to approaches (2) and (3) of Section~\ref{sec:appr-mimick-garch}. While the Gumbel copula gives the best mixtures, the Joe and
survival Clayton copulas give better inverse-v-transformed models than
the Gumbel copula.

                              
\subsubsection{Models with asymmetries}

\newtext{When models contain asymmetric innovations, and potentially
  also leverage, 
we consider approximating GARCH pair copulas using the construction $c^*(\vtrans_1(u),
\vtrans_2(v))$ with asymmetric v-transforms $\vtrans(u;\delta,\kappa)$
taken from the
family~\eqref{eq:2_v}, which includes the
linear special case $\vtrans(u;\delta)$.}
\newtext{Section~\ref{sec:simul-study-asym} of the
Supplementary Material
contains extensive results for
ARCH(1) and GARCH(1,1)
models with skewed t innovations, with and without leverage effects.
The most important observation in these analyses is that, when two
different asymmetric v-transforms $\vtrans_1(u)$ and $\vtrans_2(v)$
are used, the Khoudraji
extension of the copula $C^*$ adds very little. This is a helpful insight from a computational point of view because
evaluating the density of the Khoudraji extension
requires, not only the density $c^*$ of the parent copula, but also
the function
$C^*$ itself, leading to much slower evaluation (particularly in the
ast case). The best models are of the two-parameter form
$c^*(\vtrans(u;\delta_1,\kappa_1),\vtrans(v;\delta_2,\kappa_2))$ with
$c^*$ taken to be $c^{|t|}_\nu$ or $c^{\text{J}}_\theta$. However
models of the form $c^*(\vtrans(u;\delta_1),\vtrans(v;\delta_2))$ with
two asymmetric linear v-transforms
already offer a large improvement on jointly symmetric models and offer
the advantages of greater parsimony and greater tractability, as we
discuss in Section~\ref{sec:pract-issu-constr}. Moreover, the analyses 
of
Section~\ref{sec:examples} will show that, in the presence of
asymmetry, models with two asymmetric linear
v-transforms tend to give better estimates of $C_1$ than mixtures of t
copulas or convex Gumbel copulas.
}


\section{Simplified D-vines for stochastic volatility}\label{sec:simplified-D-vines}

\subsection{Proposal for a simplified D-vine
  structure}\label{sec:pract-issu-constr}

\newtext{In this section we use insights from
  Sections~\ref{sec:arch-garch-copulas}
  and~\ref{sec:copula-dens-constr} to propose a class of
  simplified D-vines that behave like GARCH(1,1) processes and 
offer potential alternative models for financial
time series.}
\newtext{A guiding principle in constructing our models is parsimony.
  GARCH(1,1) models have relatively few parameters and an effective model
that incorporates both leverage and skewed 
innovations can be created with six parameters ($\alpha_0$, $\alpha_1$,
$\beta_1$, $\gamma_1$, $\xi$, $\lambda$). The first four
control the volatility equation and the final two describe the shape
and skew of the
innovation distribution; a seventh parameter $\mu$ could be added for
a non-zero mean. With a handful of parameters, the GARCH(1,1)
model is able to model the persistence of volatility observed in
financial data in a way
that low-order ARCH($p$) models are not.
To achieve a similar effect in D-vine models, we construct
sequences of pair copulas $(C_k)_{k \in \N}$ using three main ideas.}
\newtext{
  \begin{enumerate}
\item We choose pair copula densities of the form $c_k(u,v) = c^*(\vtrans(u;\delta_{1}),
  \vtrans(v;\delta_{2});\theta_k)$, based on two linear v-transforms
  and a copula family $C^*(u,v;\theta)$ satisfying $ c^*(0,0;\theta) < \infty$ which can model the full range of positive
  dependence between independence and comonotonicity for different
  values of the parameter $\theta$ in a parameter space $\Theta$;
  examples are the ast, Joe and survival Clayton copulas. This construction offers
  flexibility at the cost of only three
  parameters: $\theta_k$ controls the strength of dependence
  while $\delta_{1}$ and $\delta_{2}$ accommodate possible
  asymmetries. 
 \item We model the parameters $(\theta_k)_{k\in\N}$ as a
   parametric function of the lag $k$ so that $\theta_k = g(k,
   \bm{\vartheta})$ where $\bm{\vartheta}$ is a low-dimensional
   parameter vector. More details of this parameterisation are given below.
   \item We assume that the asymmetry parameters $\delta_{1}$ and
   $\delta_{2}$ are constant for all lags $k$. This strong
   assumption is
   based on the observation that these parameters are 
   equal to 0.5 at all lags in jointly symmetric models. It is also partly
   influenced by the GARCH(1,1) model where a single parameter $\gamma_1$
   controls leverage and a further parameter $\lambda$ controls
   asymmetry of innovations.
 \end{enumerate}
 }

 \newtext{There is another practical reason for choosing
 linear v-transforms. After making the simplifying assumption, the
copula
 density implied by the D-vine in equation~\eqref{eq:27} takes the form
                             \begin{equation}\label{eq:15}
                               c_{U_1,\ldots,U_n}(u_1,\ldots,u_n) = 
  \prod_{k=1}^{n-1}\prod_{j=k+1}^{n}
                  c_{k}(F_{U_{1}|\bm{U}_{[2:k]}}(u_{j-k}\mid
                  \bm{u}_{[j-k+1 : j-1]}
                  ),  F_{U_{k+1}|\bm{U}_{[2:k]}}(u_{j} \mid   \bm{u}_{[j-k+1 : j-1]})).
                \end{equation}
 It is well known that the 
 the evaluation of the conditional distribution functions
 $F_{U_{1}|\bm{U}_{[2:k]}}$ and $F_{U_{k+1}|\bm{U}_{[2:k]}}$ in~\eqref{eq:15} is straightforward if we have a method of computing the so-called
 $h$-functions of the copulas $C_k$, which are the partial derivatives
 $h_{k,1}(u,v) = \frac{\partial}{\partial u} C_k(u,v)$ and
 $h_{k,2}(u,v) = \frac{\partial}{\partial v} C_k(u,v)$; see, for
 example,~\cite{bib:aas-czado-frigessi-bakken-09} or~\cite{bib:smith-min-almeida-czado-10}. This is
 easy in the linear case but,
in the non-linear case, cumbersome
numerical integration seems in general to be necessary. More details of the formulas
for the linear case are found in Section~\ref{sec:calc-h-funct} of
the Supplementary Material while in Section~\ref{sec:link-vt-d} we show
that, in the special case of two identical linear v-transforms,
D-vines with inverse-v-transformed copulas also belong to
the family of vt-D-vines proposed in~\cite{bib:bladt-mcneil-21}.}

\newtext{The relationship $\theta_k = g(k,\bm{\vartheta})$ between the
  parameters $\theta_k$ and the lag $k$ is determined
  implicitly using
  a method developed in~\cite{bib:bladt-mcneil-22} and~\cite{bib:bladt-dias-han-mcneil-25}. Let
  $\tau^*: \Theta \to [0,1)$ be the invertible function that maps the parameter value
  of the copula $C^*(u,v;\theta)$ into the Kendall's tau measure
  of association and write $(\tau^{*})^{-1}$ for its inverse.
  Let $(\omega_k(\bm{\vartheta}))_{k\in\N}$ denote the
  partial autocorrelation (pacf) of
  a classical Gaussian ARMA($p$,$q$) process, where $\bm{\vartheta}$ denotes the vector
  of $p+q$ parameters and where we assume that the parameters are
  constrained to ensure
  $\omega_k(\bm{\vartheta}) \in [0,1)$ for all $k \in \N$. We determine
  the sequence of parameters $(\theta_k)_{k\in\N}$ by setting
$\theta_k= (\tau^{*})^{-1}\left(2\pi^{-1}\arcsin(\omega_k(\bm{\vartheta}))\right)$ for all $k$. This means that
we equate the Kendall's tau values of the copula sequence
$(C^*(u,v;\theta_k))_{k\in \N}$ with the
Kendall's tau values of the sequence of pair copulas that characterise
the Gaussian ARMA($p$,$q$)
model. In practice we favour the ARMA(1,1) model based on the fact
that
the autocorrelation function of the squared values of the
classical GARCH(1,1) process decays like an ARMA(1,1) process, as
noted in Section~\ref{sec:class-garch11-plot}. We also consider AR($p$)
models to faciliate comparison with D-vine($p$) models in the
literature; these lead to the estimation of $p$ parameters $\theta_1,\ldots,\theta_p$.}

\subsection{Examples}\label{sec:examples}

\newtext{To illustrate the potential of the D-vine models proposed in
Section~\ref{sec:pract-issu-constr}, we provide four examples, three
using simulated data from the GARCH(1,1)-type processes documented in
Table~\ref{tab:parameters}
and one using real daily return data for the USD/AUD exchange rate, as previously analysed
in~\cite{bib:loaiza-maya-et-al-18}.
The simulated datasets consist of 10000 values from the GARCH(1,1)
model with (i) t4 innovations, (ii) skewed t4 innovations and (iii)
skewed t4 innovations and leverage.
The exchange-rate returns consist of 3669 values from 2 January
2001 to 7 August 2015.}

\newtext{Models are fitted using the
generalization of the two-stage method of~\cite{bib:chen-fan-06} to
higher-order D-vines. For the simulated data, the marginal
distribution function is estimated by the empirical distribution
function while, for the real data, we use the same adaptive kernel
density estimator as~\cite{bib:loaiza-maya-et-al-18}
to facilitate comparison of results. The maximum-likelihood fit of the parametric copula model to the
pseudo-copula observations can be accomplished in a single step, rather
than by using the common stepwise procedure where the sequence of
$C_1,\ldots,C_p$ defining a D-vine($p$) model is estimated in $p$ successive steps.}

\newtext{To all datasets we fit D-vine(1) and D-vine(5) models based on t
copulas (T1, T5) and convex Gumbel mixtures (B1, B5), as used
in~\cite{bib:loaiza-maya-et-al-18}.
For the new models of Section~\ref{sec:pract-issu-constr}, we use
inverse-v-transformed Joe, survival Clayton and ast copulas and
parameterizations based on
the partial acfs of Gaussian AR(1), AR(5)
and ARMA(1,1) processes. The copula sequence in the ARMA(1,1) case is
truncated at 40 for the real data and 20 for the simulated data, since
persistence of volatility is weaker for the latter. For the empirical data only, we add D-vine(1)
and D-vine(5) models based on
mixtures of two t copulas (A1, A5) to facilitate further comparison
with models in~\cite{bib:loaiza-maya-et-al-18}. For the simulated data,
we add a D-vine(2)
model with two t copulas, as suggested by~\cite{bib:zhao-shi-zhang-22}
for GARCH approximation. Since the different models have very
different numbers of parameters, we compare models using both AIC and
BIC, where the latter makes greater adjustments for overfitting.}

\newtext{For all four datasets the best model, as selected by both the AIC
  and BIC information criteria, is ast-ARMA(1,1) models and we
  show estimated parameter values and standard errors in
  Table~\ref{tab:D-vines}.
It may be noted that the estimates of
$\delta_1$ and $\delta_2$ are both close to 0.5 for the symmetric GARCH
model, but more offset in the other cases.
The estimates of the AR and MA parameters ($\phi$ and $\psi$) are
larger in absolute value for the real data than for the GARCH
simulations, implying a much slower decay of the partial acf and greater persistence in
volatility.
We also give the parameters $\nu_1$ and $\nu_2$ of the ast copulas at
the first two lags as well as the corresponding values of Kendall's tau,
$\tau_1$ and $\tau_2$.
}

\begin{table}[htb]
  \centering
  \footnotesize
\begin{tabular}{lcccccccc}
  \toprule
 Dataset & $\phi$ & $\psi$ & $\delta_1$ & $\delta_2$ &
                                                                 $\nu_1$ & $\nu_2$ & $\tau_1$ & $\tau_2$ \\ 
  \midrule
 GARCH(1,1), t4 innovations & 0.780 & -0.614 & 0.510 & 0.515 & 2.96 & 5.07 & 0.132
                                                                                              & 0.079 \\ 
   & (0.012) & (0.016) & (0.017) & (0.016) &  &  &  &  \\ 
 GARCH(1,1), skew t4 innovations& 0.786 & -0.614 & 0.493 & 0.427 & 2.82 & 4.85 &
                                                                     0.139 & 0.082\\ 
 & (0.012) & (0.017) & (0.016) & (0.016) &  &  &  &  \\ 
 GARCH(1,1), skew t4 + leverage & 0.790 & -0.607 & 0.644 & 0.434 & 2.60 & 4.56 &
                                                                     0.150 & 0.088 \\ 
         & (0.013) & (0.018) & (0.015) & (0.014) &  &  &  &  \\
  USD/AUD daily exchange-rate returns & 0.982 & -0.934 & 0.528 & 0.446
                                                     & 5.82 &6.59 & 0.069& 0.061\\
  & (0.0029) & (0.0073) & (0.024) & (0.023) & & & & \\
   \bottomrule
\end{tabular}
\caption{Parameter estimates and (standard errors) for D-vines based
  on ast copula and partial acf of ARMA(1,1) model fitted to four
  datasets. Models have four primary parameters, $\phi$ and $\psi$
  from the ARMA(1,1) partial acf and $\delta_1$ and $\delta_2$ from
  the linear v-transforms; $\nu_1$ and $\nu_2$ are implied parameters of the ast copulas at
the first two lags and $\tau_1$ and $\tau_2$ are the corresponding values of Kendall's tau.
\label{tab:D-vines}}
\end{table}





\newtext{The full model comparisons for the USD/AUD data are found in
  Table~\ref{table:empirical} and we give more commentary on these
  results since they are broadly representative of all four examples; corresponding tables for the three
  simulated datasets are given in Section~\ref{sec:detailed-results} of the Supplementary Material.
Comparing D-vine(1) models, the copulas based on
inverse-v-transformation of the Joe and ast copulas lead to higher
log-likelihoods and lower AIC and BIC values than the convex Gumbel mixture
model B1. This suggests the
fundamental shape of these copulas is slightly better than the
mixture copulas for modelling $C_1$. Comparing D-vine(5)
models, B5 gives the highest log-likelihood value but is a very
expensive model to fit in terms of the number of parameters and the time
required to maximize the
likelihood\footnote{\cite{bib:loaiza-maya-et-al-18} fit this model
  using MCMC in the Bayesian paradigm and we use their posterior mean
  estimates as starting values in maximum likelihood estimation.}. The D-vine(5)
models based on inverse-v-transformed copulas have less than a third
of the parameters, are quicker to fit and lead to lower values for the
information criteria. However, our favoured D-vine(40) models based on the partial acf of
an ARMA(1,1) model lead to the largest log-likelihood values and the
lowest AIC and BIC values at the cost of only 4 parameters. }

\newtext{\cite{bib:loaiza-maya-et-al-18} show that their favoured
  models (B5 and T5) lead to one-step-ahead forecasts of conditional
  quantiles (value-at-risk) that are more accurate than the ones
  produced by classical symmetric GARCH models and are not
  rejected by Christoffersen's test of conditional coverage at the
  99\% level~\citep{bib:christoffersen-98}. This is also true of our
  favoured models (Joe-AR(5), ast-AR(5), Clayton180-ARMA(1,1) and
  ast-ARMA(1,1)); results are documented in
  Section~\ref{sec:analys-var-exce} of the Supplementary Material.
  }

\begin{table}[htb]
\centering
\begin{tabular}{lrrrrr}
\toprule
  Model & D-vine order & No.\ pars & Loglik & AIC & BIC\\
  \midrule
  T1 & 1 & 2 & 36.72 & -69.43 & -57.01 \\
  A1 & 1 & 5 & 36.71 & -63.42 & -32.38 \\
  B1 & 1 & 5 & 38.70 & -67.40 & -36.36 \\
  Joe-AR(1) & 1 & 3 & 41.17 & -76.33 & -57.71 \\
  Clayton180-AR(1) & 1 & 3 & 38.26 & -70.51 & -51.89\\
  ast-AR(1) & 1 & 3 & 41.13 & -76.27 & -57.64 \\
      \midrule
  T5 & 5 & 10 & 153.04 & -286.08 & -224.01 \\
  A5 & 5 & 25 & 155.63 & -261.26 & -106.07 \\
  B5 & 5 & 25 & 163.09 & -276.17 & -120.98 \\
  Joe-AR(5) & 5 & 7 & 154.13 & -294.26 & -250.81 \\
  Clayton180-AR(5) & 5 & 7 & 151.32 & -288.63 & -245.18\\
  ast-AR(5) & 5 & 7 & 155.72 & -297.45 & -253.99 \\
  \midrule
  Joe-ARMA(1,1)        & 40 & 4 & 247.21 & -486.42 & -461.59 \\
  Clayton180-ARMA(1,1) & 40 & 4 & 255.29 & -502.59 & -477.76 \\
  ast-ARMA(1,1)        & 40 & 4 & 261.91 & -515.83 & -491.00 \\
\bottomrule
\end{tabular}
\caption{Comparison of D-vine models 
  in~\cite{bib:loaiza-maya-et-al-18} and the D-vine models of
  Section~\ref{sec:simplified-D-vines} fitted to return data for the
  USD/AUD exchange rate for the period 2 January 2001 to 7 August
  2015; the marginal distribution is estimated nonparametrically using
  adaptive kernel density estimation. T1, A1
  and B1 are D-vine(1) models based, respectively, on t copulas, mixed t copulas and
  mixed convex Gumbel copulas as used
  in~\cite{bib:loaiza-maya-et-al-18}, while T5, A5 and B5 are
  D-vine(5) extensions. The models following the construction in
  Section~\ref{sec:pract-issu-constr} are described by the name of the copula $C^*$ and the
  name of the AR or ARMA model underlying the parameterisation $\theta_k
  =g(\bm{\vartheta})$; the copula sequence in the ARMA(1,1) case is
  truncated at order 40. \label{table:empirical}}
\end{table}

\section{Conclusion}\label{sec:conclusion}

\newtext{This paper has asked three questions. First, what are the bivariate
copulas that underlie the serial dependencies and conditional
dependencies of pairs of variables $(X_t, X_{t+k})$ in first-order
GARCH(1,1) models? Second, since
the true pair copulas and conditional copulas of GARCH-type
processes are inaccessible for practical use,
can they be approximated by simpler
constructions? Third, can these ideas help us to
construct simplified D-vines that behave like GARCH-type models, but
offer more flexible models for real data showing stochastic volatility?}

With regard to the first question we find a number of
symmetries in the dependence structures of the models with symmetric
innovation distributions, even when leverage effects are introduced,
but these symmetries disappear in the presence of asymmetric
innovations, which are often necessary to model
real-world data in practice. What can always be said is that the
underlying bivariate copulas are cross-shaped, in the sense that there
is a tendency for points to accumulate in all fours corners of the
unit square, and that this tendency increases at shorter lags as the size of the ARCH
effect increases and persists at longer
lags as the aggregate size of the ARCH
and GARCH effects increases.

Motivated by the recognition that v-transforms are implicitly
embedded in the copulas of symmetric GARCH models, we propose
a construction to approximate GARCH pair copulas based on v-transforms and 
positive-dependence copulas with
strong dependence in the upper joint tail.
Moreover, we identify a new copula - the absolute
spherical t copula - which offers a promising basis
for the construction.

\newtext{
  With regard to the final question, we believe that the simplified D-vine models 
  proposed in Section~\ref{sec:pract-issu-constr} offer 
  effective descriptions of GARCH-like behaviour and have 
  potential for modelling volatile financial
  data. Their advantages come not only from the particular
  choice of pair copulas, but also from a method of
  parameterization, using the partial acf of Gaussian ARMA processes,
  which gives models with a similar degree of parsimony to GARCH
  processes. In combination with the flexibility to choose an
  appropriate marginal distribution, these models could be extremely
  effective when data depart from the power-lawed marginal behaviour
  imposed by a GARCH process, as discussed in~\cite{bib:bladt-mcneil-21}.
  However, the proposed models also need more extensive empirical
  testing. It is important to conduct experiments with a wider
  variety of financial return
data, such as stock-price, index, commodity-price and
volatility-index returns. This is the subject of future work.
}

Another issue requiring further research is the construction of
D-vines to model ARMA-GARCH behaviour.
Real-world financial return series sometimes show serial correlation
in the raw data as well as in the squared or absolute value
series and this is usually modelled by combining ARMA and GARCH
effects. In Section~\ref{sec:ar1-model-with} of the Supplementary Material we
consider
an AR-ARCH model and suggest that mixtures
of standard positive-dependence copulas and the pair copulas of this paper could be
successful in addressing combined ARMA and GARCH effects.

\newpage

\begin{center}
  {\Large \textbf{Supplementary Material}}
  \end{center}

  \appendix
  \makeatletter
\renewcommand \thesection{S\@arabic\c@section}
\setcounter{table}{0}
\renewcommand\thetable{S\@arabic\c@table}
\setcounter{figure}{0}
\renewcommand \thefigure{S\@arabic\c@figure}
 \setcounter{equation}{0}
\renewcommand \theequation{S\@arabic\c@equation}
 \setcounter{proposition}{0}
\renewcommand \theproposition{S\@arabic\c@proposition}
\makeatother

\section{Correlation and rank correlation for h-symmetric distributions}\label{sec:corr-rank-corr}
\citet{bib:nelsen-99} notes that jointly symmetric distributions have
zero correlation (when finite second moments permit the measure to be
defined). Perhaps surprisingly, h-symmetry on its own is a sufficient condition for an
absence of both correlation and rank correlation; the same is true of
v-symmetry.
\begin{proposition}
If a random vector $(Y,Z)^\top$ has finite second moments and is
either h-symmetric or v-symmetric then the Pearson
correlation $\rho(Y,Z)$ is zero.
\end{proposition}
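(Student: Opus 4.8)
The plan is to use the fact that equality in distribution of two random vectors lets us equate the expectations of any integrable function applied to them, and to feed in the bilinear map $(y,z)\mapsto yz$. I would treat the h-symmetric case first; the v-symmetric case is obtained by interchanging the roles of $Y$ and $Z$.

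First I would reduce to a centred situation. Since the Pearson correlation is unchanged by a location shift of either coordinate, I may assume without loss of generality that the axis of symmetry is at the origin, so that h-symmetry becomes $(Y,Z)^\top \eqdis (Y,-Z)^\top$. Passing to the marginal of the second coordinate gives $Z \eqdis -Z$, whence $\E[Z] = -\E[Z] = 0$ (the first moment exists because the second does); this is the marginal symmetry already noted in the remarks after Definition~\ref{def:symmetry}. Finiteness of the second moments also guarantees, via the Cauchy--Schwarz inequality, that $\E[|YZ|] \le \sqrt{\E[Y^2]\,\E[Z^2]} < \infty$, so $\E[YZ]$ is finite and the covariance and correlation are well defined.

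The key step is then immediate. Applying the measurable map $(y,z)\mapsto yz$ to both sides of $(Y,Z)^\top \eqdis (Y,-Z)^\top$ yields $YZ \eqdis -YZ$, and taking expectations gives $\E[YZ] = -\E[YZ]$, so that $\E[YZ] = 0$. Combined with $\E[Z]=0$ this produces $\cov(Y,Z) = \E[YZ] - \E[Y]\,\E[Z] = 0$, and hence $\rho(Y,Z) = \cov(Y,Z)/\sqrt{\var(Y)\,\var(Z)} = 0$. The v-symmetric case follows in exactly the same way from $(Y,Z)^\top \eqdis (-Y,Z)^\top$, using $\E[Y]=0$ and $\E[YZ]=-\E[YZ]$.

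There is no substantive obstacle here; the argument is essentially a one-line sign flip once the setup is in place. The only points needing a word of care are that the product $YZ$ is integrable, which is precisely what the finite-second-moment hypothesis delivers through Cauchy--Schwarz, and that centring the margins is legitimate because it leaves the correlation invariant.
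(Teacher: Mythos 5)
Your proposal is correct and follows essentially the same route as the paper's proof: both exploit h-symmetry to show that $\E(Y(Z-b))$ equals its own negative and hence vanishes, the only cosmetic difference being that you centre $Z$ at the outset while the paper carries the shift $b$ through the covariance computation explicitly. Your added remarks on integrability via Cauchy--Schwarz are sound but not needed beyond what the finite-second-moment hypothesis already implicitly supplies in the paper's argument.
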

\begin{proof}
 We give the proof in the case of h-symmetry, the method for
 v-symmetry being identical. If $(Y,Z)^\top$ is h-symmetric about $b$
 then $\E(Z) = b$ and we can write
 $\cov(Y,Z) = \E(YZ) -\E(Y)\E(Z) = \E(Y(Z-b)) + \E(Yb) -\E(Y)b = \E(Y(Z-b))$.
 However, h-symmetry implies
$\E(Y(Z-b)) = \E(Y(b-Z)) = -\E(Y(Z-b))$ implying that $\E(Y(Z-b))=0$ from
which the result follows.
  \end{proof}

\begin{proposition}
If a random vector $(Y,Z)^\top$ has continuous marginal distributions
and is either h-symmetric or v-symmetric, then the Spearman rank
correlation $\rho_S(Y,Z)$ and the Kendall rank correlation $\tau(Y,Z)$
are both zero.
\end{proposition}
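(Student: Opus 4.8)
The plan is to reduce everything to the copula, or equivalently to the probability-integral-transformed vector $(U,V) := (F_Y(Y),F_Z(Z))$, which has uniform margins and joint distribution function equal to the copula $C$. Both $\rho_S$ and $\tau$ are invariant under strictly increasing transformations of the margins, so $\rho_S(Y,Z) = \rho_S(U,V)$ and $\tau(Y,Z) = \tau(U,V)$. The first step is to show that the symmetry of $(Y,Z)$ is inherited by $(U,V)$ in reflected form: I claim h-symmetry of $(Y,Z)$ forces $(U,V) \eqdis (U, 1-V)$, while v-symmetry forces $(U,V) \eqdis (1-U, V)$. I treat h-symmetry, the other case being identical. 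Assuming, without loss of generality as in Lemma~\ref{lemma:symm-copulas}, that the axis of symmetry is at $0$, h-symmetry reads $(Y,Z) \eqdis (Y,-Z)$. Applying the measurable map $(y,z)\mapsto (F_Y(y),F_Z(z))$ to both sides preserves equality in distribution, and continuity together with symmetry of the $Z$-margin gives $F_Z(-Z) = 1 - F_Z(Z)$ almost surely; hence $(U,V) \eqdis (U, 1-V)$, which is precisely h-symmetry of the uniform vector $(U,V)$ about $1/2$.

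For Spearman's rho I can then simply invoke the preceding proposition. Since $U$ and $V$ are bounded they have finite second moments, and $(U,V)$ is h-symmetric about $1/2$, so that proposition yields $\rho(U,V) = 0$; as the Spearman rank correlation of $(Y,Z)$ is exactly the Pearson correlation of $(U,V)$, we obtain $\rho_S(Y,Z)=0$. Equivalently, one can integrate the identity $C(u,v) = u - C(u,1-v)$ of Lemma~\ref{lemma:symm-copulas} over the unit square to obtain $\int_{[0,1]^2} C = 1/4$, whence $\rho_S = 12\int_{[0,1]^2} C - 3 = 0$.

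Kendall's tau requires a different argument, because unlike Spearman's rho it is not a linear functional of $C$, so the integration trick does not transfer directly; this is the main obstacle. I would instead use the concordance form of $\tau$. Let $(U_1,V_1)$ and $(U_2,V_2)$ be independent copies of $(U,V)$; then $\tau = \P\big((U_1-U_2)(V_1-V_2)>0\big) - \P\big((U_1-U_2)(V_1-V_2)<0\big)$, and continuity of the margins makes ties null, so the two probabilities sum to one. Because the reflection $T(u,v)=(u,1-v)$ preserves the law of $(U,V)$, it preserves the product law of the two independent copies, giving $\big((U_1,V_1),(U_2,V_2)\big) \eqdis \big((U_1,1-V_1),(U_2,1-V_2)\big)$. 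Applying this equality to the concordance event and noting that $(1-V_1)-(1-V_2) = -(V_1-V_2)$ turns every concordant configuration into a discordant one; hence the concordance and discordance probabilities coincide and $\tau=0$. The only points needing care are the passage, via continuity, from the distributional symmetry of $Z$ to the almost-sure identity $F_Z(-Z)=1-F_Z(Z)$, and the use of continuity to discard ties in the computation of $\tau$.
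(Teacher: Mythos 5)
Your proposal is correct, and for Kendall's tau it takes a genuinely different route from the paper. The first step --- pushing the h-symmetry of $(Y,Z)^\top$ through the probability integral transform to get $(U,V)^\top \eqdis (U,1-V)^\top$ --- is sound (continuity and symmetry of the $Z$-margin do give $F_Z(-z)=1-F_Z(z)$ pointwise), and it is essentially the same fact the paper extracts from the identity $C(u,v)=u-C(u,1-v)$. For Spearman's rho the paper integrates that identity over the unit square to get $\int_{[0,1]^2}C=\tfrac14$; you instead observe that $\rho_S(Y,Z)$ \emph{is} the Pearson correlation of the bounded, h-symmetric vector $(U,V)^\top$ and invoke the preceding proposition --- a tidy reuse that avoids any computation (and you note the integration route as a fallback, so the two are equivalent here). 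For Kendall's tau the paper works with the functional $\tau=4\E\bigl(C(U,V)\bigr)-1$ and uses $(U,V)^\top\eqdis(U,1-V)^\top$ to derive $\E\bigl(C(U,V)\bigr)=\tfrac12-\E\bigl(C(U,V)\bigr)=\tfrac14$; you instead use the concordance form $\tau=\P\bigl((U_1-U_2)(V_1-V_2)>0\bigr)-\P\bigl((U_1-U_2)(V_1-V_2)<0\bigr)$ and show the reflection $(u,v)\mapsto(u,1-v)$, which preserves the product law of the two independent copies, swaps concordant and discordant configurations. Your argument is more probabilistic and arguably more transparent (the conclusion $\tau=0$ follows from the equality of the two probabilities alone, so the remark about ties is not even needed); the paper's is more in the algebraic spirit of its copula-identity machinery and recycles the same one-line manipulation used for Spearman. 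Both are complete proofs.
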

\begin{proof}
 Again it suffices to give the proof in the case of h-symmetry.
  Let the unique copula of $(Y,Z)^\top$ be denoted $C$.
For Spearman's rank correlation we need to calculate $\rho_S(Y,Z) = 12\int_0^1\int_0^1
(C(u,v) - uv) \rd u \rd v$. From the identity $C(u,v) = u - C(u,1-v)$
we infer that
\begin{eqnarray*}
 \int_0^1\int_0^1
C(u,v)  \rd u \rd v &=&  \int_0^1\int_0^1
u  \rd u \rd v - \int_0^1\int_0^1
C(u,1-v)  \rd u \rd v = \frac{1}{2} - \int_0^1\int_0^1
C(u,y)  \rd u \rd y
\end{eqnarray*}
where we have made the change of variable $y = 1-v$. We must have
$ \int_0^1\int_0^1
C(u,v)  \rd u \rd v = \tfrac{1}{4}$ and then simple calculations give
$\rho_S(Y,Z) = 0$.

In the case of Kendall's rank correlation we need
to compute $\tau(Y,Z) = 4\E (C(U,V)) -1$ where $(U,V)^\top$ is a pair
of variables with distribution function $C$. The
identity $C(u,v) = u - C(u,1-v)$ implies that $(U,V)^\top \eqdis
(U,1-V)^\top$; indeed $(U,V-0.5)^\top \eqdis
(U,0.5 - V)^\top$ since $(U,V)^\top$ is h-symmetric about 0.5. Hence we infer that
\begin{eqnarray*}
 \E (C(U,V)) & = & \E(U) - \E(C(U,1-V)) = \frac{1}{2} - \E(C(U,V))
\end{eqnarray*}
which implies that $\E(C(U,V)) = \tfrac{1}{4}$ and
$\tau(Y,Z) =0$.
\end{proof}

\section{ More illustrations of joint densities and copulas}\label{sec:pictures}

\begin{figure}[h!]
\centering
\includegraphics[width=8cm,height=8cm]{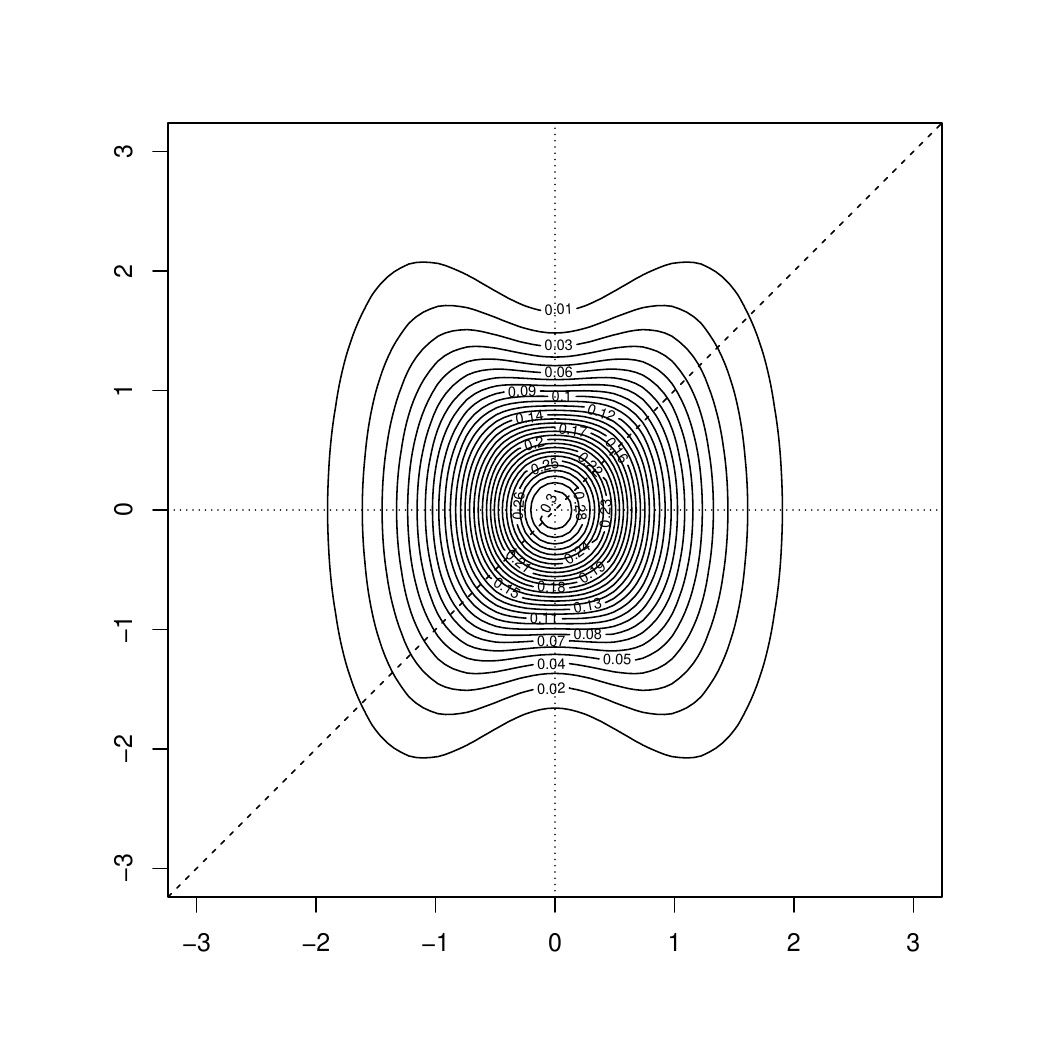}
\includegraphics[width=8cm,height=8cm]{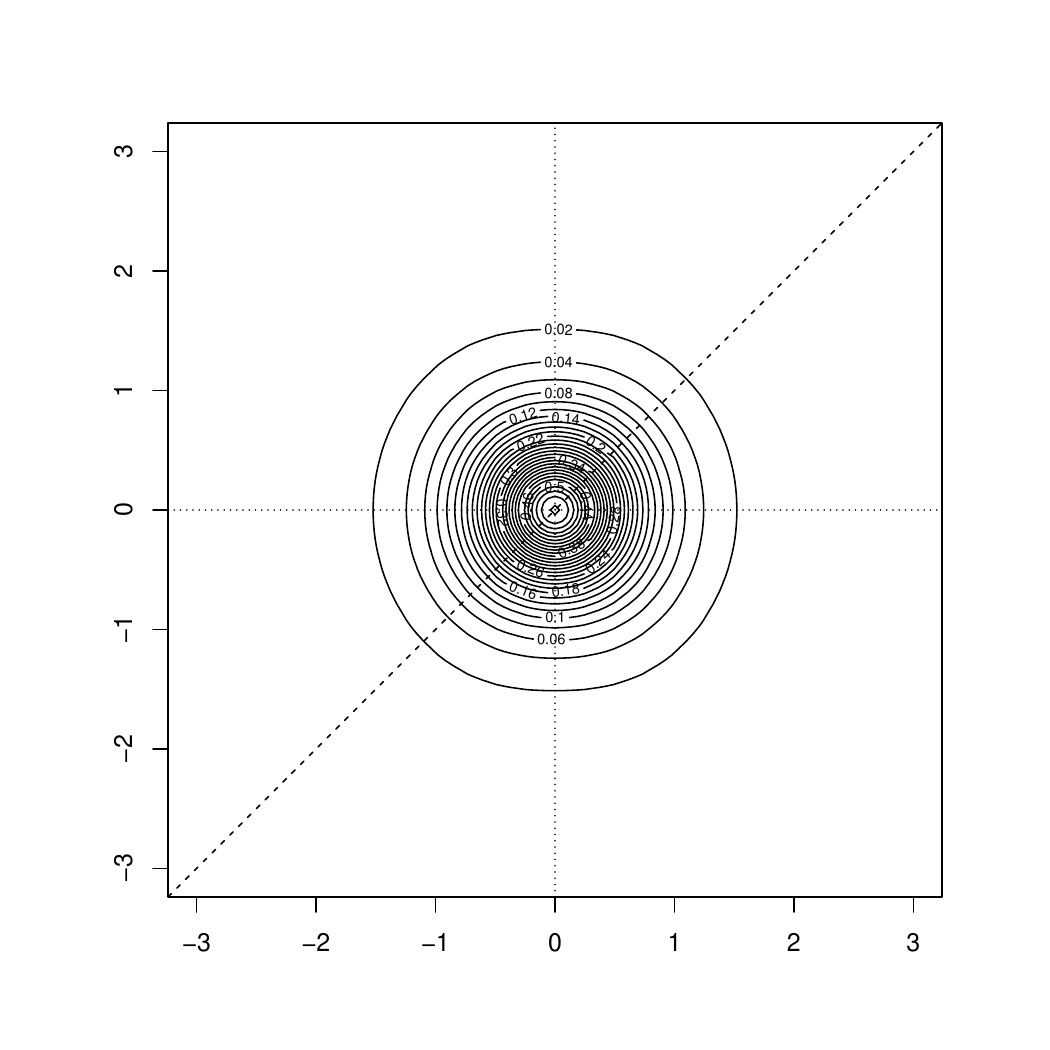}\\
\includegraphics[width=8cm,height=8cm]{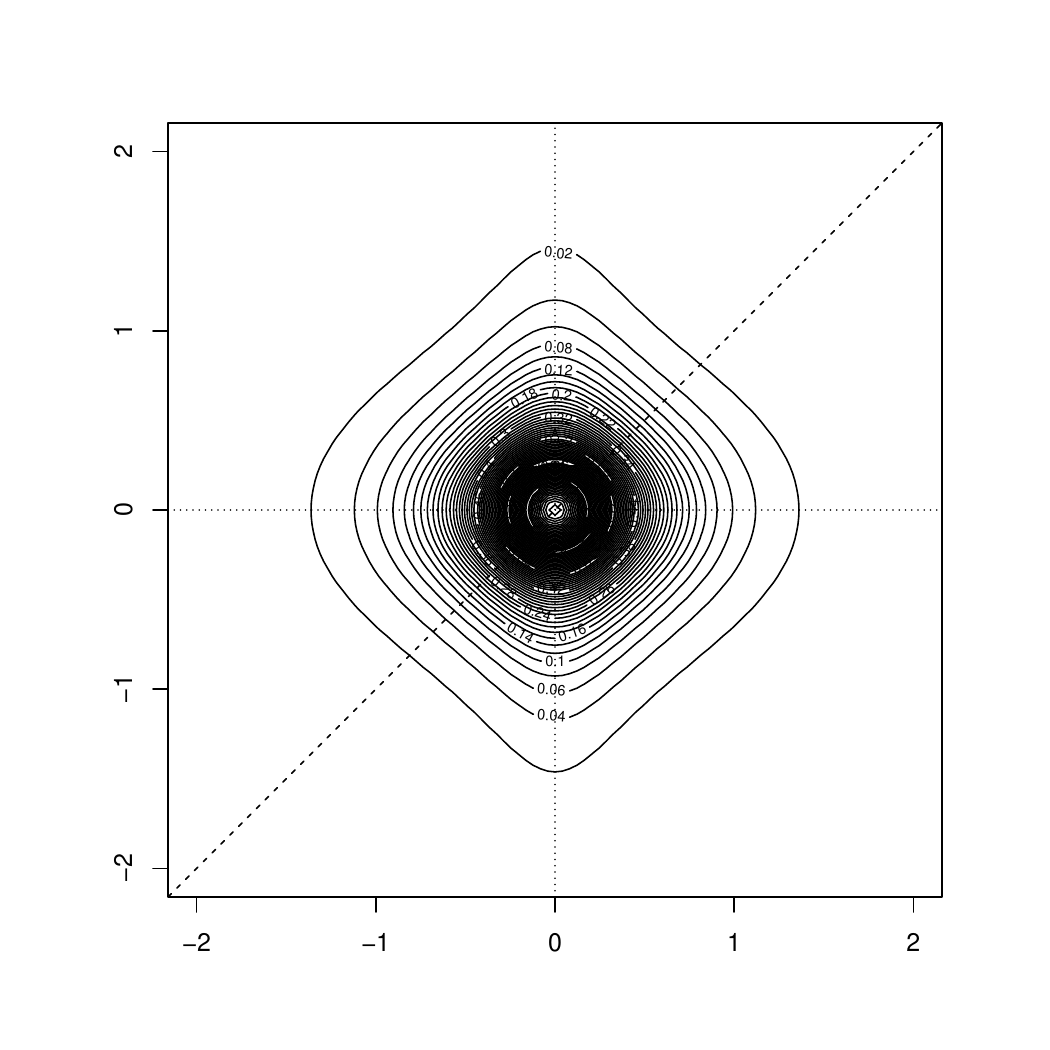}
\includegraphics[width=8cm,height=8cm]{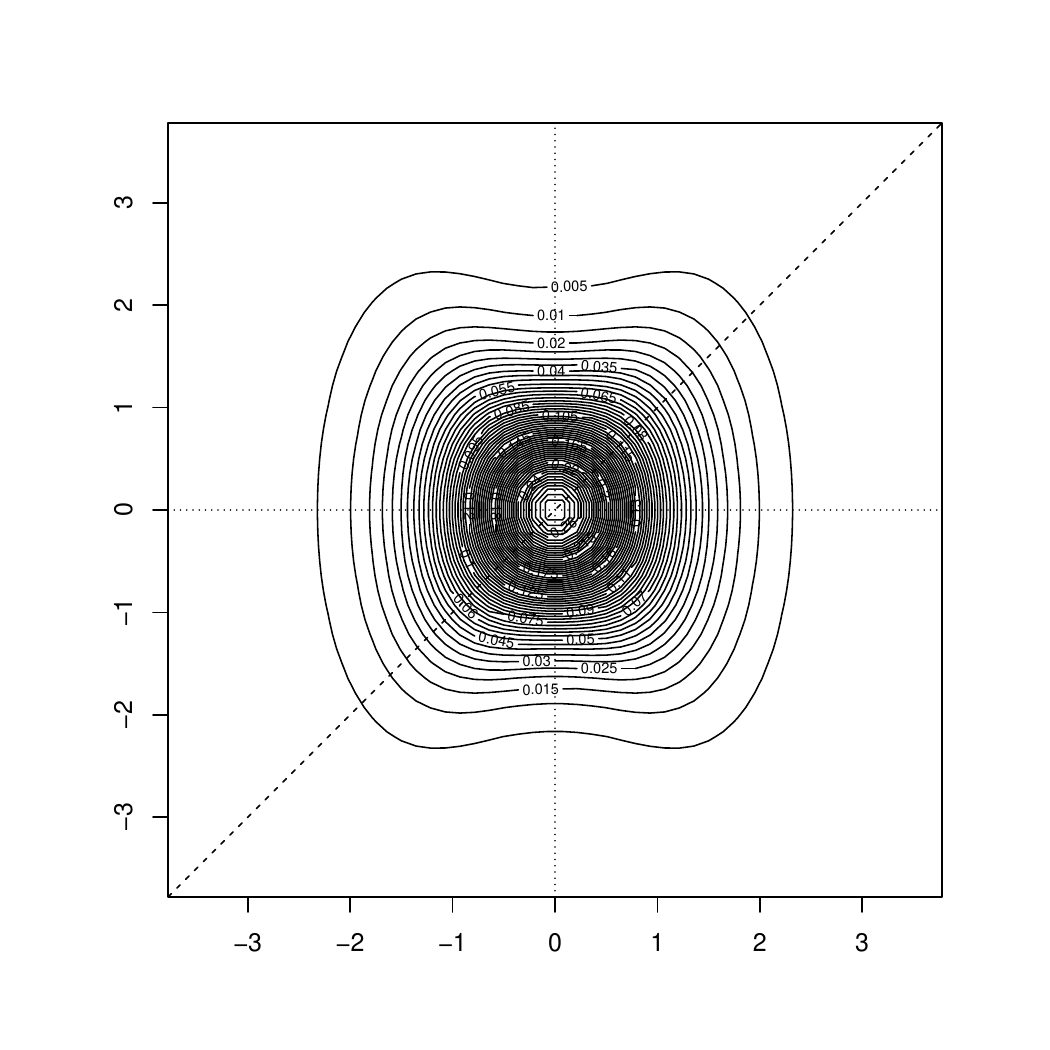}
 \caption{Contour plots of $f_{X_1,X_2}(x_1,x_2)$ for symmetric 
   GARCH(1,1)-type processes; see Table~\ref{tab:parameters} for all
   parameter values. Top left, top right, bottom left: classical ARCH(1)
   processes with,
   respectively, Gaussian, Student t4 and Student t2.5 innovations. Bottom right: classical
   GARCH(1,1) process with Gaussian innovations.
}
\label{fig:0}
\end{figure}

Figure~\ref{fig:0} shows contour plots of $f_{X_1,X_2}(x_1,x_2)$ for a
number of symmetric 
   GARCH(1,1)-type processes as described in
   Sections~\ref{sec:class-arch1-plot}--\ref{sec:class-garch11-plot}; these pictures are counterparts to the
   copula pictures in Figure~\ref{fig:1} and show the joint
   symmetry of the distributions. It is notable that the contours in
   the top right
   picture for an ARCH(1) model with Student t4 innovations look
   almost circular, showing the distribution is very close to
   spherical; the corresponding copula density plot in Figure~\ref{fig:1}
   suggests the copula is almost exchangeable.

In the left panel of Figure~\ref{fig:OS} we
  show a contour plot of the absolute value copula (the copula of
  $(|X_1|,|X_2|)^\top$) for the ARCH(1)
  model with Gaussian innovations described in Section~\ref{sec:class-arch1-plot}.
  This can be thought of as a blown-up version of
  the top right quadrant of the copula density for $(X_1,X_2)^\top$ and
clearly shows 
non-exchangeability and stronger dependence in the
upper tail.
Note that this picture also illustrates the density of the volatility
  copula of ARCH(1), that is the copula of
  $(\sigma_1,\sigma_2)^\top$; this follows because $\sigma_t$
  is a strictly increasing function of $|X_{t-1}|$ for all $t$ and copulas are
  invariant under strictly increasing transformations of
  variables. For the same reason, it is also the copula density for the
variables $(X_{1}^2,X_{2}^2)^\top$.

\begin{figure}[h!]
  \centering
\includegraphics[width=8cm,height=8cm]{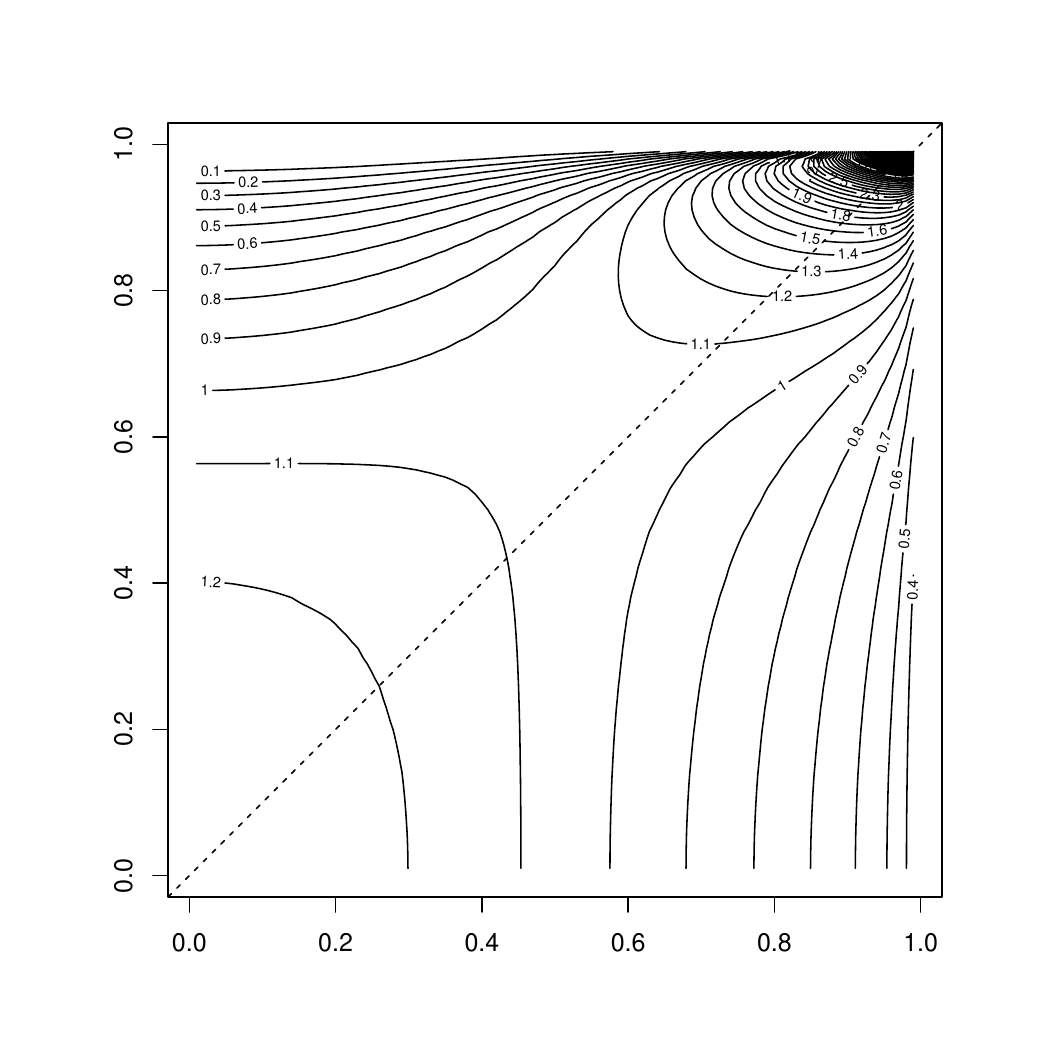}
\includegraphics[width=8cm,height=8cm]{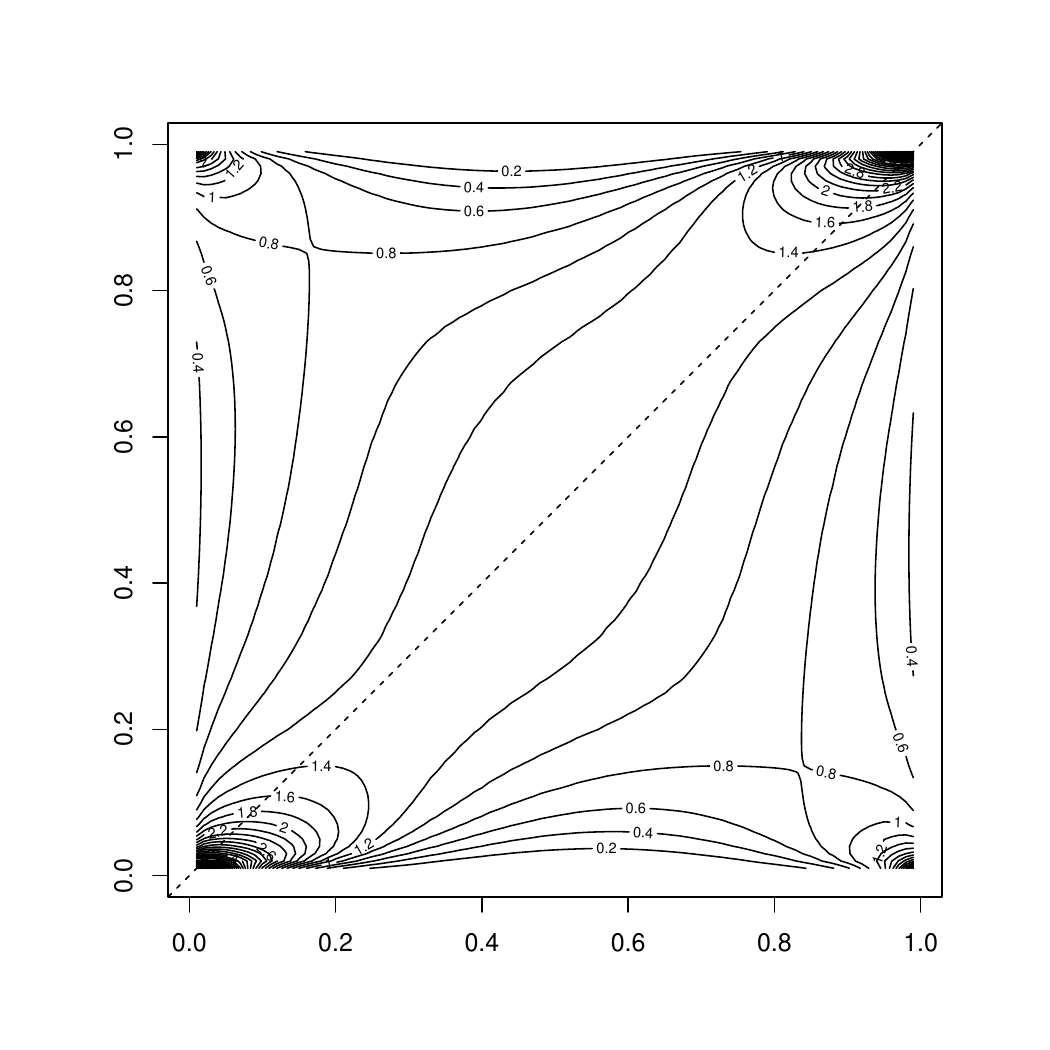}
\caption{Left: copula density for copula
of the absolute values for classical ARCH(1) copula with Gaussian
innovations; see Section~\ref{sec:class-arch1-plot} for parameters.
Right: AR(1)-ARCH(1) model with Gaussian innovations and
parameters $\phi = 0.3$, $\alpha_0=0.4$ and $\alpha_1=0.6$.
}
\label{fig:OS}
\end{figure}



\section{General proof of Proposition~\ref{prop:symmetry2}}

We complete the proof of Proposition~\ref{prop:symmetry2}. 
The case $k=2$ has been proved so let $k \geq 3$. Suppose we write $s_2(x,s) = \sigma(x,s)$, $s_3(z_1,x,s) = \sigma(z_1,s_2(x,s))$ and then for
$j=4,\ldots,k+1$ we define recursive functions $s_j(z_{j-2},\ldots,z_1,x,s) =
\sigma(z_{j-2}, s_{j-1}(z_{j-3},\ldots,z_1,x,s))$. This permits us to
write the joint density as
\begin{multline*}
    f_{X_1,\ldots,X_{k+1}}(x,z_1,\ldots,z_{k-1},y) 
                         = \int_0^\infty
                           f_{X_{k+1} \mid \sigma_{k+1}}(y \mid
                           s_{k+1}(z_{k-1},\ldots,z_1,x,s))\\
\left(\prod_{j=3}^k f_{X_j \mid \sigma_j}(z_{j-1} \mid
  s_j(z_{j-2},\ldots,z_1,x,s))\right)     f_{X_2
                             \mid \sigma_2}(z_1 \mid s_2(x,s))    f_{X_1 \mid \sigma_1}(x
\mid s ) f_\sigma(s)
\rd s .
\end{multline*}
This satisfies
$ f_{X_1,\ldots,X_{k+1}}(x,\ldots,y) =
f_{X_1,\ldots,X_{k+1}}(x,\ldots,-y)$ when the innovation
distribution is symmetric. Moreover, in the
symmetric model, we have  $ f_{X_1,\ldots,X_{k+1}}(x,\ldots,y) =
f_{X_1,\ldots,X_{k+1}}(-x,\ldots,y)$ since all of the
functions $s_2,\ldots, s_{k+1}$ are even functions of $x$. These properties also
hold for the conditional density
\begin{displaymath}
  f_{X_1,X_{k+1} \mid \bm{X}_{[2:k]}}(x,y \mid z_1,\ldots,z_{k-1}) =
  f_{X_1,\ldots,X_{k+1}}(x,z_1,\ldots,z_{k-1},y) /    f_{X_2,\ldots,X_{k}}(z_1,\ldots,z_{k-1}) .
\end{displaymath}
To infer that the
assertions regarding symmetry of the conditional copula hold we note that
h-symmetry of $f_{X_1,X_{k+1} \mid \bm{X}_{[2:k]}}$ implies that
$F_{X_{k+1} \mid \bm{X}_{[2:k]}}$ is symmetric and joint symmetry of
$f_{X_1,X_{k+1} \mid \bm{X}_{[2:k]}}$ also implies that
$F_{X_{1} \mid \bm{X}_{[2:k]}}$ is symmetric; the result then follows easily
 from equation~\eqref{eq:2}.

  Since the joint density of $(X_1,X_{k+1})^\top$ can be obtained as
  \begin{displaymath}
    f_{X_1,X_{k+1}}(x,y) = \int_{-\infty}^\infty \cdots
    \int_{-\infty}^\infty
      f_{X_1,\ldots,X_{k+1}}(x,z_1,\ldots,z_{k-1},y) \rd z_1\cdots \rd z_{k-1}
    \end{displaymath}
    it follows by similar logic and application of Lemma~\ref{lemma:symm-copulas} 
    that the copula $C_{(k)}$ of the unconditional distribution 
    of $(X_1,X_{k+1})^\top$ shares the same symmetry properties as the
    conditional copula $C_k$.

\section{The absolute spherical t copula}\label{sec:ast-copula}
The absolute spherical t copula has density given by
\begin{displaymath}
    c^{|t|}_\nu(u,v) = c^t_\nu\left(\frac{1-u}{2},\frac{1-v}{2}\right) =
  c^t_\nu\left(\frac{1-u}{2},\frac{1+ v}{2}\right) =
  c^t_\nu\left(\frac{1+u}{2},\frac{1-v}{2}\right) = c^t_\nu\left(\frac{1+u}{2},\frac{1+v}{2}\right).
\end{displaymath}
It has an asymptote at $(1,1)$ but takes a finite value at $(0,0)$
given by
\begin{displaymath}
   c^{|t|}_\nu(0,0) = \frac{\Gamma\left( \frac{\nu+2}{\nu} \right) \Gamma\left(\frac{\nu}{2}
     \right)}{\Gamma\left( \frac{\nu+1}{2}\right)^2}.
 \end{displaymath}
 This behaviour facilitates the maximum likelihood estimation of
 fulcrum parameters in
 models that incorporate v-transforms.
 
Integrating any of the four equalities for $c^{|t|}_\nu$ and observing the boundary
conditions for a copula gives us an expression for 
$C^{|t|}_\nu$. For example, integrating the final equality gives
\begin{equation}\label{eq:33}
  C^{|t|}_\nu(u,v) = 4C_\nu^t\left(\frac{1+u}{2}, \frac{1+v}{2}\right) -u -
  v -1.
\end{equation}
The h-functions for the copula may be calculated to be
\begin{equation}\label{eq:h-for-ast}
\begin{aligned}
  h^{|t|}_{\nu,1}(u,v) & = \frac{\partial}{\partial u}
                         C^{|t|}_\nu(u,v)  = 2 t_{\nu +1}\left(\sqrt{\frac{\nu+1}{\nu
        + t_\nu^{-1}(\frac{1+u}{2})}}
                         t_\nu^{-1}\left(\frac{1+v}{2}\right) \right)
                         -1 \\
   h^{|t|}_{\nu,2}(u,v) & = \frac{\partial}{\partial v}
                         C^{|t|}_\nu(u,v) = 2 t_{\nu +1}\left(\sqrt{\frac{\nu+1}{\nu
        + t_\nu^{-1}(\frac{1+v}{2})}}
                         t_\nu^{-1}\left(\frac{1+u}{2}\right) \right)
                         -1 
                       \end{aligned}
                       \end{equation}
where $t_\nu$ denotes the distribution function of a Student t
distribution with $\nu$ degrees of freedom.

For calculation purposes,
the h-functions are easy to evaluate for all $\nu > 0$,
since the function $t_\nu$ is typically available in statistical
software. This is not however the case for the copula $C^{|t|}_\nu$
itself, since available software for evaluating the t copula
$C^t_{\nu,\rho}$ is currently restricted. The \texttt{R} package
\texttt{copula} is restricted to integer $\nu$ while the package
\texttt{rvinecopulib} is confined to $\nu \geq 2$. A one-dimensional numerical integral seems unavoidable for
evaluating the copula for general $\nu$ and we solve this problem by
integrating the h-functions in~\eqref{eq:h-for-ast}.

From expression~\eqref{eq:33} we can infer that the absolute t copula interpolates between independence (as $\nu \to \infty$) and
comonotonicity (as $\nu \to 0$). For the latter case this follows
because the spherical t copula $C_\nu^t$ converges to $0.5 M(u,v) + 0.5 W(u,v)$,
an equal mixture of the comonotonicity and countermonotonicity copulas, as $\nu \to
0$~\citep{bib:mcneil-neslehova-smith-22}.

In view of the
limiting behaviour of the copula, the Kendall's tau value $\tau(\nu)$
of $C^{|t|}_\nu$ must range in the interval $(0,1)$ as $\nu$ ranges in
the interval $(0,\infty)$. We can find values by numerical integration
using the formula
\begin{equation}\label{eq:34}
  \tau(\nu) =
4\int_0^1 \int_0^1 C^{|t|}_\nu(u,v) c^{|t|}_\nu(u,v) \rd u \rd v -1 =
  64 \int_0^{0.5} \int_0^{0.5} C_\nu^t(u,v) c_\nu^t(u,v) \rd u,
  \rd v -1.
\end{equation}
For example, for $\nu \in \{4,2,1,0.5\}$ the corresponding values are
$\{0.099, 0.189, 0.333, 0.515 \}$.
The absolute spherical t copula has upper tail dependence (inherited from the t
copula). However we can easily show that it is asymptotically
independent in the lower tail.

\section{Simulation study of
  the approximation of $C_1$}\label{sec:simstudy}
In the simulation study, data are generated from ARCH(1) and GARCH(1,1)-type
processes using the parameter values summarized in
Table~\ref{tab:parameters} and different innovation distributions, as
recorded in the legend to the table.
Analyses are based on
simulated samples of 50000 points and the first-order 
copula $C_1$ is estimated by fitting a first-order D-vine
to the data using the two-stage method developed by~\cite{bib:chen-fan-06} in
which the marginal distribution of the process is estimated  by the
rescaled empirical distribution function.
Models are
compared according to their AIC values. The results of the jointly
symmetric case are documented in Section~\ref{sec:jointly-symm-case}. Further experiments are documented here.

\subsection{Simulation study of h-symmetric case with leverage}\label{sec:simul-study-h-sym}

\begin{table}[htbp]
  \centering
  \begin{tabular}{llrrrrrr} \toprule
    & Model & \multicolumn{3}{c|}{ARCH(1)} &
                                             \multicolumn{3}{c}{GARCH(1,1)} \\
     \cmidrule(lr){3-5} \cmidrule(lr){6-8}
   Copula (form of density) & $p$ \textbar\ $(\epsilon_t)$ & Gauss & t4 &
                                                \multicolumn{1}{c|}{t2.5}
                                 & Gauss & t4 & t2.5 \\ \midrule
$ c^{|t|}_\nu(|2u-1|, |2v-1|) $ & 1 & 1068 & 224 & 223 & 720 & 264 & 187 \\ 
  $c_\theta^{\text{J}}(|2u-1|, |2v-1|)$& 1 & 1202 & 267 & 238 & 764 & 321 & 214 \\ 
   $c_\theta^{\text{Cl},180}(|2u-1|, |2v-1|)$ & 1 & 1097 & 358 & 364 & 843 & 386 & 313 \\ \midrule
  $ c^{|t|}_\nu(\vtrans(u; 0.5,\kappa), |2v-1|) $ & 2 & 346 & 0 & 132 & 60 & 2 & 82 \\ 
 $c_\theta^{\text{J}}(\vtrans(u; 0.5,\kappa), |2v-1|)$  & 2 & 462 & 48 & 153 & 91 & 62 & 114 \\ 
  $c_\theta^{\text{Cl},180}(\vtrans(u; 0.5,\kappa), |2v-1|)$ & 2 & 452 & 152 & 281 & 246 & 141 & 216 \\ \midrule
   $c_{\nu,a_1,a_2}^{|t|}(|2u-1|, |2v-1|)$ & 3 & 678 & 227 & 113 & 614 & 267 & 131 \\ 
    $c_{\theta,a_1,a_2}^{\text{J}}(|2u-1|, |2v-1|)$ & 3 & 760 & 271 & 114 & 656 & 325 & 146 \\ 
    $c_{\theta,a_1,a_2}^{\text{Cl},180}(|2u-1|, |2v-1|)$  & 3 & 743 & 361 & 240 & 752 & 388 & 252 \\ \midrule
   $c_{\nu,a_1,a_2}^{|t|}(\vtrans(u; 0.5,\kappa), |2v-1|)$ & 4 & 0 & 2 & 0 & 0 & 0 & 0 \\ 
   $c_{\theta,a_1,a_2}^{\text{J}}(\vtrans(u; 0.5,\kappa), |2v-1|)$ & 4 & 88 & 52 & 1 & 37 & 63 & 14 \\ 
   $c_{\theta,a_1,a_2}^{\text{Cl},180}(\vtrans(u; 0.5,\kappa), |2v-1|)$ & 4 & 110 & 153 & 132 & 188 & 138 & 126 \\ \bottomrule
    \end{tabular}
  \caption{Comparison of AIC values when various h-symmetric copulas are
    fitted to simulated data from
    ARCH(1) and GARCH(1,1) processes with symmetric innovations and
    leverage by the PML method. Each column shows the differences between AIC
    and the minimum value of AIC for that column; thus a 0
    indicates the model with lowest AIC.\label{tab:2}}
\end{table}

In this section we consider the ARCH(1) and GARCH(1,1) models with
symmetric innovations and leverage. According to
Proposition~\ref{prop:symmetry}, the first-order copula $C_1$ is
h-symmetric but not jointly symmetric.
According to Example~\ref{ex:arch-leverage}, the ARCH(1) process has a
copula density of the form $c(\vtrans_1(u),|2v-1|)$ for an asymmetric
v-transform $\vtrans_1$ with fulcrum at $\delta=0.5$. For the
GARCH(1,1) model we don't have the same theoretical result but we will
try using copulas of the same form. The parameterisation of the
ARCH(1) model with leverage is the same as in Figure~\ref{fig:2} while
the parameters for
the GARCH(1,1) are given in Table~\ref{tab:parameters}.

In Table~\ref{tab:2} we include a selection of the copulas from
Table~\ref{tab:1}, omitting the mixture copulas, since the
inverse-v-transformed copulas typically do better, and also omitting
the inverse-v-transformed Gumbel copula, since the other inverse-v-transformed copulas do
better. We add copulas of the form
$c(u,v) = c^*(\vtrans_1(u), |2v-1|)$  where $\vtrans_1$ is a member of the
family $\vtrans(u; \delta,\kappa)$ in~(\ref{eq:2_v}) with
$\delta=0.5$, so that only one extra unknown parameter $\kappa$
is introduced.


The first panel contains jointly symmetric and exchangeable copulas;
the second panel contains h-symmetric copulas based on an exchangeable
copula for $c^*$ but incorporating an asymmetric v-transform $\vtrans(u;0.5,\kappa)$ for the
first variable;
the third panel contains jointly symmetric non-exchangeable copulas
using the Khoudraji construction~\eqref{eq:6};
the final panel contains h-symmetric copulas based
on a non-exchangeable $c^*$ and incorporating an asymmetric v-transform
for the first variable.

The best results in all cases are obtained from copulas developed from the
absolute spherical t copula $C^{|t|}_\nu$ that are not jointly
symmetric and that incorporate the non-symmetric v-transform, as theory suggests. In 5 of 6
cases the inverse v-transformation of the Khoudraji non-exchangeable extension of $C^{|t|}_\nu$ is
selected while in one case (ARCH(1) with t4 innovations) the inverse
v-transformation of $C^{|t|}_\nu$ is selected. Models based on the Joe
copula also do well. 


\subsection{Simulation study of case of asymmetric innovations}\label{sec:simul-study-asym}

\begin{table}[htb]
\centering
\begin{tabular}{llrrrr} \toprule
     & Model & \multicolumn{2}{c|}{ARCH(1)} &
                                             \multicolumn{2}{c}{GARCH(1,1)} \\
     \cmidrule(lr){3-4} \cmidrule(lr){5-6}
 Copula (form of density) & $p$ \textbar\ leverage & absent & present & absent & present \\
  \midrule
$ c^{|t|}_\nu(|2u-1|, |2v-1|) $ & 1 & 169 & 679 & 49 & 436 \\ 
   $c_\theta^{\text{J}}(|2u-1|, |2v-1|)$ & 1 & 221 & 728 & 92 & 487 \\ 
   $c_\theta^{\text{Cl},180}(|2u-1|, |2v-1|)$ & 1 & 377 & 881 & 208 & 673 \\ \midrule
  $ c^{|t|}_\nu(\vtrans(u; \delta_1), \vtrans(v;\delta_2))$  & 3 & 117 & 201 & 23 & 92 \\ 
   $ c^{\text{J}}_\nu(\vtrans(u; \delta_1), \vtrans(v;\delta_2))$  & 3 & 165 & 231 & 64 & 140 \\ 
    $ c^{\text{Cl},180}_\nu(\vtrans(u; \delta_1), \vtrans(v;\delta_2))$ & 3 & 348 & 558 & 189 & 403 \\ \midrule
   $ c^{|t|}_{\nu,a_1,a_2}(\vtrans(u; \delta_1), \vtrans(v;\delta_2))$ & 5 & 116 & 205 & 25 & 94 \\ 
   $ c^{\text{J}}_{\nu,a_1,a_2}(\vtrans(u; \delta_1), \vtrans(v;\delta_2))$ & 5 & 169 & 235 & 68 & 144 \\ 
   $ c^{\text{Cl},180}_{\nu,a_1,a_2}(\vtrans(u; \delta_1), \vtrans(v;\delta_2))$  & 5 & 337 & 561 & 191 & 403 \\ \midrule
   $ c^{|t|}_\nu(\vtrans(u; \delta_1,\kappa_1), \vtrans(v;\delta_2,\kappa_2))$  & 5 & 1 & 0 & 0 & 0 \\ 
    $ c^{\text{J}}_\nu(\vtrans(u; \delta_1,\kappa_1), \vtrans(v;\delta_2,\kappa_2))$ & 5 & 48 & 24 & 45 & 50 \\ 
    $ c^{\text{Cl},180}_\nu(\vtrans(u; \delta_1,\kappa_1), \vtrans(v;\delta_2,\kappa_2))$ & 5 & 229 & 293 & 159 & 277 \\ \midrule
   $ c^{|t|}_{\nu,a_1,a_2}(\vtrans(u; \delta_1,\kappa_1), \vtrans(v;\delta_2,\kappa_2))$  & 7 & 0 & 4 & 1 & 4 \\ 
    $ c^{\text{J}}_{\nu,a_1,a_2}(\vtrans(u; \delta_1,\kappa_1), \vtrans(v;\delta_2,\kappa_2))$  & 7 & 51 & 28 & 49 & 54 \\ 
    $ c^{\text{Cl},180}_{\nu,a_1,a_2}(\vtrans(u; \delta_1,\kappa_1), \vtrans(v;\delta_2,\kappa_2))$  & 7 & 222 & 297 & 160 & 279 \\ \bottomrule
\end{tabular}
 \caption{Comparison of AIC values when various copulas are
    fitted to simulated data from
    ARCH(1) and GARCH(1,1) processes with asymmetric innovations by the PML method. Each column shows the differences between AIC
    and the minimum value of AIC for that column; thus a 0
    indicates the model with lowest AIC.\label{tab:3}}
\end{table}

In the final experiment we consider models with skewed t innovations
with parameters $\nu=4$ and $\lambda = 0.8$. We simulate ARCH(1) and
GARCH(1,1) processes, both with and without leverage effects; parameter
values are as in Table~\ref{tab:parameters}.
We have already observed that asymmetric innovations remove all of the
previous symmetries we have discussed and so our approach in this
section is pure experimentation without the theoretical insights we
exploited in the previous experiments.

The first set of 3 models in Table~\ref{tab:3} consists of jointly
symmetric and exchangeable copulas, included as a benchmark. The
spherical t 
copula is the best of these but substantial improvements can be made
by introducing asymmetry.

The second set of 3 models is based on copula densities of the form
$c(u,v) = c^*(\vtrans(u; \delta_1), \vtrans(v; \delta_2))$ where
$\vtrans(\cdot;\delta)$ is the linear v-transform with fulcrum at
$\delta$ (asymmetric when $\delta \neq 0.5$) and $c^*$ is either the ast,
Clayton survival or Joe copula. It should be noted that full
  maximum likelihood estimation with respect to all parameters
  including the parameters of the v-transform presents no problems for
  these models since
the ast, Clayton survival and Joe copula densities do not have
asymptotes at $(0,0)$. If this were not the case we can run into a
threshold estimation issue, as discussed
in~\cite{bib:bladt-mcneil-21}. Adding the parameters $\delta_1$ and $\delta_2$ lowers the AIC values
considerably. The third set of 3
models applies the Khoudraji transformation to $c^*$ before
application of the inverse linear v-transform, but this barely
offers any improvement on the previous set of models; the AIC is only
really lowered for the Clayton survival copula in the ARCH(1) model
without leverage, and survival Clayton is the worst of
the three candidates in this case.

The fourth set of models is based on copula densities 
$c(u,v) = c^*(\vtrans(u;\delta_1,\kappa_1), \vtrans(v;\delta_2,
\kappa_2))$
where $\vtrans(\cdot;\delta, \kappa)$ is the 2-parameter v-transform
in~(\ref{eq:2_v}); such v-transforms are both non-linear and
asymmetric. These models give the best results of all, with the
model where $c^*$ is the ast copula best for 3 out of 4 datasets
datasets, followed by the
model based on Joe. Addition of the Khoudraji transformation in the
final set of models only offers an improvement for the ARCH(1) model
without leverage, and this improvement is very minor.

We conclude that allowing asymmetric v-transform largely removes
the need to consider the Khoudraji transformation, even when the
v-transforms are linear. This has positive consequences for practical
work: evaluating the density of Khoudraji-extended copulas of
type~\eqref{eq:6} is computationally challenging since the density
involves the copula $C_\theta$ and its h-functions, as well as is density
$c_\theta$. This is particularly burdensome when $C_\theta =
C^{|t|}_\nu$, the ast copula, as
numerical integration is required to evaluate $C^{|t|}_\nu$.

\section{D-vine processes using pair copulas based on linear
  v-transforms}\label{sec:d-vine-processes}

\subsection{Calculating $h$-functions}\label{sec:calc-h-funct}

The following results shows that all the formulas required to work
with linear v-transforms are easily obtained.

\begin{proposition}\label{prop:properties-linear-inv-v-transform}
  The copula $C$ with density given by $c(u,v) = c^*(\vtrans(u; \delta_1), \vtrans(v;\delta_2))$ is
\begin{equation}\label{eq:13}
C(u,v) = \delta_1^{\indicator{u \leq \delta_1}}(\delta_1-1)^{\indicator{u > \delta_1}}\delta_2^{\indicator{v \leq \delta_2}}(\delta_2-1)^{\indicator{v > \delta_2}} C^*(\vtrans(u; \delta_1),\vtrans(v; \delta_2)) + \delta_1v + \delta_2u -\delta_1\delta_2
\end{equation}
where $C^*$ is the copula with density $c^*$. Writing $h_1^*$ and
$h_2^*$ for the partial
derivatives of $C^*$, the corresponding partial derivatives
$h_1$ and $h_2$ of $C$ are 
\begin{align*}
  h_1(u,v) &= -
  \delta_2^{\indicator{v \leq \delta_2}}(\delta_2-1)^{\indicator{v >
      \delta_2}} h_1^*(\vtrans(u; \delta_1),\vtrans(v; \delta_2)) +
  \delta_2 \\
   h_2(u,v) &= -
  \delta_1^{\indicator{u \leq \delta_1}}(\delta_1-1)^{\indicator{u >
      \delta_1}} h_2^*(\vtrans(u; \delta_1),\vtrans(v; \delta_2)) +
              \delta_1 
\end{align*}
and satisfy the
equations
\begin{equation}\label{eq:14}
  \begin{aligned}
    \vtrans\left(h_1(u,v) ;\delta_2\right) & = 
                                                   h_1^*(\vtrans(u; \delta_1),\vtrans(v; \delta_2))
    \\
       \vtrans\left(h_2(u,v) ;\delta_1\right) & = 
                                                   h_2^*(\vtrans(u; \delta_1),\vtrans(v; \delta_2)).
  \end{aligned}
\end{equation}
Moreover the
inverses $h_1^{-1}(u,v) = \{x : h_1(u,x) =v\}$ and $h_2^{-1}(u,v) =
\{y : h_2(y,v) =u\}$ are 
\begin{align*}
  h_1^{-1}(u,v) &= -
  \delta_2^{\indicator{v \leq \delta_2}}(\delta_2-1)^{\indicator{v >
      \delta_2}} h_1^{*-1}(\vtrans(u; \delta_1),\vtrans(v; \delta_2)) +
                  \delta_2 \\
  h_2^{-1}(u,v) &= -
  \delta_1^{\indicator{u \leq \delta_1}}(\delta_1-1)^{\indicator{u >
      \delta_1}} h_2^{*-1}(\vtrans(u; \delta_1),\vtrans(v; \delta_2)) +
  \delta_1
\end{align*}
where $h_1^{*-1}$ and $h_2^{*-1}$ are the inverses of $h_1^*$ and $h_2^*$.
\end{proposition}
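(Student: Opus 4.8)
The plan is to exploit part (ii) of Theorem~\ref{theorem:v-and-inverse-v}, which identifies a copula density of the form $c(u,v)=c^*(\vtrans(u;\delta_1),\vtrans(v;\delta_2))$ as the density of the pair $(\vtrans_1^\leftarrow(P),\vtrans_2^\leftarrow(Q))^\top$, where $(P,Q)^\top\sim C^*$ and $\vtrans_i=\vtrans(\cdot;\delta_i)$. Hence $C(u,v)=\P(\vtrans_1^\leftarrow(P)\leq u,\vtrans_2^\leftarrow(Q)\leq v)$, and since the two stochastic inversions randomize independently given $(P,Q)$, the conditional probability factors as $C(u,v)=\E\big[g_1(P)\,g_2(Q)\big]$, where $g_1(p)=\P(\vtrans_1^\leftarrow(p)\leq u)$ and $g_2(q)=\P(\vtrans_2^\leftarrow(q)\leq v)$ are the conditional distribution functions of the stochastic inverses. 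The essential simplification for the linear v-transform is that its $\Delta$-function is the constant $\delta_i$ and its partial inverse is affine, so $g_1$ takes only the two forms $g_1(p)=\delta_1\,\indicator{p\geq\vtrans_1(u)}$ when $u\leq\delta_1$ and $g_1(p)=\delta_1+(1-\delta_1)\indicator{p\leq\vtrans_1(u)}$ when $u>\delta_1$ (and symmetrically for $g_2$).

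Substituting these into $\E[g_1(P)g_2(Q)]$ and evaluating the resulting rectangle probabilities of $C^*$ via the elementary identities (e.g.\ $\P(P\leq a,Q> b)=a-C^*(a,b)$ and $\P(P> a,Q> b)=1-a-b+C^*(a,b)$) yields one expression in each of the four regions cut out by $u=\delta_1$ and $v=\delta_2$. The final, mildly tedious step is to check that, after replacing $\vtrans_i(u)$ by its affine branch value in each region, these four expressions collapse into the single closed form~\eqref{eq:13}: the cross terms $\delta_1\delta_2\vtrans_i$ produce exactly the linear remainder $\delta_1 v+\delta_2 u-\delta_1\delta_2$, while the surviving coefficient of $C^*$ reads off as $\delta_i^{\indicator{\cdot\leq\delta_i}}(\delta_i-1)^{\indicator{\cdot>\delta_i}}$. (Equivalently, one may integrate the density directly, substituting $p=\vtrans_1(s)$, $q=\vtrans_2(t)$ and using that the piecewise-linear $\vtrans_i$ has constant slopes $-1/\delta_i$ and $1/(1-\delta_i)$, so the Jacobian factors are the constants $\delta_i$ and $1-\delta_i$.)

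Once~\eqref{eq:13} is in hand the $h$-functions follow by differentiation. Within each region the indicator exponents are constant, so $h_1=\partial_u C$ equals $\delta_1^{\indicator{u\leq\delta_1}}(\delta_1-1)^{\indicator{u>\delta_1}}\,\delta_2^{\indicator{v\leq\delta_2}}(\delta_2-1)^{\indicator{v>\delta_2}}\,h_1^*(\vtrans_1(u),\vtrans_2(v))\,\vtrans_1'(u)+\delta_2$. The pleasant point is that $\delta_1^{\indicator{u\leq\delta_1}}(\delta_1-1)^{\indicator{u>\delta_1}}\,\vtrans_1'(u)=-1$ on both sides of the fulcrum (since $\vtrans_1'=-1/\delta_1$ on the left and $1/(1-\delta_1)$ on the right), so the $u$-indicators cancel and the stated region-free formula for $h_1$ results; $h_2$ is identical by symmetry. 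Identity~\eqref{eq:14} is then verified by inserting $h_1$ into $\vtrans(\cdot;\delta_2)$: for $v\leq\delta_2$ one has $h_1=\delta_2(1-h_1^*)\leq\delta_2$ and for $v>\delta_2$ one has $h_1=(1-\delta_2)h_1^*+\delta_2\geq\delta_2$, so in each case $h_1$ lands on the correct branch and $\vtrans(h_1;\delta_2)$ returns $h_1^*$.

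Finally, the inverse $h$-functions are obtained by solving $h_1(u,x)=v$ for $x$. Using the region-free formula for $h_1$ together with~\eqref{eq:14}, the sign of $v-\delta_2$ dictates whether $x\leq\delta_2$ or $x>\delta_2$, and in either case one finds $h_1^*(\vtrans_1(u),\vtrans_2(x))=\vtrans_2(v)$, whence $\vtrans_2(x)=h_1^{*-1}(\vtrans_1(u),\vtrans_2(v))$; inverting the appropriate affine branch of $\vtrans_2$ to recover $x$ gives the claimed formula for $h_1^{-1}$, and $h_2^{-1}$ follows symmetrically. I expect the only real friction to be the region bookkeeping in assembling~\eqref{eq:13}; every subsequent step is a short, direct calculation once the constancy of the $\Delta$-functions and the affine branches of the $\vtrans_i$ are exploited.
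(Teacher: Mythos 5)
Your proof is correct, and for the central identity~\eqref{eq:13} it takes a genuinely different route from the paper. The paper's proof is a pure verification: it observes that $\frac{\partial^2}{\partial u\,\partial v}$ of the right-hand side of~\eqref{eq:13} equals $c^*(\vtrans(u;\delta_1),\vtrans(v;\delta_2))$ (via the same slope cancellation $\delta_1^{\indicator{u \leq \delta_1}}(\delta_1-1)^{\indicator{u > \delta_1}}\,\vtrans'(u;\delta_1)=-1$ that you invoke for the $h$-functions) and then checks the boundary conditions $C(0,v)=0$ and $C(1,v)=v$ separately for $v\leq\delta_2$ and $v>\delta_2$, which together certify that the displayed expression is the copula with density $c$. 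You instead \emph{derive}~\eqref{eq:13} constructively from the stochastic-inverse representation in part (ii) of Theorem~\ref{theorem:v-and-inverse-v}, writing $C(u,v)=\E\bigl[g_1(P)g_2(Q)\bigr]$ with $(P,Q)^\top\sim C^*$ and exploiting the constancy of the $\Delta$-functions and the affine partial inverses of the linear v-transforms; I checked your region-by-region reduction and it does collapse to~\eqref{eq:13}. Your route explains where the formula comes from, at the cost of four-quadrant bookkeeping, whereas the paper's check is shorter but presupposes the answer. For the $h$-functions, the identities~\eqref{eq:14} and the inverse $h$-functions, your argument coincides with the paper's (differentiation, branch identification using that $h_1(u,v)\lessgtr\delta_2$ according as $v\lessgtr\delta_2$, and inversion of the appropriate affine branch), and you supply the details that the paper compresses into ``easy to verify'' and ``straightforward to derive''.
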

\begin{proof}
The joint density $c$ of the copula $C$ defined in ~\eqref{eq:13} can be easily shown to
satisfy $c(u,v) = \frac{\partial^2}{\partial u \partial v}C(u,v)  =
c^*(\vtrans(u; \delta_1), \vtrans(v;\delta_2))$. We just need to check
that it is a valid copula, which entails checking that
$C(0,v) = 0$ and $C(1,v) = v$ for any $v$, which is straightforward. By
symmetry, it then follows that $C(u,0) = 0$ and $C(u,1) = u$ for any
$u$.
The
$h$-functions are obtained by differentiation of~\eqref{eq:13} and the
identities~\eqref{eq:14} are easy to verify.
Noting that $h_1(u,\delta_2) = \delta_2$ and $h_2(u,\delta_1) = \delta_1$, it is also straightforward to
derive the inverses of the $h$-functions.
\end{proof}

 \subsection{Link to vt-D-vine models}\label{sec:link-vt-d}

In the special case of a model with two identical linear
v-transforms we now show that the resulting
 D-vine copula also belongs to the class of vt-D-vines
 in~\cite{bib:bladt-mcneil-21} which has been applied successfully to real
 financial data. The use of two different linear v-transforms does
 however give much more flexibility.

 For this result we
 introduce sequences of conditional distribution functions (Rosenblatt functions)
 $\Rbackward_k:(0,1) \times (0,1)^k \to (0,1)$ and
 $\Rforward_k:(0,1)\times(0,1)^k \to (0,1)$
 defined by
\begin{equation} \label{eq:Rosenblatt1}
   \begin{aligned}
   \Rbackward_{k-1}(x; \bm{u}) &= \P(U_1 \leq x \mid U_2
   =u_1,\ldots,U_{k}=u_{k-1}) = F_{U_1 \mid \bm{U}_{[2:k]}}(x \mid \bm{u}_{[1:k-1]}),\\
      \Rforward_{k-1}(x; \bm{u}) &=\P(U_{k} \leq x \mid U_{k-1}
      =u_1,\ldots,U_1 = u_{k-1}) = F_{U_{k} \mid \bm{U}_{[1:k-1]}}(x \mid \bm{u}_{[k-1:1]}),
    \end{aligned}
      \quad k \in \N.
    \end{equation}
  Note the reversal of the order of the conditioning values $\bm{u}_{[k-1:1]}
    =(u_{k-1}, \ldots,u_1)^\top$ in the second equation. Note also
    that we interpret
 $\Rbackward_0(x;\cdot) = \Rforward_0(x;\cdot) = x$.
 The Rosenblatt functions satisfy
$\Rforward_1(x;u) =
h_{1,1}(u,x)$, $\Rbackward_1(x;v) = h_{1,2}(x,v)$ and 
\begin{equation}\label{eq:16}
\begin{aligned}
   \Rforward_k(x;\bm{u})  & =  h_{k,1}\left(  \Rbackward_{k-1}(u_k;\bm{u}_{[k-1:1]})  ,
                         \Rforward_{k-1}(x; \bm{u}_{[1:k-1]})   \right), \\
    \Rbackward_k(x; \bm{u})  & =  h_{k,2}\left(
      \Rbackward_{k-1}(x; \bm{u}_{[1,k-1]})  ,
                         \Rforward_{k-1}(u_k; \bm{u}_{[k-1:1]})   \right),
                            \end{aligned}\quad k \geq 2,
\end{equation}
where $h_{k,1}(u,v)  = \frac{\partial}{\partial
  u}C_k(u,v)$ and $h_{k,2}(u,v)  = \frac{\partial}{\partial
  v}C_k(u,v)$ are the $h$-functions of the copula $C_k$ with
density $c_k$. This allows us to write the density in~\eqref{eq:15} as
 \begin{equation}\label{eq:155}
    c_{U_1,\ldots,U_n}(u_1,\ldots,u_n)  = \prod_{k=1}^{n-1} \prod_{j=k+1}^n c_k
  \Big(\Rbackward_{k-1}(u_{j-k}; \bm{u}_{[j-k+1:j-1]}),
    \Rforward_{k-1}(u_j; \bm{u}_{[j-1:j-k+1]}) \Big).
  \end{equation}


\begin{theorem}\label{theorem:vt-d-vine}
  Suppose we construct a d-vine copula~\eqref{eq:155}
  using a sequence of copulas $(C_k)_{k\in\N}$ whose densities are all
  of the form
  $c_k(u,v) = c_k^*(\vtrans(u;\delta),\vtrans(v;\delta))$ for
  a sequence of copulas $(C_k^*)_{k\in\N}$ with densities $(c_k^*)_{k\in\N}$ and a
  single linear v-transform $\vtrans(\cdot, \delta)$. Then the joint density
  $c_{U_1,\ldots,U_n}(u_1,\ldots,u_n)$ may be
  written as
  \begin{equation}\label{eq:10}
 \prod_{k=1}^{n-1} \prod_{j=k+1}^n c_k^*
  \Big(\Rbackward^*_{k-1}(\vtrans(u_{j-k}; \delta); \vtrans(\bm{u}_{[j-k+1:j-1]};\delta)),
    \Rforward^*_{k-1}(\vtrans(u_j; \delta);
    \vtrans(\bm{u}_{[j-1:j-k+1]}; \delta)) \Big)
  \end{equation}
  where $(\Rforward_k^*)_{k\in\N}$ and $(\Rbackward_k^*)_{k\in\N}$ are the
  forward and backward Rosenblatt functions for the
  copula sequence $(C_k^*)_{k\in\N}$ and where we use the compact notation $\vtrans(\bm{u};\delta) =
 (\vtrans(u_1;\delta),\ldots,\vtrans(u_d;\delta))^\top$.
\end{theorem}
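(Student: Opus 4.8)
The plan is to reduce the whole statement to a single commutation identity: a linear v-transform with fulcrum $\delta$ passes through the $h$-functions of $C_k$ and turns them into the $h$-functions of $C_k^*$. This is exactly what Proposition~\ref{prop:properties-linear-inv-v-transform} supplies; specialising the identities~\eqref{eq:14} to the common-fulcrum case $\delta_1=\delta_2=\delta$ gives
\begin{equation*}
  \vtrans(h_{k,1}(u,v);\delta) = h_{k,1}^*(\vtrans(u;\delta),\vtrans(v;\delta)),
  \qquad
  \vtrans(h_{k,2}(u,v);\delta) = h_{k,2}^*(\vtrans(u;\delta),\vtrans(v;\delta)),
\end{equation*}
where $h_{k,1}^*,h_{k,2}^*$ are the $h$-functions of $C_k^*$ and $h_{k,1},h_{k,2}$ those of $C_k$.

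First I would prove, by simultaneous induction on $k$, that the Rosenblatt functions of the two d-vines are linked by
\begin{equation*}
  \vtrans(\Rforward_k(x;\bm{u});\delta) = \Rforward^*_k(\vtrans(x;\delta);\vtrans(\bm{u};\delta)),
  \qquad
  \vtrans(\Rbackward_k(x;\bm{u});\delta) = \Rbackward^*_k(\vtrans(x;\delta);\vtrans(\bm{u};\delta)),
\end{equation*}
for every $k\geq 0$, with $\vtrans(\bm{u};\delta)$ denoting element-by-element v-transformation. The base case $k=0$ is immediate from the convention $\Rforward_0(x;\cdot)=\Rbackward_0(x;\cdot)=x$ (and its starred analogue), since both sides collapse to $\vtrans(x;\delta)$. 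For the inductive step I would substitute the interlacing recursion~\eqref{eq:16} into the left-hand side, push the outer v-transform through $h_{k,1}$ (respectively $h_{k,2}$) by the commutation identity above, and then apply the inductive hypothesis to the inner Rosenblatt arguments $\Rbackward_{k-1}(u_k;\cdot)$ and $\Rforward_{k-1}(x;\cdot)$. Collecting terms reproduces precisely the starred recursion~\eqref{eq:16}, yielding $\Rforward^*_k$ (respectively $\Rbackward^*_k$). Here one uses that element-wise v-transformation commutes with the selection and reversal of index blocks, so that $\vtrans(\bm{u}_{[k-1:1]};\delta)=(\vtrans(\bm{u};\delta))_{[k-1:1]}$, which is what makes the conditioning arguments match up.

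With this lemma in hand, the final step is a direct substitution into the simplified d-vine density~\eqref{eq:15}. Each factor is $c_k$ evaluated at the pair $\big(\Rbackward_{k-1}(u_{j-k};\cdot),\Rforward_{k-1}(u_j;\cdot)\big)$; writing $c_k(a,b)=c_k^*(\vtrans(a;\delta),\vtrans(b;\delta))$ and then invoking the induction result at order $k-1$ converts this factor into $c_k^*$ evaluated at $\big(\Rbackward^*_{k-1}(\vtrans(u_{j-k};\delta);\cdot),\Rforward^*_{k-1}(\vtrans(u_j;\delta);\cdot)\big)$, which is exactly the corresponding factor of~\eqref{eq:10}. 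Taking the product over all $k$ and $j$ then gives the claimed representation.

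I expect the main obstacle to be bookkeeping rather than anything conceptual: one must track the interlacing forward/backward recursion~\eqref{eq:16} carefully, keep the ordered blocks $[j-k+1:j-1]$ and $[j-1:j-k+1]$ straight, and confirm at each stage that the arguments fed into $h_{k,1}$ and $h_{k,2}$ are exactly those demanded by the starred recursion. The one genuinely essential analytic ingredient is the pass-through identity~\eqref{eq:14}, which holds \emph{only} because the v-transform is linear with a single common fulcrum; for a nonlinear v-transform, or for two distinct fulcra, the clean commutation fails and the d-vine no longer reduces to a vt-d-vine.
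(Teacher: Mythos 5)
Your proposal is correct and follows essentially the same route as the paper's proof: an induction on $k$ establishing that the v-transform intertwines the Rosenblatt functions of $(C_k)$ with those of $(C_k^*)$, driven by the pass-through identities~\eqref{eq:14} from Proposition~\ref{prop:properties-linear-inv-v-transform}, followed by direct substitution into~\eqref{eq:15}. The only differences are cosmetic (you index the base case at $k=0$ rather than $k=1$, and you make explicit the compatibility of element-wise v-transformation with block reversal, which the paper leaves implicit).
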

\begin{proof}
  The result will follow if we can show that the equalities
  \begin{equation}\label{eq:888}
    \vtrans\left(    \Rbackward_{k-1}(u;
      \bm{v}) ;\delta \right) = \Rbackward^*_{k-1}(\vtrans(u; \delta);
    \vtrans(\bm{v};\delta))\quad \text{and} \quad
       \vtrans\left(    \Rforward_{k-1}(u;
      \bm{v}) ;\delta \right) = \Rforward^*_{k-1}(\vtrans(u; \delta);
    \vtrans(\bm{v};\delta))
  \end{equation}
  hold for any $k \in \N$, $u \in(0,1)$ and $\bm{v} \in (0,1)^{k-1}$. We
  will argue by induction. The equalities~\eqref{eq:888}
  are obviously true for $k=1$ since, by our convention, $B_0(u;\cdot) =
  B_0^*(u; \cdot) = u$  and $R_0(u;\cdot) =
  R_0^*(u; \cdot) = u$ for all $u$. Assume the equalities~\eqref{eq:888}
  are true when $k = l \in \N$ and
  consider $k = l+1$. For the first equality we can show, by first
  using formula~\eqref{eq:14} in Proposition~\ref{prop:properties-linear-inv-v-transform} and then applying the inductive assumption, that
  \begin{align*}
      \vtrans\left(    \Rbackward_{l}(u;
        \bm{v}) ;\delta \right) &= \vtrans\left(
      h_{l,2}\left(
      \Rbackward_{l-1}(u; \bm{v}_{[1,k-1]})  ,
      \Rforward_{l-1}(v_k; \bm{v}_{[k-1:1]})
                           \right);\delta \right) \\
    &= 
      h_{l,2}^*\left(
      \vtrans\left(\Rbackward_{l-1}(u; \bm{v}_{[1,k-1]}) ;\delta \right),
      \vtrans \left(\Rforward_{l-1}(v_k; \bm{v}_{[k-1:1]});\delta \right)
      \right) \\
                         &=
                           h_{l,2}^*\left(\Rbackward^*_{l-1}(\vtrans(u;
                           \delta); \vtrans(\bm{v}_{[1,k-1]};\delta)) ,
    \Rforward^*_{l-1}(\vtrans(v_k;\delta); \vtrans(\bm{v}_{[k-1:1]};\delta)) \right)
     = \Rbackward^*_{l}(\vtrans(u; \delta);
    \vtrans(\bm{v}; \delta))                    
    \end{align*}
    and the same argument can be applied to the second equality.
  \end{proof}

  Equation~(\ref{eq:10}) describes a vt-D-vine copula
  process. In such a model the copula density can be evaluated by
  first applying the v-transform to the entire vector
  $(u_1,\ldots,u_n)$ and then passing the resulting values to the
  density of a d-vine copula based on the sequence $(C_k^*)_{k\in\N}$.
  \cite{bib:mcneil-20} shows that if we apply such a copula process
  to a series $(X_t)_{t\in\Z}$ we effectively use the copula sequence
  $(C_k^*)_{k\in\N}$ to model the serial
  dependence structure of the time series
  $(T(X_t))_{t\in\Z}$ where $T$ is a v-shaped transformation on
  $\R$.

  Note that the Theorem above is a little more general than is
  needed. The copula sequence in our application is derived from a single
  parametric copula family $C^*$ and we have $C_k^*(u,v) =
  C^*(u,v;\theta_k)$ for a sequence of parameters
  $(\theta_k)_{k\in\N}$. However, we could obviously choose the
  $C_k^*$ from different parametric families.

  \section{Detailed estimation results}\label{sec:detailed-results}

  Table~\ref{tab:GARCHfit-symmetric} contains the estimation results
  for various D-vine models fitted to 10000 simulated data from a GARCH(1,1)
  process with symmetric innovations from a t distribution with 4
  degrees of freedom. The most effective models in
  terms of AIC and BIC are the ast-AR(5) and ast-ARMA(1,1) models.

  \begin{table}[ht]
\centering
\begin{tabular}{rrrrrr}
  \toprule
  Model & D-vine order & No.\ pars & Loglik & AIC & BIC \\ \midrule
T1 &  1 &  2 & 442.8 & -881.6 & -867.2 \\ 
  B1 &  1 &  5 & 436.6 & -863.2 & -827.1 \\ 
  Joe-AR(1) &  1 &  3 & 440.6 & -875.1 & -853.5 \\ 
  Clayton180-AR(1) &  1 &  3 & 424.6 & -843.2 & -821.6 \\ 
  ast-AR(1) &  1 &  3 & 442.8 & -879.5 & -857.9 \\ \midrule
  T2 &  2 &  4 & 576.3 & -1144.6 & -1115.7 \\ 
  B2 &  2 & 10 & 568.6 & -1117.3 & -1045.2 \\ 
  Joe-AR(2) &  2 &  4 & 564.4 & -1120.7 & -1091.9 \\ 
  Clayton180-AR(2) &  2 &  4 & 552.1 & -1096.2 & -1067.4 \\ 
  ast-AR(2) &  2 &  4 & 573.1 & -1138.2 & -1109.4 \\ \midrule
  T5 &  5 & 10 & 695.6 & -1371.3 & -1299.2 \\ 
  B5 &  5 & 25 & 689.7 & -1329.4 & -1149.2 \\ 
  Joe-AR(5) &  5 &  7 & 681.9 & -1349.8 & -1299.3 \\ 
  Clayton180-AR(5) &  5 &  7 & 667.9 & -1321.9 & -1271.4 \\ 
  ast-AR(5) &  5 &  7 & 693.9 & -1373.7 & -1323.3 \\ \midrule
  Joe-ARMA(1,1) & 20 &  4 & 682.0 & -1356.0 & -1327.2 \\ 
  Clayton180-ARMA(1,1) & 20 &  4 & 669.5 & -1331.0 & -1302.2 \\ 
  ast-ARMA(1,1) & 20 &  4 & 694.0 & -1380.0 & -1351.1 \\ 
   \bottomrule
\end{tabular}
\caption{Comparison of AIC and BIC values when various D-vines are
    fitted to simulated data from
    a GARCH(1,1) process with symmetric t4 innovations.\label{tab:GARCHfit-symmetric}}
\end{table}

Table~\ref{tab:GARCHfit-asymmetric} contains corresponding results for 10000 data from a
GARCH(1,1) process with skewed t4 innovations while
Table~\ref{tab:GARCHfit-asymmetric-leverage} is for a model which has
both skewed t4 innovations and leverage. Recall that parameter values
may be found in Table~\ref{tab:parameters}.
In both these analyses the preferred models are ast-AR(5) and
ast-ARMA(1,1). The improvement of ast-ARMA(1,1) on ast-AR(5) is not
quite so large as in the case of the empirical data in
Table~\ref{table:empirical},
since the persistence of volatility in the simulated GARCH processes
is lower.

  \begin{table}[ht]
\centering
\begin{tabular}{rrrrrr}
  \toprule
  Model & D-vine order & No.\ pars & Loglik & AIC & BIC \\ 
  \midrule
T1 &  1 &  2 & 503.0 & -1002.1 & -987.7 \\ 
  B1 &  1 &  5 & 499.5 & -989.0 & -953.0 \\ 
  Joe-AR(1) &  1 &  3 & 500.8 & -995.6 & -974.0 \\ 
  Clayton180-AR(1) &  1 &  3 & 481.1 & -956.2 & -934.6 \\ 
  ast-AR(1) &  1 &  3 & 505.5 & -1005.1 & -983.4 \\ \midrule
  T2 &  2 &  4 & 649.2 & -1290.5 & -1261.7 \\ 
  B2 &  2 & 10 & 650.7 & -1281.4 & -1209.3 \\ 
  Joe-AR(2) &  2 &  4 & 648.5 & -1289.0 & -1260.1 \\ 
  Clayton180-AR(2) &  2 &  4 & 626.3 & -1244.6 & -1215.7 \\ 
  ast-AR(2) &  2 &  4 & 656.4 & -1304.9 & -1276.0 \\ \midrule
  T5 &  5 & 10 & 747.5 & -1475.0 & -1402.9 \\ 
  B5 &  5 & 25 & 756.4 & -1462.8 & -1282.6 \\ 
  Joe-AR(5) &  5 &  7 & 744.4 & -1474.7 & -1424.2 \\ 
  Clayton180-AR(5) &  5 &  7 & 723.0 & -1431.9 & -1381.4 \\ 
  ast-AR(5) &  5 &  7 & 758.4 & -1502.8 & -1452.4 \\ \midrule
  Joe-ARMA(1,1) & 20 &  4 & 744.5 & -1481.0 & -1452.2 \\ 
  Clayton180-ARMA(1,1) & 20 &  4 & 726.5 & -1445.0 & -1416.1 \\ 
  ast-ARMA(1,1) & 20 &  4 & 759.1 & -1510.3 & -1481.4 \\ 
   \bottomrule
\end{tabular}
\caption{Comparison of AIC and BIC values when various D-vines are
    fitted to simulated data from
    a GARCH(1,1) process with skew t4 innovations.\label{tab:GARCHfit-asymmetric}}
\end{table}

  \begin{table}[ht]
\centering
\begin{tabular}{lrrrrr}
  \toprule
Model & D-vine order & No.\ pars & Loglik & AIC & BIC \\ 
  \midrule
T1 &  1 &  2 & 569.2 & -1134.4 & -1120.0 \\ 
  B1 &  1 &  5 & 588.6 & -1167.3 & -1131.2 \\ 
  Joe-AR(1) &  1 &  3 & 590.2 & -1174.3 & -1152.7 \\ 
  Clayton180-AR(1) &  1 &  3 & 568.5 & -1130.9 & -1109.3 \\ 
  ast-AR(1) &  1 &  3 & 597.6 & -1189.1 & -1167.5 \\ \midrule
  T2 &  2 &  4 & 723.8 & -1439.5 & -1410.7 \\ 
  B2 &  2 & 10 & 760.5 & -1501.0 & -1428.8 \\ 
  Joe-AR(2) &  2 &  4 & 754.4 & -1500.8 & -1472.0 \\ 
  Clayton180-AR(2) &  2 &  4 & 725.6 & -1443.3 & -1414.4 \\ 
  ast-AR(2) &  2 &  4 & 763.6 & -1519.1 & -1490.3 \\ \midrule
  T5 &  5 & 10 & 813.8 & -1607.6 & -1535.5 \\ 
  B5 &  5 & 25 & 888.2 & -1726.5 & -1546.3 \\ 
  Joe-AR(5) &  5 &  7 & 868.1 & -1722.1 & -1671.6 \\ 
  Clayton180-AR(5) &  5 &  7 & 828.8 & -1643.6 & -1593.1 \\ 
  ast-AR(5) &  5 &  7 & 879.6 & -1745.1 & -1694.6 \\ \midrule
  Joe-ARMA(1,1 & 20 &  4 & 866.0 & -1724.0 & -1695.1 \\ 
  Clayton180-ARMA(1,1) & 20 &  4 & 831.0 & -1654.1 & -1625.2 \\ 
  ast-ARMA(1,1) & 20 &  4 & 880.0 & -1752.0 & -1723.2 \\ 
   \bottomrule
\end{tabular}
\caption{Comparison of AIC and BIC values when various D-vines are
    fitted to simulated data from
    a GARCH(1,1) process with skew t4 innovations and leverage.\label{tab:GARCHfit-asymmetric-leverage}}
\end{table}

\section{Analysis of VaR exceedances}\label{sec:analys-var-exce}

After model estimation, the Rosenblatt functions $R_k$ in~\eqref{eq:Rosenblatt1} can be used to construct
one-step-ahead conditional quantile or value-at-risk (VaR) forecasts for
all our fitted D-vine copula models. We follow \cite{bib:loaiza-maya-et-al-18} by using this
as the basis of a method for evaluating the quality of the fitted
models. Specifically, for $t=2,\ldots,n = 3668$ we
compare the realized values $u_t$ with VaR forecasts
$\operatorname{VaR}_t(\alpha)$ derived according to the formula
\begin{equation*}
  \operatorname{VaR}_t(\alpha) = R_{k(t)}^{-1}(\alpha ;
  u_{t-1},\ldots, u_{t - k(t)})
\end{equation*}
 for $\alpha \in \{0.01,0.05, 0.1, 0.9,
0.95, 0.99\}$. The number of conditioning variable in the forecast is
taken to be $k(t) = \min(t-1, p, 12)$ where $p$ is the Markov order of the
fitted model and 12 is a cap on the Markov order to ensure reasonably
fast run times. Note that, we do not
compare the realized raw data $x_t$ with estimates of the conditional
quantiles of the composite model for $(X_t)$, but rather we compare the data transformed to the copula scale $u_t =
\widehat{F}_X(x_t)$ with estimates of the conditional quantiles of the
D-vine process
$(U_t)$; this means that we evaluate the quality of the copula model
independently of the quality of the adaptive kernel density estimator
$\widehat{F}_X$ of~\cite{bib:loaiza-maya-et-al-18} which is used for
the marginal distribution $F_X$.

The hit rates
$\hat{\alpha} = \tfrac{1}{n-1}\sum_{t=2}^n \indicator{u_t <
  \operatorname{VaR}_t(\alpha)}$ for our favoured models (Joe-AR(5), ast-AR(5), Clayton180-ARMA(1,1) and
  ast-ARMA(1,1)) are shown in
Table~\ref{tab:var_results} and are very similar to the rates reported
by~\cite{bib:loaiza-maya-et-al-18}. In all cases the conditional
coverage test of~\cite{bib:christoffersen-98} leads to an insignificant
test result with confidence level 99\%.

\begin{table}[ht]
  \centering
  \begin{tabular}{lrrrrrr} \toprule
    & \multicolumn{6}{l}{Quantile level $\alpha$ (percent)} \\
    Model   &1 & 5&10&90&95& 99\\ \midrule
  Joe-AR(5) & 0.98 & 4.85 & 9.54 & 90.35 & 95.07 & 99.13 \\ 
  ast-AR(5) & 0.95 & 5.04 & 9.60 & 90.08 & 94.98 & 99.18 \\ 
  ast-ARMA(1,1) & 1.06 & 4.83 & 9.87 & 90.21 & 94.98 & 99.05 \\ 
  Clayton180-ARMA(1,1) & 1.04 & 4.88 & 10.33 & 90.19 & 94.77 & 99.15 \\ \bottomrule
  \end{tabular}
  \caption{Empirical hit rates (expressed as percentages) for
    1-day-ahead conditional $\alpha$-quantile (VaR)
    estimates in analysis of USD/AUD
    exchange-rate returns. For details of models see
    Section~\ref{sec:examples} and Table~\ref{table:empirical}.}
  \label{tab:var_results}
\end{table}

\section{Copula of an AR(1) model with ARCH(1) errors}\label{sec:ar1-model-with}

In this section we consider a time
series model with stochastic volatility as well as a non-zero
stochastic conditional mean term. Specifically we consider an AR(1)
model with ARCH(1) errors which is a
  first-order Markov
  process $(X_t)_{t\in\Z}$ 
satisfying equations
\begin{equation}\label{eq:9}
  X_t = \phi X_{t-1} +\volfunc(X_{t-1}) \;
  \epsilon_t\quad\text{with}\quad \volfunc(x) = \sqrt{\alpha_0 + \alpha_1 x^2},
\end{equation}
where the innovations $(\epsilon_t)_{t\in\Z}$ have a symmetric
distribution. This model was proposed and
analysed by~\cite{bib:borkovec-klueppelberg-01} who showed that it is covariance stationary
provided $\phi^2 + \alpha_1 < 1$ and has a symmetric
stationary distribution $F_X$. Note that it differs
  from the usual  version of the AR(1)-ARCH(1) model, $X_t = \phi
  X_{t-1} +\volfunc(X_{t-1}-\phi X_{t-2})\; \epsilon_t$,
  but this is not first-order Markov and thus less straightforward to treat
  with our methods.

The formulas~(\ref{eq:22}) require only simple modification to
account for the non-zero conditional mean term. In particular, the copula becomes
\begin{equation*}
   c_1(u,v) = \frac{1}{\volfunc\left(F_X^{-1}(u)\right)}f_\epsilon \left(
     \frac{F_X^{-1}(v) - \phi F_X^{-1}(u)}{\volfunc\left(F_X^{-1}(u)\right)} \right)
   \frac{1}{f_X(F_X^{-1}(v))} \,.
 \end{equation*}
It may be easily verified that the copula density satisfies $c(u,v) =
c(1-u,1-v)$ and so the copula is radially symmetric. However, the
copula is neither h-symmetric nor v-symmetric, and thus it is not
jointly symmetric.
The copula density is shown in the bottom right panel of Figure~\ref{fig:OS} in the
case where $\phi = 0.3$, $\alpha_0 = 0.4$ and $\alpha_1 = 0.6$. It is
evident that more of the probability mass is now on the main diagonal
of the copula. Behaviour of this kind could perhaps be modelled by a
mixture of a Gauss copula (or Frank copula) and one of the inverse-v-transformed copulas
examined earlier.

\bibliographystyle{apalike}
\setcitestyle{authoryear,open={(},close={)}}


\begin{thebibliography}{}

\bibitem[Aas et~al., 2009]{bib:aas-czado-frigessi-bakken-09}
Aas, K., Czado, C., Frigessi, A., and Bakken, H. (2009).
\newblock Pair-copula constructions of multiple dependence.
\newblock {\em Insurance: Mathematics and Economics}, 44(2):182--198.

\bibitem[Beare, 2010]{bib:beare-10}
Beare, B. (2010).
\newblock Copulas and temporal dependence.
\newblock {\em Econometrica}, 78(395--410).

\bibitem[Beare and Seo, 2015]{bib:beare-seo-15}
Beare, B. and Seo, J. (2015).
\newblock Vine copula specifications for stationary multivariate {Markov}
  chains.
\newblock {\em Journal of Time Series Analysis}, 36:228--246.

\bibitem[Bedford and Cooke, 2002]{bib:bedford-cooke-02}
Bedford, T. and Cooke, R.~M. (2002).
\newblock Vines--a new graphical model for dependent random variables.
\newblock {\em Annals of Statistics}, 30(4):1031--1068.

\bibitem[Bladt et~al., 2025]{bib:bladt-dias-han-mcneil-25}
Bladt, M., Dias, A., Han, J., and McNeil, A. (2025).
\newblock Semiparametric and parametric distributional forecasting using
  {non-Gaussian ARMA} models based on d-vines.
\newblock Working paper.

\bibitem[Bladt and McNeil, 2022a]{bib:bladt-mcneil-22}
Bladt, M. and McNeil, A. (2022a).
\newblock Time series with infinite-order partial copula dependence.
\newblock {\em Dependence Modeling}, 10:87--107.

\bibitem[Bladt and McNeil, 2022b]{bib:bladt-mcneil-21}
Bladt, M. and McNeil, A.~J. (2022b).
\newblock Time series copula models using d-vines and v-transforms.
\newblock {\em Econometrics and Statistics}, 24:27--48.

\bibitem[Bollerslev, 1986]{bib:bollerslev-86}
Bollerslev, T. (1986).
\newblock Generalized autoregressive conditional heteroskedasticity.
\newblock {\em Journal of Econometrics}, 31:307--327.

\bibitem[Borkovec and Kl\"uppelberg, 2001]{bib:borkovec-klueppelberg-01}
Borkovec, M. and Kl\"uppelberg, C. (2001).
\newblock The tail of the stationary distribution of an autoregressive process
  with {ARCH(1)} errors.
\newblock {\em The Annals of Applied Probability}, 11(4):1220--1241.

\bibitem[Bougerol and Picard, 1992]{bib:bougerol-picard-92}
Bougerol, P. and Picard, N. (1992).
\newblock Stationarity of {GARCH} processes and of some nonnegative time
  series.
\newblock {\em Journal of Econometrics}, 52:115--127.

\bibitem[Brechmann and Czado, 2015]{bib:brechmann-christian-czado-15}
Brechmann, E.~C. and Czado, C. (2015).
\newblock Copar - multivariate time series modeling using the copula
  autoregressive model.
\newblock {\em Applied Stochastic Models in Business and Industry},
  31(4):495--514.

\bibitem[Chen and Fan, 2006]{bib:chen-fan-06}
Chen, X. and Fan, Y. (2006).
\newblock Estimation of copula-based semi-parametric time series models.
\newblock {\em Journal of Econometrics}, 130(2):307--335.

\bibitem[Christoffersen, 1998]{bib:christoffersen-98}
Christoffersen, P. (1998).
\newblock Evaluating interval forecasts.
\newblock {\em International Economic Review}, 39(4).

\bibitem[Darsow et~al., 1992]{bib:darsow-nguyen-olsen-92}
Darsow, W., Nguyen, B., and Olsen, E. (1992).
\newblock Copulas and {Markov} processes.
\newblock {\em Illinois Journal of Mathematics}, 36(4):600--642.

\bibitem[Ding et~al., 1993]{bib:ding-engle-granger-93}
Ding, Z., Granger, C.~W., and Engle, R.~F. (1993).
\newblock A long memory property of stock market returns and a new model.
\newblock {\em Journal of Empirical Finance}, 1:83--106.

\bibitem[Domma et~al., 2009]{bib:domma-giordano-perri-09}
Domma, F., Giordano, S., and Perri, P.~F. (2009).
\newblock Statistical modeling of temporal dependence in financial data via a
  copula function.
\newblock {\em Communications if Statistics: Simulation and Computation},
  38(4):703--728.

\bibitem[Engle, 1982]{bib:engle-82}
Engle, R.~F. (1982).
\newblock Autoregressive conditional heteroskedasticity with estimates of the
  variance of {United Kingdom} inflation.
\newblock {\em Econometrica. Journal of the Econometric Society}, 50:987--1008.

\bibitem[Francq and Zako\"{i}an, 2006]{bib:francq-zakoian-06}
Francq, C. and Zako\"{i}an, J.-M. (2006).
\newblock Mixing properties of a general class of {GARCH(1,1)} models without
  moment assumptions on the observed process.
\newblock {\em Econometric Theory}, 22(5):815--834.

\bibitem[Glosten et~al., 1993]{bib:glosten-jagannathan-runkle-93}
Glosten, L.~R., Jagannathan, R., and Runkle, D.~E. (1993).
\newblock On the relation between the expected value and the volatility of the
  nominal excess return on stocks.
\newblock {\em The Journal of Finance}, 48(5):1779--1801.

\bibitem[Haff et~al., 2010]{bib:haff-aas-frigessi-10}
Haff, H., Aas, K., and Frigessi, A. (2010).
\newblock On the simplified pair copula construction - simply useful or too
  simplistic?
\newblock {\em Journal of Multivariate Analysis}, 101:1296--1310.

\bibitem[Ibragimov, 2009]{bib:ibragimov-09}
Ibragimov, R. (2009).
\newblock Copula-based characterizations for higher-order {Markov} processes.
\newblock {\em Econometric Theory}, 25(819--846).

\bibitem[Joe, 1996]{bib:joe-96}
Joe, H. (1996).
\newblock Families of $m$-variate distributions with given margins and
  $m(m-1)/2$ bivariate dependence parameters.
\newblock In R\"uschendorf, L., Schweizer, B., and Taylor, M.~D., editors, {\em
  Distributions with fixed marginals and related topics}, volume~28 of {\em
  Lecture Notes--Monograph Series}, pages 120--141. Institute of Mathematical
  Statistics, Hayward, CA.

\bibitem[Junker and May, 2005]{bib:junker-may-05}
Junker, M. and May, A. (2005).
\newblock Measurement of aggregate risk with copulas.
\newblock {\em The Econometrics Journal}, 8(3):428--454.

\bibitem[Khoudraji, 1995]{bib:khoudraji-95}
Khoudraji, A. (1995).
\newblock {\em {Contribution \`{a} l'\'{E}tude des {C}opules et \`{a} la
  {M}od\'{e}lisation des {C}opules de {V}aleurs {E}xtr\`{e}mes
  {B}ivari\'{e}es}}.
\newblock PhD thesis, Universit\'{e} Laval, Qu\'{e}bec.

\bibitem[Kurowicka and Cooke, 2006]{bib:kurowicka-cooke-06}
Kurowicka, D. and Cooke, R. (2006).
\newblock {\em Uncertainty Analysis with High Dimensional Dependence
  Modelling}.
\newblock Wiley, Chichester.

\bibitem[Loaiza-Maya et~al., 2018]{bib:loaiza-maya-et-al-18}
Loaiza-Maya, R., Smith, M., and Maneesoonthorn, W. (2018).
\newblock Time series copulas for heteroskedastic data.
\newblock {\em Journal of Applied Econometrics}, 33:332--354.

\bibitem[McNeil, 2021]{bib:mcneil-20}
McNeil, A. (2021).
\newblock Modelling volatility with v-transforms and copulas.
\newblock {\em Risks}, 9(1):14.

\bibitem[McNeil et~al., 2022]{bib:mcneil-neslehova-smith-22}
McNeil, A., Ne\v{s}lehov\'{a}, J., and Smith, A. (2022).
\newblock On attainability of {Kendall's} tau matrices and concordance
  signatures.
\newblock {\em Journal of Multivariate Analysis}, 191:105033.

\bibitem[McNeil et~al., 2015]{bib:mcneil-frey-embrechts-15}
McNeil, A.~J., Frey, R., and Embrechts, P. (2015).
\newblock {\em Quantitative Risk Management: Concepts, Techniques and Tools}.
\newblock Princeton University Press, Princeton, 2nd edition.

\bibitem[Mikosch and St\u{a}ric\u{a}, 2000]{bib:mikosch-starica-00}
Mikosch, T. and St\u{a}ric\u{a}, C. (2000).
\newblock Limit theory for the sample autocorrelations and extremes of a
  {GARCH(1,1)} process.
\newblock {\em The Annals of Statistics}, 28:1427--1451.

\bibitem[Mroz et~al., 2021]{bib:mroz-fuchs-trutschnig-21}
Mroz, T., Fuchs, S., and Trutschnig, W. (2021).
\newblock How simplifying and flexible is the simplifying assumption in pair
  copula constructions - analytic answers in dimension three and a glimpse
  beyond.
\newblock {\em Electronic Journal of Statistics}, 15(1):1951--1992.

\bibitem[Nagler et~al., 2022]{bib:nagler-kruger-min-20}
Nagler, T., Kr\"uger, D., and Min, A. (2022).
\newblock Stationary vine copula models for multivariate time series.
\newblock {\em Journal of Econometrics}, 227(2):305--324.

\bibitem[Nelsen, 2006]{bib:nelsen-99}
Nelsen, R.~B. (2006).
\newblock {\em An Introduction to Copulas}.
\newblock Springer, New York, 2nd edition.

\bibitem[Nelson, 1990]{bib:nelson-90a}
Nelson, D.~B. (1990).
\newblock Stationarity and persistence in the {GARCH(1,1)} model.
\newblock {\em Econometric Theory}, 6:318--334.

\bibitem[Nelson, 1991]{bib:nelson-91}
Nelson, D.~B. (1991).
\newblock Conditional heteroskedasticity in asset returns: {A}~new approach.
\newblock {\em Econometrica. Journal of the Econometric Society}, 59:347--370.

\bibitem[Oh and Patton, 2016]{bib:oh-patton-16}
Oh, D. and Patton, A. (2016).
\newblock High-dimensional copula-based distributions with mixed frequency
  data.
\newblock {\em Journal of Econometrics}, 193:349--366.

\bibitem[Porubsk\'{y} et~al., 1988]{bib:porubsky-salat-strauch-88}
Porubsk\'{y}, u., \u{S}al\'{a}t, T., and Strauch, O. (1988).
\newblock Transformations that preserve uniform distribution.
\newblock {\em Acta Arithmetica}, XLIX:459--479.

\bibitem[Smith and Maneesoonthorn, 2018]{bib:smith-maneesoonthorn-18}
Smith, M. and Maneesoonthorn, W. (2018).
\newblock Inversion copulas from nonlinear state space models with an
  application to inflation forecasting.
\newblock {\em International Journal of Forecasting}, 34(3):389--407.

\bibitem[Smith et~al., 2010]{bib:smith-min-almeida-czado-10}
Smith, M., Min, A., Almeida, C., and Czado, C. (2010).
\newblock {Modeling Longitudinal Data Using a Pair-Copula Decomposition of
  Serial Dependence}.
\newblock {\em Journal of the American Statistical Association},
  105(492):1467--1479.

\bibitem[Spanhel and Kurz, 2019]{bib:spanhel-kurz-19}
Spanhel, F. and Kurz, M. (2019).
\newblock Simplified vine copula models: approximations based on the
  simplifying assumption.
\newblock {\em Electronic Journal of Statistics}, 13(1):1254--1291.

\bibitem[St\"ober et~al., 2013]{bib:stoeber-joe-czado-13}
St\"ober, J., Joe, H., and Czado, C. (2013).
\newblock Simplified pair copula extensions---limitations and extensions.
\newblock {\em Journal of Multivariate Analysis}, 119:101--118.

\bibitem[Zhao et~al., 2022]{bib:zhao-shi-zhang-22}
Zhao, Z., Shi, P., and Zhang, Z. (2022).
\newblock Modeling multivariate time series with copula linked univariate
  {D-vines}.
\newblock {\em Journal of Business and Economic Statistics}, 40(2):690--704.

\end{thebibliography}
\newcommand{\noopsort}[1]{}

\end{document}